\documentclass[aps,pra,showpacs,reprint,footinbib,longbibliography,superscriptaddress]{revtex4-1}

% Package etex prevents "No room for a new \dimen" error when using many packages: http://tex.stackexchange.com/questions/38607/no-room-for-a-new-dimen
\usepackage{etex}

%Enforce pdflatex compilation
\pdfoutput=1

% Load English and Greek (for upright pi symbols)
\usepackage[greek,english]{babel}

\hyphenation{ob-serv-able Pau-li Pau-lis in-jec-tion}

% AMS styles

\usepackage{amsmath,amsfonts,amssymb,amsthm,graphics,enumerate}
\usepackage{mathtools}

\usepackage[all]{xy}
\usepackage[dvipsnames]{xcolor}

% For setstrech command
%\usepackage{setspace}

%Pro­vides hy­phen­at­able spac­ing out (let­terspac­ing), un­der­lin­ing, strik­ing ou (\st)t, etc., us­ing the TeX hy­phen­ation al­go­rithm to find the proper hy­phens au­to­mat­i­cally. The pack­age also pro­vides a mech­a­nism that can be used to im­ple­ment sim­i­lar tasks, that have to treat text syl­la­ble by syl­la­ble. This is shown in two ex­am­ples. 
\usepackage{soul}

%Allow url to break: 
\PassOptionsToPackage{hyphens}{url}

\usepackage[colorlinks=true,
            citecolor=blue,
            linkcolor=blue,
            urlcolor=black]
           {hyperref}

% Itemize* environments
\usepackage{mdwlist}

% Use of \leftmargin with negative values
% http://tex.stackexchange.com/questions/78167/indentation-within-an-itemized-list
\usepackage{enumitem}

% Bibliography style of nature magazine

%\providecommand{\openone}{\leavevmode\hbox{\small1\kern-3.8pt\normalsize1}}

%%% ====== Thm-Umgebungen =====
\newtheorem{theorem}{Theorem}

\newtheorem{conjecture}{Conjecture}
\newtheorem{protocol}{Protocol}

\newtheorem{lemma}[theorem]{Lemma}

\newtheorem{definition}[theorem]{Definition}

% Theorems definition style
\theoremstyle{definition}

% Mac Os Tilde
\newcommand*{\mot}{~}

%% colors
%\usepackage[dvipsnames]{xcolor}
\newcommand*{\juan}{\color{black}}
\newcommand*{\je}{\textcolor{black}}
\newcommand*{\dom}{\textcolor{black}}

\newcommand*{\black}{\color{black}}

%\newcommand{\ii}{{\rm i}}

% Blackboard bold letters

\newcommand{\bra}[1]{\langle #1|}
\newcommand{\ket}[1]{|#1\rangle}

\newcommand{\tr}{\mathrm{tr}}

\newcommand{\norm}[1]{\left\Vert #1 \right\Vert}
%\def\one{\mathbbm{1}}

%%% project-specific commands 

 %% time-average
 %% Gibbs state
 %% gge state

% ISO mathematical constants
\newcommand{\uppi}{\text{\greektext{p}\latintext}}

%Upright pi constant
\newcommand{\euler}{\textnormal{{e}}} % Euler's number
\newcommand{\imun}{\textnormal{{i}}}  % Imaginary unit
 % Pi times the imaginary unit
\newcommand{\id}{I}

% poly, polylog of a variable (without parenthesis)

% poly, polylog of a quantity (with parenthesis)
\newcommand\ppoly[1]{\textnormal{poly$\left(#1\right)$}}

%%% ==== Affiliations ===== 

\usepackage{times}

\allowdisplaybreaks

\newcommand{\fu}{Dahlem Center for Complex Quantum Systems, Freie Universit{\"a}t Berlin, 14195 Berlin, Germany}
\newcommand{\ubc}{University of British Columbia, Department of Physics and Astronomy,
Vancouver, BC, V6T 1Z1, Canada}

\begin{document}

\title{Architectures for quantum simulation showing a quantum speedup}
%\\\normalsize  \textnormal{\href{https://arxiv.org/abs/1703.00466}{arXiv:1703.00466}}}

\author{Juan Bermejo-Vega}
\address{\fu}

\author{Dominik Hangleiter}
\address{\fu}

\author{Martin Schwarz} 
\address{\fu}

\author{Robert Raussendorf}
\address{\ubc} 

\author{Jens Eisert}
\address{\fu}

\begin{abstract}
One of the main aims in the field of quantum simulation is to achieve a quantum speedup, often referred to as ``quantum computational supremacy'', referring to the experimental realization of a quantum device that computationally outperforms classical computers. In this work, we show that one can devise versatile and feasible schemes of two-dimensional dynamical quantum simulators showing such a quantum speedup, building on intermediate problems involving non-adaptive measurement-based quantum computation. In each of the schemes, an initial product state is prepared, potentially involving an element of randomness as in disordered models, followed by a short-time evolution under a basic translationally invariant Hamiltonian with simple nearest-neighbor interactions and a mere sampling measurement in a fixed basis. The correctness of the final state preparation in each scheme is fully efficiently certifiable. We discuss experimental necessities and possible physical architectures, inspired by platforms of cold atoms in optical lattices and a number of others, as well as specific assumptions that enter the complexity-theoretic arguments. This work shows that benchmark settings exhibiting a quantum speedup may require little control in contrast to universal quantum computing. Thus, our proposal puts a convincing experimental demonstration of a quantum speedup  within reach in the near term.
\end{abstract}

\maketitle
\vspace{-15pt}
\vspace{-5pt}
\section{Introduction}

Quantum devices promise to solve computational problems efficiently for which no classical efficient algorithm exists.  The anticipated device of a universal quantum computer would solve problems for which no  efficient classical algorithm is known, such as integer factorization~\cite{Shor-1994} and simulating many-body Hamiltonian dynamics \cite{Lloyd-1996}. However, the experimental realization of such a machine requires fault-tolerant protection of universal dynamics against arbitrary errors \cite{Shor96_FTQC,KRZ96ThresholdAccuracy,Aharonov08FTQC_ConstantError}, which
incurs in  prohibitive  qubit overhead \cite{Fowler12SurfaceCodesPracticalQC,Lekitsche16BlueprintMicrowaveIonQV} beyond reach in available quantum devices. This does not mean, however, that the demonstration of a computational quantum advantage is unfeasible with current technology. 

Indeed, in recent years, it has become a major milestone  in quantum information processing to 
identify and build a simple (perhaps  non-universal) quantum  device that offers a large (exponential or superpolynomial) computational speedup compared to classical supercomputers, no matter what.  The demonstration of such  an advantage based on solid complexity-theoretic arguments  is often referred to as 
``quantum computational supremacy'' \cite{preskill2013quantum}.
This important near-term goal still constitutes a significant challenge, as technological advances seem to be required to achieve it, as well as  significant efforts in theoretical computer science, physics, and the numerical study of quantum many-body systems: after all, intermediate problems have to be identified with the potential to act as vehicles in the demonstration of a quantum advantage, in the presence of realistic errors.

There already is evidence that existing dynamical quantum simulators \cite{BlochSimulation,1408.5148} 
have the ability to outperform classical supercomputers. Specifically the 
experiments of Refs.\ \cite{Trotzky,MBL2D,Emergence,PhysRevX.7.041047} 
using ultracold atoms strongly suggest such a feature: They probe situations in which 
for short times \cite{Trotzky} or in one
spatial dimension \cite{Emergence,MBL2D}, the system can be classically simulated
in a perfectly efficient fashion using tensor network methods, even equipped with rigorous error bounds. However, 
for long times \cite{Trotzky} or in higher spatial dimensions \cite{Emergence,MBL2D}
such a classical simulation is no longer feasible with state-of-the-art simulation tools. 
Still, taking the role of devil's advocate, one may argue that this could be a consequence of a lack of imagination,
as there could---in principle---be a simple classical description capturing the observed phenomena.
Hence, a complexity-theoretic demonstration of a quantum advantage of quantum simulators
outperforming classical machines is highly desirable \cite{Roadmap}. Not all physically meaningful quantum simulations are to be be underpinned by such an argument, but it goes without saying that the field of quantum simulation 
would be seriously challenged if such a rigorous demonstration was out of reach.

Several settings for achieving a quantum speedup have been proposed\mot{}\cite{AaronsonArkhipov13LinearOptics4,BremnerPRL117.080501,Boixo161608.00263,Gao17SupremacyIsing,SparseNoisySupremacy} based on quantum processes that are classically hard to simulate probabilistically 
unless the Polynomial Hierarchy (\textsf{PH}) collapses. These processes 
remain hard to be simulated up to realistic
(additive) errors assuming further plausible complexity-theoretic conjectures. The proof techniques used build upon
earlier proposals giving rise to such a collapse \cite{terhal2004adaptive,BremnerJozsaShepherd08}.
However, at the same time they still come along with substantial experimental challenges.

This work constitutes a significant step towards identifying physically realistic
settings that show a quantum speedup  by laying out a
versatile and feasible family of architectures based on quenched local many-body dynamics. 
We remain close to what one commonly conceives as a dynamical quantum simulator 
\cite{BlochSimulation,CiracZollerSimulation,1408.5148,Trotzky},
\je{set up to probe exciting physics of interacting quantum systems.}
Indeed,
it is our aim is to remain as close as possible to experimentally-accessible or at
least realistic prescriptions, closely reminiscient of dynamical quantum simulators while at the same not compromising the rigorous
complexity-theoretic argument. 

Our specific contributions are as follows.
We focus on schemes in which random initial states are prepared on the 2D
square lattices of suitable periodicity,
followed by quenched, constant-time dynamics under a local
\emph{nearest-neighbor} (NN), \emph{translation-invariant} (TI)  Hamiltonian.
\je{These are prescriptions that are close to those that can be routinely
implemented with cold atoms in optical lattices}
\cite{BlochSimulation,Endres-apb-2013,ControlledCollisionsExperiment,Greiner}.
 Since evolution time is short, decoherence will be comparably small. 
In a last step, all qubits are measured in a fixed identical basis, producing
an outcome distribution that is hard to classically sample from within constant
$\ell_1$-norm error, requiring no postselection. Technically, our results implement sampling over new families of NNTI two-local
constant-depth \cite{terhal2004adaptive} IQP circuits
\cite{ShepherdBremner09_Temporally_Unstructured_QC,BremnerJozsaShepherd08,BremnerPRL117.080501}.
We build upon and develop a type of setting \cite{Hoban14PhysRevLett.112.140505} in which 
resource-states for measurement-based quantum computation are prepared
(MBQC)~\cite{RaussendorfPhysRevLett.86.5188}, but subsequently non-adaptively measured. 
We lay out the complexity-theoretic assumptions made, detail how they are
analog to those in Refs.~\cite{AaronsonArkhipov13LinearOptics4,BremnerPRL117.080501,Boixo161608.00263,Gao17SupremacyIsing,SparseNoisySupremacy}, and present results on anti-concentration.

By doing so, we arrive at surprisingly flexible and simple NNTI  quantum simulation schemes on square lattices, requiring different kinds of translational invariance in the preparation.  
Interestingly, and possibly counterintuitively, our schemes share the feature that the final state before the readout step can be efficiently
and rigorously certified in its correctness. This is further achieved via simple protocols that involve on-site measurements and a \dom{number of samples of the resource state that scales quadratically in the system size.} \je{The possibility of certification is unique
to our approach.} \dom{In fact, from the quadratically many samples of the prepared state, one can directly and rigorously 
infer about the very quantity that is used in the complexity-theoretic argument.} \je{We believe that the possibility
of such certification is crucial when it comes to unambiguously argue that a quantum device has the potential to show a true quantum speedup.}

Based on our analysis, we predict that short-time certifiable quantum-simulation experiments on as little as $50 \times 50$ qubit square lattices should be intractable for state-of-the-art classical computers \cite{Boixo161608.00263,Bravyi16ImprovedSimulationClifford+T}. It is important to stress that this assessment includes the rigorous certification part, and no hidden or unknown costs have to be added to this. Our proposed experiments are particularly suited to qubits arranged in two-dimensional lattices
\je{e.g., cold atoms in optical lattices \cite{BlochSimulation,Endres-apb-2013,ControlledCollisionsExperiment,Greiner}
with qubits encoded in hyperfine levels of atoms. When assessing feasible quantum devices,
it is crucial to emphasize that systems sizes of the kind discussed here are not larger but generally smaller
than what is feasible in 
present-day architectures \cite{BlochSimulation,Trotzky,Greiner,Emergence}.}

% and arrays of 
%superconducting qubits \cite{Superconducting}. 

\section{Basic setup of the quantum simulation schemes}\label{setting}

We present a new family of simple physical architectures that cannot be efficiently simulated by classical computers with high evidence (cf.\ Theorem \ref{thm:Main} below). All share the basic feature that they are based on the constant-time evolution (quench) of an NNTI Hamiltonian  on a square lattice. 
Each architecture involves three  steps: 
\begin{enumerate}[label=\textnormal{Q\arabic*},leftmargin=17pt]
\item \label{Exp:Preparation} \textbf{Preparations.} 
Arrange $N:=\mu mn$ qubits side-by-side on an $n$-row $m$-column square lattice $\mathcal{L}$, with vertices $V$, edges $E$, initialized on a  product state 
\begin{equation}\label{eq:InitialState}
\ket{\psi_\beta}=\bigotimes_{i=1}^N \left( \ket{0}+\euler^{\imun\beta_i} \ket{1} \right),\: \beta\in \left\{0, \theta \right\}^N,
\end{equation}
for fixed 
$\theta\in\{\frac{\uppi}{4},\frac{\uppi}{8}\}$,
 which is chosen uniformly or randomly with probability $p_\beta$ (e.g., as a ground state of a disordered model). We consider standard  square primitive cells. We allow in one scheme each vertex to be equipped
with an additional qubit, named ``dangling bond qubit''. For this, $\mu{=}2$, otherwise $\mu{=}1$.
\item \label{Exp:Evolution} \textbf{Couplings.} Let the system evolve for  constant time $\tau=1$ under the effect of an NNTI Ising Hamiltonian 
\begin{align}\label{eq:HardIsingModels}
H:= \sum_{(i,j)\in E}J_{i,j} Z_iZ_j - \sum_{i\in V} h_i Z_i.
\end{align} 
This amounts to what is usually referred to as a \emph{quench}. Local fields $\{h_i\}_{i}$ and couplings $\{J_{i,j}\}_{i,j}$ are set to implement a unitary $U:=\euler^{\imun H}$, giving rise to a final ensemble $\Psi:=\{p_\beta, \ket{\Psi_\beta}\}_\beta$, $\ket{\Psi_\beta}:=U\ket{\psi_\beta}$. 
\item \label{Exp:Measurement} \textbf{Measurement.} Measure  primitive-cell qubits on the $X$ basis and (if present) dangling-bond qubits on the $Z$ basis. 
Since the latter can be traded for a measurement in the $X$ basis by a uniform basis rotation, one can equally well  measure all qubits in the same basis.
\end{enumerate}
\begin{figure}
    \includegraphics[width=\linewidth]{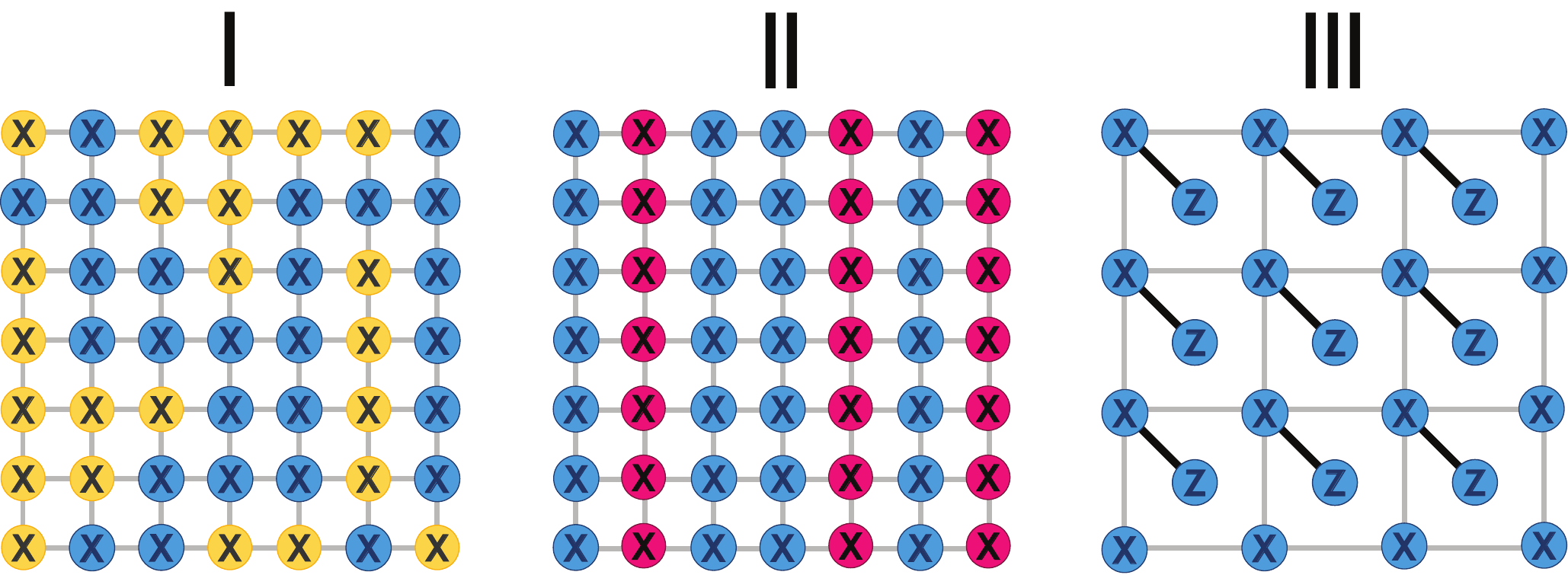}
\caption{Architectures \ref{enum:Arc1}-\ref{enum:Arc3}.
Colors illustrate the rotation angle of the initial state \eqref{eq:InitialState}: $\beta_i = 0$    (blue), $\beta_i = \pi/4$ (yellow), and $\beta_i = \pi/8$ (crimson). Solid lines between qubits represent Ising-type interactions \eqref{eq:HardIsingModels} with coupling constants $J_{i,j} = \pi/4$ (gray) and $J_{i,j} = \pi/8$ (black).   $X$  and $Z$ label the basis in which the respective qubits are to be measured.}\label{fig:Experiment}
\end{figure}
As we will discuss later, all individual steps have been realized with present technology. Note also that \ref{Exp:Evolution} amounts to a constant 
depth quantum circuit \cite{CommutingComment}.

\subsection{Physical desiderata and concrete schemes}

\begin{table*}
  \centering
   \begin{tabular}{| c|c | c | c |c| }
    \hline
    Scheme & Geometry  & Preparations & Couplings  & Measurements \\
    \hline
    \ref{enum:Arc1} & Square lattice  & ${\rm DO}$  & \{${\frac{\uppi}{4}}$\} $_{(1,1)}$ & $\{X\}_{(1,1)}$ \\
     \ref{enum:Arc2} & Square lattice  & TI$_{(1,\infty)}$ &  \{${\frac{\uppi}{4}}$\}$_{(1,1)}$ & $\{X\}_{(1,1)}$ \\
    \ref{enum:Arc3}  & Dangling-bond  square lattice  & TI$_{(1,1)}$ & \{${\frac{\uppi}{4},\frac{\uppi}{16}}$\}$_{(\sqrt{2},\sqrt{2})}$ & 
     $\{X, Z\}_{(\sqrt{2},\sqrt{2})}$ \\\hline

     \multicolumn{5}{|c|}{Previous work}\\
     \hline
     Ref.\ \cite{Gao17SupremacyIsing} & 7-fold brickwork graph   & TI$_{(1,1)}$ & \{${\frac{\uppi}{4}}$\}$_{(56,2)}$ & $\{X,X_{\pm\frac{\uppi}{4}},X_{\pm\frac{\uppi}{8} } \}_{(7,1)}$   \\
     \hline
   \end{tabular}
 \caption{Resource requirements of our hard-to-simulate quantum architectures. Architectures which employ simpler (more ordered) initial states require lattices of higher periodicity and finer controlled-rotations. The degree of symmetry of the preparation, evolution and measurement steps is quantified by the 2D periods indicated by vector subscripts $(a,b)$. We compare our results to the simplest previously known quantum simulation architecture on a planar graph showing a quantum speedup. The underlying complexity-theoretic assumptions needed for theses speedups are compared in section \ref{main result}. Above, $X_{\theta}:=\euler^{-\imun\frac{\theta}{2} Z}X\euler^{\imun\frac{\theta}{2} Z}$.  \label{tab:resources}}
 \end{table*}

Before  presenting  concrete schemes, we lay out desiderata that we associate with feasible ones. We require the implementation of each  step
 \ref{Exp:Preparation}-\ref{Exp:Measurement} to be as simple as possible. For preparations, couplings, 
 and measurements, we desire the periodicity as measured by the 2D periods $(k_x,k_y)$ in the $xy$ axes of the TI symmetry to be small.
Couplings should further be simple and not to scale with the system size. Last, we want the final measurement to be  translationally invariant.

We now present three concrete quantum architectures of the form \ref{Exp:Preparation}-\ref{Exp:Measurement}  that live up to the above desiderata. We label them\mot{}\ref{enum:Arc1}-\ref{enum:Arc3} and illustrate them in Fig.\mot{}\ref{fig:Experiment}:
\begin{enumerate}[label=\textnormal{\Roman*},leftmargin=17pt]
    \item\label{enum:Arc1} A disordered (DO) product state is prepared on a squared
    lattice, \black{} followed by a quench with an Ising Hamiltonian with
    couplings $J_{i,j}=h_i=\uppi/4$ \black{}---which implements controlled-$Z$
        ($CZ$) gates on edges---and a final measurement in the $X$ basis.
    \item \label{enum:Arc2} The initial state is TI  with period 1 in one lattice direction and uniformly random in the other (TI$_{(1,\infty)}$); couplings and measurements are picked as in \ref{enum:Arc1}. 
\item \label{enum:Arc3} Qubits are prepared on a dangling-bond square lattice. The initial state is TI with period 1 in all directions (TI$_{(1,1)}$). We pick $J_{i,j},h_i$ as in \ref{enum:Arc1}-\ref{enum:Arc2} on bright edges and  $J_{i,j}=h_i= \uppi/16$ on dark ones---the latter implement controlled-$\euler^{-\imun\uppi/8 Z}$\black{} ($CT$) gates on dangling bonds. Measurements are in the $Z$ basis for dangling qubits, elsewhere in the $X$ basis.
\end{enumerate}
(Cf.\ Appendix \ref{A} for full Hamiltonian  specifications.) 
The resources needed in each architecture are summarized in Table~\ref{tab:resources}. It is worth noting that, in all architectures \ref{enum:Arc1}-\ref{enum:Arc3} the state prepared after \ref{Exp:Evolution} is a resource for postselected measurement-based quantum computation  
postselecting w.r.t.\ the measurements in \ref{Exp:Measurement} (section \ref{D}), but as such does not amount to universal quantum computation. In fact,   architectures \ref{enum:Arc1}-\ref{enum:Arc3} require  \emph{neither} adaptive measurements (which are key in  MBQC~\cite{RaussendorfPhysRevLett.86.5188}) \emph{nor} physical postselection: our result states that if three plausible complexity-theoretic conjectures  hold  a single-shot readout cannot be classically simulated.\\

\section{Main result}\label{main result}

We now turn to stating that while the above three architectures \ref{enum:Arc1}-\ref{enum:Arc3} are physically feasible, the output distributions of measurements
cannot be efficiently classically simulated on classical computers, based on plausible assumptions and standard complexity-theoretic arguments.
\begin{theorem}[Hardness of classical simulation]\label{thm:Main}
If Conjectures\mot{}\ref{conj:ConjecturePolyHiear}-\ref{conj:AntiConcentration} below are true then a classical computer cannot 
sample from the outcome distribution of any architecture \ref{enum:Arc1}-\ref{enum:Arc3} up to error 1/22 in $\ell_1$ norm in time 
$O(\mathrm{poly}(n,m))$.
\end{theorem}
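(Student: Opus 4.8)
The plan is to adapt the now-standard argument for hardness of \emph{approximate} sampling from IQP circuits to the NNTI constant-depth circuits realized by architectures \ref{enum:Arc1}--\ref{enum:Arc3}. The first step is purely structural: for each architecture, identify the outcome distribution of steps \ref{Exp:Preparation}--\ref{Exp:Measurement} with that of an IQP circuit $U=\euler^{\imun H}$ acting on the all-plus input and measured in the $X$ basis, and express the probability $p(\mathbf{0})$ of a distinguished outcome as a normalised $\#\textsf{P}$-hard quantity. Since $H$ is diagonal Ising and the readout is in a fixed basis, this probability takes the form $p(\mathbf{0})=|\langle\text{out}|U|\psi_\beta\rangle|^2=|Z|^2/2^{2N}$, where $Z$ is a partition-function-like weighted sum over $\pm1$ spin configurations, i.e.\ a real $\mathrm{GapP}$ quantity. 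This is the object whose average-case approximability (the second conjecture) and anti-concentration (Conjecture~\ref{conj:AntiConcentration}) we must exploit.

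As a backbone I would first record the \emph{exact}-sampling case, which fixes the role of Conjecture~\ref{conj:ConjecturePolyHiear}. Using the postselection gadgets of section~\ref{D}, the postselected versions of these circuits are universal, so $\textsf{postIQP}=\textsf{postBQP}=\textsf{PP}$. A hypothetical $\poly(n,m)$-time classical \emph{exact} sampler would, by the Bremner--Jozsa--Shepherd argument, place the corresponding distribution in $\textsf{postBPP}$; combined with universality this yields $\textsf{PP}\subseteq\textsf{postBPP}\subseteq\textsf{PH}$, collapsing the polynomial hierarchy and contradicting Conjecture~\ref{conj:ConjecturePolyHiear}. The genuine content of Theorem~\ref{thm:Main} is to upgrade this to additive $\ell_1$ error, which is where all three conjectures enter simultaneously.

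For the approximate case, suppose a classical algorithm samples in time $O(\poly(n,m))$ from a distribution $q$ with $\norm{p-q}_1\le 1/22$. Applying Stockmeyer's approximate-counting theorem to this sampler yields an $\textsf{FBPP}^{\textsf{NP}}$ procedure that, for any fixed outcome $x$, estimates $q(x)$ to within a multiplicative $(1\pm 1/\poly(N))$ factor. Because $\norm{p-q}_1$ is small, for $x$ drawn uniformly the expected additive error obeys $\mathbb{E}_x|p(x)-q(x)|\le 2\cdot(1/22)/2^{N}$, so by Markov's inequality $|p(x)-q(x)|$ exceeds a fixed small multiple of $1/2^{N}$ only on a controlled fraction of outcomes. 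Invoking anti-concentration (Conjecture~\ref{conj:AntiConcentration}), $p(x)\gtrsim 1/2^{N}$ for a constant fraction of $x$; intersecting the two events shows that the $\textsf{FBPP}^{\textsf{NP}}$ estimate of $q(x)$ is in fact a \emph{multiplicative} approximation of $p(x)$ on a fraction of outcomes large enough to meet the hypothesis of the average-case hardness conjecture. This places a conjecturally $\#\textsf{P}$-hard task inside $\textsf{FBPP}^{\textsf{NP}}$, hence inside the third level of $\textsf{PH}$; by Toda's theorem $\textsf{PH}\subseteq\textsf{P}^{\#\textsf{P}}$, so $\textsf{PH}$ collapses, again contradicting Conjecture~\ref{conj:ConjecturePolyHiear}.

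I expect the main obstacle to be the careful bookkeeping of constants at the interface between additive and multiplicative error: the admissible value $1/22$ is pinned down only by simultaneously tracking the $(1\pm1/\poly)$ slack of the Stockmeyer step, the explicit anti-concentration constant supplied by Conjecture~\ref{conj:AntiConcentration}, and the fraction of ``good'' outcomes demanded by the average-case conjecture, then checking that their product still leaves a nontrivial fraction of $x$ on which the multiplicative estimate is valid. The structural prerequisites---postselection universality of the specific NNTI gadgets and the identification of $|Z|^2$ as the relevant $\#\textsf{P}$-hard quantity with the conjectured average-case and anti-concentration behaviour---are the deep inputs, but these are exactly what is packaged into section~\ref{D} and Conjectures~\ref{conj:ConjecturePolyHiear}--\ref{conj:AntiConcentration}, so the remaining work is the constant-tracking Markov argument that turns small $\ell_1$ error into average-case multiplicative approximation.
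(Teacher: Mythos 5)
Your proposal follows the same route as the paper's own proof: recast the output probabilities as Ising partition functions, run Stockmeyer's approximate-counting theorem against the hypothetical sampler, combine Markov's inequality with anti-concentration to turn additive error into relative error on a constant fraction of outcomes, invoke Conjecture~\ref{conj:IsingAverageComplexity} to make that \#\textsf{P}-hard, and conclude a collapse of \textsf{PH} to the third level via Toda's theorem, contradicting Conjecture~\ref{conj:ConjecturePolyHiear}. Your exact-sampling warm-up also matches the paper's remark that Conjecture~\ref{conj:ConjecturePolyHiear} alone rules out exact or relative-error samplers.

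There is, however, one genuine gap, and it sits exactly where the constant $1/22$ is decided. You propose to ``intersect the two events'' (the Markov event, that the sampler's error on outcome $a$ is small, and the anti-concentration event, that $q(a)$ is large). A plain intersection only gives a fraction $\gamma - \delta$ of good outcomes, where $\gamma = 1/e$ is the anti-concentration constant and $\delta$ the Markov failure probability. But the relative-error requirement of Conjecture~\ref{conj:IsingAverageComplexity} forces $\varepsilon/\delta \leq 1/4$, so with $\varepsilon = 1/22$ one must take $\delta \geq 4/22 \approx 0.18$, leaving $\gamma - \delta \approx 0.19$, which falls short of the $0.3$ fraction that Conjecture~\ref{conj:IsingAverageComplexity} demands; by this route the theorem would only follow for $\ell_1$ error roughly $1/59$. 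The paper instead takes the two events to hold simultaneously with probability $\gamma(1-\delta) \approx 0.30$, i.e.\ it multiplies the probabilities as if the events were independent, and this requires a separate ``hiding'' argument: the classical description of the architecture does not reveal to the sampler which output probabilities are the \#\textsf{P}-hard ones, because the random byproduct operators $\prod_i X_i^{y_i}$ obfuscate their location in the output distribution, so the sampler cannot adversarially concentrate its error on them. Your closing remark about ``checking that their product still leaves a nontrivial fraction'' implicitly assumes this independence, but the justification is a substantive step of the proof, not bookkeeping; without it the claimed constant $1/22$ is not attained. (A smaller omission of the same kind: Conjecture~\ref{conj:AntiConcentration} concerns the logical $n$-qubit circuits, and lifting it to the physical distribution over $(x,y,\beta)$ uses the explicit marginal structure $q(x,y,\beta) = q(x|y,\beta)\,2^{-(N-n)}|\Gamma|^{-1}$ coming from $X$-teleportation, which your sketch glosses over.)
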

As in previous works \cite{terhal2004adaptive,AaronsonArkhipov13LinearOptics4,BremnerOld,jozsa2014classical,BremnerPRL117.080501,SparseNoisySupremacy,fefferman2016power}, Theorem~\ref{thm:Main} relies on plausible complexity-theoretic conjectures. The first, originally adopted in Ref.\  \cite{terhal2004adaptive}, is a widely believed statement about the structure of an infinite tower of complexity classes known as ``the Polynomial Hierarchy'' ($\textsf{PH}$), the levels of which recursively endow the classes \textsf{P}, \textsf{NP}, and \textsf{coNP} with oracles to previous levels.
\begin{conjecture}[Polynomial Hierarchy]\label{conj:ConjecturePolyHiear} 
The Polynomial Hierarchy is infinite.
\end{conjecture}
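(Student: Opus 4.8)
The statement to be proved is that the Polynomial Hierarchy does not collapse, i.e.\ that for every level $k$ one has $\Sigma_k^{p}\neq\Sigma_{k+1}^{p}$ (equivalently $\Sigma_k^{p}\neq\Pi_k^{p}$), so that \textsf{PH} is a strictly increasing infinite tower of classes. The plan is to attack this through its standard reformulation in circuit complexity. First I would invoke the Furst--Saxe--Sipser correspondence: a separation of the $k$-th level of a \emph{relativized} hierarchy is equivalent to a lower bound asserting that an explicit family of functions computable by constant-depth circuits of depth $k+1$ cannot be computed by polynomial-size circuits of depth $k$. In this way, proving \textsf{PH} infinite reduces---in the relativized world---to establishing a strict depth-hierarchy theorem for $\mathsf{AC}^{0}$.

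The next step would be to import the strongest available constant-depth lower bounds. Yao's and H{\aa}stad's exponential bounds for parity, obtained via the switching lemma, already yield a strict $\mathsf{AC}^{0}$ depth hierarchy and hence an oracle relative to which every level of \textsf{PH} is distinct; the average-case depth-hierarchy theorem of Rossman, Servedio and Tan even gives that \textsf{PH} is infinite relative to a random oracle. I would use these as a template: fix a candidate hardest function at each level (a Sipser-type function), and attempt to carry the depth-separation argument over from the circuit/oracle setting to the uniform, unrelativized setting by stripping out the oracle gates.

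Here lies the main---indeed, insurmountable---obstacle, and it is precisely why the paper treats the statement as a \emph{conjecture} rather than a theorem. Proving \textsf{PH} infinite is strictly stronger than proving $\textsf{P}\neq\textsf{NP}$, since it already forces $\Sigma_0^{p}=\textsf{P}\neq\textsf{NP}=\Sigma_1^{p}$; thus any such proof must in particular resolve the open $\textsf{P}$ versus $\textsf{NP}$ question. Moreover, the circuit-to-oracle machinery relativizes, so by Baker--Gill--Solovay---who exhibit oracles making \textsf{PH} collapse and others leaving it infinite---no relativizing argument can settle the uniform question; the natural-proofs barrier of Razborov--Rudich blocks the switching-lemma lower bounds from scaling to general circuits; and the algebrization barrier of Aaronson--Wigderson rules out the obvious algebraic strengthenings. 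A genuine proof would therefore require a fundamentally non-relativizing, non-naturalizing and non-algebrizing technique of a kind not presently available. My honest assessment is that the final step---removing the oracle---is not a routine calculation but the entire open problem, so the sound course, and the one the paper adopts, is to retain \textsf{PH} infinite as a plausible, widely believed hypothesis rather than to assert a proof.
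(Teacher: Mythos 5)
Your assessment is correct and matches the paper's treatment exactly: the statement is an \emph{assumption} (Conjecture~\ref{conj:ConjecturePolyHiear}), for which the paper offers only plausibility arguments (it generalizes $\textsf{P}\neq\textsf{NP}$, a partial collapse would be surprising, and results such as Karp--Lipton are proven relative to it), never a proof, so your conclusion that the oracle-removal step \emph{is} the open problem and that the hypothesis must simply be retained is the right one. One minor attribution slip: Baker--Gill--Solovay gave oracles for $\textsf{P}=\textsf{NP}$ (collapsing \textsf{PH}) and $\textsf{P}\neq\textsf{NP}$, while the oracles relative to which \textsf{PH} is infinite are due to Yao and H{\aa}stad via the Furst--Saxe--Sipser connection (and to Rossman--Servedio--Tan for a random oracle, as you correctly cite).
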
 
The claim generalizes the familiar $\textsf{P}\neq\textsf{NP}$ conjecture in that $\textsf{P}=\textsf{NP}$ would imply a complete collapse of $\textsf{PH}$ to its $0$-th level. Furthermore, if two levels $k$, $k+1$ coincide, then all classes above level $k$ collapse to it. The available evidence for $\textsf{P}\neq\textsf{NP}$ makes Conjecture\mot\ref{conj:ConjecturePolyHiear} plausible, for it would be surprising to find a collapse of $\textsf{PH}$  to some level $k$ but not  a full one \cite{aaronson2016} (cf.\ Ref.\ \cite{fortnow2005beyond} for further discussion). Similarly to the Riemann hypothesis in number theory, many theorems in complexity theory have been proven relative to Conjecture \ref{conj:ConjecturePolyHiear}, probably most notably the Karp-Lipton theorem $\textsf{NP}\nsubseteq \textsf{P/poly}$\mot{}\cite{karp1980}. 

We highlight that, assuming  Conjecture ~\ref{conj:ConjecturePolyHiear} only, a classical computer would still not be able to sample from our experiments either exactly or within any  constant relative error (cf.\ section \ref{F}). However, such level of accuracy is physically unrealistic for it cannot be achieved by a quantum computer. A goal of this work is to understand how unlikely is for architectures \ref{enum:Arc1}-\ref{enum:Arc3} to be classically intractable  under realistic errors.

Our second conjecture, adopted from Ref.\ \cite{BremnerPRL117.080501}, is a qubit analog of the ``permanent-of-Gaussians'' conjecture \cite{AaronsonArkhipov13LinearOptics4}. It states that partition functions of (unstructured) random Ising models should be equally hard to approximate in average- and  worst-case. Now, let $(a,b):=(a_1,\ldots,a_{N_X},b_1,\ldots,b_{N_Z})$ be the outcomes of the  $X$ and $Z$ measurements in our architectures, with $b_i=0$  for \ref{enum:Arc1}-\ref{enum:Arc2}. In appendix \ref{B} we show that 
\begin{align}\label{eq:Probabilities=PartFunct}
\mathrm{prob}\left(a,b\,|\,\beta\right)&= \left|\bra{  a, b } U \ket{\psi_\beta}\right|^2=\frac{|\mathcal{Z}^{( \uppi a,\frac{\uppi}{4} b + \beta)}|^2}{2^{N_X+\frac{N_Z}{2}}},
\end{align}
where $\mathcal{Z}^{(\alpha,\vartheta)}:=\text{tr}(\euler^{\imun H^{(\alpha,\vartheta)}})$ is the partition function of a random Ising model on an $n\times m$ square lattice $\mathcal{L}_{\mathrm{sq}}$: \begin{align}\label{eq:RandomIsingModel}
 H^{(\alpha,\beta)}&:= \sum_{(i,j)\in E_{\mathrm{sq}}}\tfrac{\uppi}{4} Z_iZ_j - \sum_{i\in V_{\mathrm{sq}}} h_i^{(\alpha,\vartheta)} Z_i,
%\quad
\\ 
h_i^{(\alpha,\vartheta)}&:= h_i- \left(\tfrac{\alpha_i+\vartheta_i}{2}\right), \quad \alpha_i\in \{0,\uppi\},  \vartheta_i\in\{0,\theta\},\notag
\end{align}where $\theta\in\{\tfrac{\uppi}{4},\tfrac{\uppi}{8}\}$ is chosen  as in step (\ref{Exp:Preparation}),\black{} and $\alpha$ (resp.\ $\vartheta$) is random and DO- (resp.\ either DO- or TI$_{(1,\infty)}$-) distributed. 
\begin{conjecture}[Average-case complexity]\label{conj:IsingAverageComplexity}   For random Ising models as in ~(\ref{eq:RandomIsingModel}), approximating  $|\mathcal{Z}^{(\alpha,\beta)}|^2$ up to relative error  ~$\frac{1}{4}+o(1)$ for any  0.3  fraction of the field configurations is as hard as in worst-case.
\end{conjecture}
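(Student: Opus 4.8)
The plan is to attempt a worst-case-to-average-case reduction of the Lipton type, exactly in the spirit of the permanent-of-Gaussians analysis of Ref.~\cite{AaronsonArkhipov13LinearOptics4} and its IQP analog in Ref.~\cite{BremnerPRL117.080501}. The starting point is structural: carrying out the trace over the $2^N$ spin configurations defining $\mathcal{Z}^{(\alpha,\vartheta)}$, the only dependence on the field configuration $(\alpha,\vartheta)$ enters through the single-site factors, so that $\mathcal{Z}^{(\alpha,\vartheta)}$ equals, up to a fixed monomial prefactor, a polynomial that is multilinear in the variables $u_i:=\euler^{\imun\alpha_i}$ and $v_i:=\euler^{\imun\vartheta_i}$. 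Consequently the target quantity $|\mathcal{Z}^{(\alpha,\vartheta)}|^2=\mathcal{Z}^{(\alpha,\vartheta)}\overline{\mathcal{Z}^{(\alpha,\vartheta)}}$ has degree at most two in each of these variables. First I would establish the worst-case hardness of this quantity: exactly evaluating $|\mathcal{Z}^{(\alpha,\vartheta)}|^2$ is $\textsf{\#P}$-hard, which follows by combining the identification \eqref{eq:Probabilities=PartFunct} of these squared partition functions with IQP output probabilities and the known worst-case hardness of such amplitudes \cite{BremnerJozsaShepherd08,terhal2004adaptive}.

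Given this, the reduction proper would embed an arbitrary worst-case field configuration as a point on a random low-degree curve through parameter space, sample the curve at enough points -- chosen so that each individual query is distributed over the admissible DO- or TI$_{(1,\infty)}$-field configurations -- feed these points to the hypothetical average-case approximate oracle, and reconstruct the worst-case value of $|\mathcal{Z}^{(\alpha,\vartheta)}|^2$ by interpolating the low-degree polynomial along the curve. In the idealized \emph{exact} regime this goes through cleanly: random self-reducibility of low-degree polynomials recovers the worst-case value from correct evaluations at a modest number of random points, and this exact version is in fact the main piece of evidence one can honestly offer in support of the statement.

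The hard part is promoting the interpolation to the genuinely \emph{approximate} regime demanded here, namely tolerating simultaneously the permitted relative error $\tfrac14+o(1)$ and the fact that the oracle is only assumed correct on a $0.3$ fraction of instances. The large fraction of outright failures could plausibly be absorbed by replacing naive interpolation with error-correcting reconstruction, e.g.\ Reed--Solomon list decoding, so that the corrupted curve points are decoded away once the curve carries enough evaluations. The true obstacle is the multiplicative nature of the error: wherever $|\mathcal{Z}^{(\alpha,\vartheta)}|^2$ is exponentially small, a constant relative error on the sampled values is amplified by the interpolation into an uncontrolled additive error on the reconstructed worst-case value. Taming this would require either an anticoncentration-type lower bound guaranteeing that typical $|\mathcal{Z}^{(\alpha,\vartheta)}|^2$ are not too small -- precisely the content of the separate anticoncentration conjecture invoked in Theorem~\ref{thm:Main} -- or a fundamentally more robust reduction than polynomial interpolation currently affords. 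Since no such robust relative-error worst-to-average reduction is known for permanents or Ising partition functions, I would at this stage only be able to prove the exact-case statement, and would have to leave the approximate version in the status of a conjecture, exactly as it is posed.
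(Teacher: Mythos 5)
The statement you were asked to prove is Conjecture~\ref{conj:IsingAverageComplexity} of the paper: it is an \emph{assumption}, adopted from Ref.~\cite{BremnerPRL117.080501} as a qubit analog of the permanent-of-Gaussians conjecture of Ref.~\cite{AaronsonArkhipov13LinearOptics4}, and the paper offers no proof of it --- the only thing actually proven is the complementary worst-case statement, Lemma~\ref{lemma:SharpPHardness}. Your analysis is therefore correct and takes essentially the same stance as the paper: you identify the worst-case hardness ingredient, sketch the Lipton-style polynomial-interpolation reduction that motivates the conjecture, and correctly pinpoint the relative-error robustness barrier (exponentially small values of $|\mathcal{Z}^{(\alpha,\vartheta)}|^2$ defeating interpolation) as exactly the reason the average-case statement remains conjectural rather than a theorem.
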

We complement Conjecture\mot\ref{conj:IsingAverageComplexity} with the  following lemma.
\begin{lemma}[\textsf{$\#$P}-hardness]\label{lemma:SharpPHardness} Let $H^{(\alpha,\beta)}$ be the Ising model~(\ref{eq:RandomIsingModel}) on the $n\times m$ square lattice  with either  (i) \textnormal{DO}-distributed $\vartheta$ and $\theta\in\{0,\frac{\uppi}{4}\}$; or (ii) \textnormal{TI}$_{(1,\infty)}$-distributed $\vartheta$ and  $\theta\in\{0,\frac{\uppi}{8}\}$. Then, for $m\in O(n^2)$, approximating  $|\mathcal{Z}^{(\alpha,\beta)}|^2$ with relative error ~$\frac{1}{4}+o(1)$ is  \textnormal{\textsf{$\#$P}}-hard.
\end{lemma}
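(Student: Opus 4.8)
The plan is to prove the lemma by recognizing $|\mathcal{Z}^{(\alpha,\vartheta)}|^2$ as an IQP output probability and then reducing to the worst-case hardness of computing probabilities of postselection-universal circuits, in the spirit of Refs.~\cite{BremnerJozsaShepherd08,BremnerPRL117.080501}. By the identity~\eqref{eq:Probabilities=PartFunct} established in Appendix~\ref{B}, the quantity $|\mathcal{Z}^{(\alpha,\vartheta)}|^2$ is, up to the explicit normalization $2^{N_X+N_Z/2}$, exactly the outcome probability $\mathrm{prob}(a,b\,|\,\beta)$ of the IQP circuit $U$ realized by one of the architectures \ref{enum:Arc1}--\ref{enum:Arc3}, with the measurement outcomes fixing $\alpha=\uppi a$ and the preparation angles fixing the field offsets $\vartheta$. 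Since the lemma asks for worst-case hardness over the field configuration, I am free to \emph{choose} $\alpha$ and $\vartheta$, i.e.\ to select the hardest admissible instance; the content is therefore that some configuration encodes a \textsf{\#P}-hard quantity. Crucially, $\alpha$ is always \textnormal{DO}-distributed, so the fields $h_i^{(\alpha,\vartheta)}$ retain full freedom through $\alpha_i\in\{0,\uppi\}$ irrespective of whether $\vartheta$ is \textnormal{DO}- or \textnormal{TI}$_{(1,\infty)}$-constrained.

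First I would establish universality under postselection. The couplings $J_{i,j}=\uppi/4$ implement entangling $CZ$ gates between lattice neighbours; the local fields together with $\vartheta_i\in\{0,\theta\}$ contribute diagonal phase gates by multiples of $\theta$; and measuring in the $X$ basis conjugates by Hadamards. In case~(i), $\theta=\uppi/4$ yields $T$ gates, so that $\{CZ,T,H\}$ generates a universal Clifford$+T$ set, and postselecting the $X$ outcomes places the computation in \textsf{PostBQP}, which equals \textsf{PP}. In case~(ii), where $\vartheta$ is only \textnormal{TI}$_{(1,\infty)}$ and hence cannot place $T$ gates freely, I would instead exploit the finer rotation $\theta=\uppi/8$ together with the unconstrained \textnormal{DO} freedom in $\alpha$ to synthesise the missing non-Clifford resource and recover universality. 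In both cases the hypothesis $m\in O(n^2)$ supplies the polynomial spatial overhead needed to route and embed an arbitrary polynomial-size circuit into the planar, nearest-neighbour, (partially) translation-invariant lattice.

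Given universality, I would invoke the Bremner--Jozsa--Shepherd encoding \cite{BremnerJozsaShepherd08}: one embeds a circuit whose postselected amplitude $\bra{a,b}U\ket{\psi_\beta}$ is real and equal to $\mathrm{gap}(f)/2^{s}$ for a \textsf{\#P} function $f$, where $\mathrm{gap}(f)=|\{x:f(x)=0\}|-|\{x:f(x)=1\}|$ (realness being arranged by a standard Hadamard-test gadget, itself embeddable within the $O(n^2)$ overhead). Then $|\mathcal{Z}^{(\alpha,\vartheta)}|^2\propto\mathrm{gap}(f)^2$, and since $\mathrm{gap}(f)$ is \textsf{\#P}-hard to compute, exact evaluation of $|\mathcal{Z}^{(\alpha,\vartheta)}|^2$ is \textsf{\#P}-hard, settling the zero-error case.

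The main obstacle, and the step I would treat most carefully, is upgrading exact hardness to hardness of \emph{relative} approximation at the stated threshold $\tfrac14+o(1)$, following Ref.~\cite{BremnerPRL117.080501}. Here I would view $|\mathcal{Z}^{(\alpha,\vartheta)}|^2=\mathrm{gap}(f)^2$ as a perfect square of a bounded integer and reconstruct $|\mathrm{gap}(f)|$ exactly from a family of probe instances obtained by small, admissible modifications of $\alpha$ (and $\vartheta$). Each probe shifts the encoded count, and a relative error below $\tfrac14$ on the squared gap keeps every threshold comparison in the ensuing binary search reliable, so that the exact \textsf{\#P} value is recovered with polynomially many oracle calls. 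The delicate points are verifying that all probe circuits remain inside the admissible NNTI family of each case and that the $\tfrac14+o(1)$ margin genuinely separates the competing gap values at every comparison; the $o(1)$ tolerance is precisely what absorbs the slack accumulated over the $\mathrm{poly}(n,m)$ steps.
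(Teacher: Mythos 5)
Your first two steps track the paper's route: the identification of $|\mathcal{Z}^{(\alpha,\vartheta)}|^2$ with architecture output probabilities via Eq.~\eqref{eq:Probabilities=PartFunct}, followed by postselected universality, is exactly the paper's strategy (its Lemma~\ref{lemma:PostSelection} together with the mappings of section~\ref{C}; note that for case~(ii) the ``synthesis of the missing non-Clifford resource'' is not a one-liner but requires the translation-invariant 1D scheme of Ref.~\cite{Raussendorf05QC_TIOperations_1DChain}, implemented in MBQC in Lemma~\ref{lemma:lemma:MBQCX-YOnColumns}). The divergence --- and the genuine gap --- is in your final step. The paper never re-derives the relative-error hardness at threshold $\tfrac14+o(1)$: it imports, as a black box, the known result that output probabilities of \emph{dense} IQP circuits are \textsf{\#P}-hard to approximate to relative error $\tfrac14+o(1)$ \cite{Fujii13_CommutingCircuitsIsing,Goldberg14_ComplexityIsingTutte}, and its own contribution is to embed every such circuit into an admissible instance. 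Your attempted re-derivation fails on two counts. First, the mechanism: threshold comparisons in a binary search are precisely what a constant relative error does \emph{not} support --- two candidate values of $\mathrm{gap}(f)^2$ lying within a factor $5/3$ of one another cannot be reliably ordered by an oracle with relative error $\tfrac14$. The argument that actually works (the one behind Ref.~\cite{AaronsonArkhipov13LinearOptics4}, Theorem~4.3, and the results you would be re-proving) is an iterated refinement exploiting closure of $\mathrm{gap}$ functions under affine integer shifts: each query returns $|\mathrm{gap}(f)+k|$ up to a constant multiplicative factor, shrinking the candidate interval geometrically, with the sign handled separately. Second, and more seriously, your probes are claimed to be ``small, admissible modifications of $\alpha$ (and $\vartheta$)'' of one fixed embedded instance. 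But toggling a field $\alpha_i$ by $\uppi$ inserts a Pauli-$Z$ byproduct at one site of the logical circuit; it does not shift the encoded count by a chosen integer $k$. The shift $\mathrm{gap}(f)\mapsto\mathrm{gap}(f)+k$ lives at the level of the Boolean function, and each shifted function must be re-encoded and re-embedded as a fresh circuit (whose admissibility then follows from universality, not from locality of the modification). As written, this step of your proof would fail.

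There is a second, quantitative gap: the bound $m\in O(n^2)$. You treat it as generic polynomial overhead for ``routing and embedding an arbitrary polynomial-size circuit,'' but an arbitrary polynomial-size circuit does not fit in an $n\times O(n^2)$ lattice --- under Lemma~\ref{lemma:PostSelection} a depth-$D$ logical circuit on $n$ qubits consumes an $O(n)\times O(Dn)$ lattice, so the quadratic bound requires logical depth \emph{linear} in the qubit number. The paper secures this with a dedicated new ingredient, Lemma~\ref{lemma:1D_IQP}: the dense IQP circuits of Ref.~\cite{BremnerPRL117.080501}, which have $\Theta(k^2)$ long-range gates, can be realized in depth $\Theta(k)$ on a 1D nearest-neighbor chain via an even-odd transposition (SWAP) network in which every qubit pair crosses exactly once. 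Your alternative anchor (a Bremner--Jozsa--Shepherd gap-encoding of a generic \textsf{\#P} function with a Hadamard-test gadget) produces circuits of uncontrolled polynomial size, so to land inside $n\times O(n^2)$ you would at minimum need an explicit padding argument (inflating the qubit register until depth becomes linear in it); no such accounting appears in your proposal, and without it the lemma's stated lattice dimensions are not attained.
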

Thus, accepting Conjecture~\ref{conj:IsingAverageComplexity} implies that approximating $|\mathcal{Z}^{(\alpha,\beta)}|^2$ for these models is  as hard in average as any problem in \textsf{$\#$P}  \cite{valiant1979}. The proof  (section \ref{E}) applies MBQC methods \cite{RaussendorfPhysRevA.68.022312} to show that \ref{enum:Arc1}-\ref{enum:Arc3} are computationally equivalent to an encoded $n$-qubit 1D nearest-neighbor circuit comprising random gates of the form
\begin{equation}
\left[\prod_{i=1}^{n-1} CZ_{i,i+1}\right]\left[\prod_{i=1}^{n}Z_i^{c_i}\euler^{-\imun \theta d_iZ_i}H_i\right],c_i,d_i\in\{0,1\}\label{eq:TranslationInvariantGateset},
\end{equation}
where $c_i$ (resp.\ $d_i$) is DO (resp.\ DO-or-TI$_{(1,\infty)}$-)  distributed and $H$ is the Hadamard gate. Post-selecting such circuits, we can implement  two known universal  schemes of quantum computation \cite{Raussendorf05QC_TIOperations_1DChain,Mantri16MBQC_XYmeasurements}. We then exploit  that universal quantum-circuit amplitudes are \textsf{\#P}-hard to approximate.  As a remark, we discuss that the bound $m\in O(n^2)$ in Lemma~\ref{lemma:SharpPHardness} might not be optimal. In fact, we believe the result should still hold for $m\in O(n)$ (possibly for a different constant error) based on two pieces of evidence.
\begin{enumerate}[leftmargin=17pt]
\item[(i)]  On the one hand, our anti-concentration numerics (appendix \ref{G}), indicate that $O(n)$-depth universal random circuits of gates of form (\ref{eq:TranslationInvariantGateset})---whose output probabilities are in on-to-one correspondence via (\ref{eq:AntiConcentrationProb}) with the instances of $|\mathcal{Z}^{(\alpha,\beta)}|^2$---are Porter-Thomas distributed: the latter is a signature of quantum chaos, and of our quantum circuits being approximately Haar-random \cite{PorterThomas,HaakeChaos,EmersonPseudoRandom,Emerson05ConvergenceRandomCircuits,brown2008,Boixo161608.00263}. Hence, our numerics suggest that our $n\times n$-qubit lattices efficiently encode chaotic approximately-Haar-random $n$-qubit unitaries.
\item[(ii)] On the other hand, we analytically show that \textsf{\#P}-hardness arises for $m\in O(n)$ and  slightly- different choices of input states (resp.\ dangling-bonds) in architectures \ref{enum:Arc1}-\ref{enum:Arc2} (resp.\ \ref{enum:Arc3}) (cf. appendix \ref{H}).
\end{enumerate}
Last, we claim  that random circuits of  gates of form (\ref{eq:TranslationInvariantGateset}) anti-concentrate. 
\begin{conjecture}[Anti-concentration]\label{conj:AntiConcentration} Let    $\mathcal{C}$ be an $n$-qubit  $O(n)$-depth random circuit of gates of form (\ref{eq:TranslationInvariantGateset}), then \begin{equation}\label{eq:AntiConcentrationProb}
    \mathrm{prob}_{x}\left(\left|\bra{x} \mathcal{C}
    \ket{0}^{\otimes{n}}\right|^2\geq \frac{1}{2^{n}} \right) \geq \frac{1}{e}
\end{equation} 
    for a uniformly random choice of $x = (x_1,\ldots,x_n)$. 
\end{conjecture}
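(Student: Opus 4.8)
The plan is to exploit that the target constant $1/e$ is precisely the Porter--Thomas value: if $D:=2^{n}p_x$ with $p_x:=|\bra{x}\mathcal{C}\ket{0}^{\otimes n}|^2$ were exponentially distributed, then $\mathrm{prob}(D\geq 1)=\int_1^\infty \euler^{-t}\,\d t=1/e$, which is exactly \eqref{eq:AntiConcentrationProb}. I would therefore aim to show that the depth-$O(n)$ ensemble of gates~\eqref{eq:TranslationInvariantGateset} reproduces enough of the Haar (Porter--Thomas) output statistics \cite{PorterThomas,Boixo161608.00263}. Two routes present themselves: a robust one that controls only the second moment and yields a \emph{constant} anti-concentration bound via the Paley--Zygmund inequality, and a sharp one that establishes convergence to the full Porter--Thomas distribution and thereby pins the constant to the exact value $1/e$.

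For the second-moment route, note first that $\mathbb{E}_x[p_x]=2^{-n}$ holds identically for every $\mathcal{C}$ since $\sum_x p_x=1$. The quantity to control is the collision (second) moment $\mathbb{E}_x[p_x^2]=2^{-n}\sum_x p_x^2$, whose circuit average I would evaluate by the standard two-replica trick: averaging $\mathcal{C}^{\otimes 2}\otimes \bar{\mathcal{C}}^{\otimes 2}$ maps the problem to a classical partition function (a Markov chain on the doubled configuration space) layer by layer. For an exact unitary $2$-design this reproduces the Haar value $\mathbb{E}_\mathcal{C}\sum_x p_x^2=2/(2^n+1)$, i.e.\ $\mathbb{E}_\mathcal{C}\mathbb{E}_x[p_x^2]\approx 2\cdot 2^{-2n}$. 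To justify the $2$-design value for our ensemble I would use the reduction established in the proof of Lemma~\ref{lemma:SharpPHardness}, which recasts the $2$D architectures \ref{enum:Arc1}--\ref{enum:Arc3} as a \emph{one}-dimensional nearest-neighbor random circuit on $n$ qubits of depth $O(n)$ built from the gates~\eqref{eq:TranslationInvariantGateset}. For such $1$D brickwork ensembles the growth of designs is governed by the spectral gap of the local second-moment transfer operator; following design-growth arguments for local random circuits \cite{brown2008,Emerson05ConvergenceRandomCircuits}, one would bound this gap below by a constant and conclude that depth $O(n)$ suffices to equilibrate the second moment across all $n$ sites, giving an approximate $2$-design. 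Feeding this into Paley--Zygmund, $\mathrm{prob}_x[p_x\geq\theta\,2^{-n}]\geq (1-\theta)^2(\mathbb{E}_x p_x)^2/\mathbb{E}_x[p_x^2]\to(1-\theta)^2/2$ for a typical circuit, which already establishes a constant anti-concentration for any $\theta<1$.

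The main obstacle is twofold, and it is exactly what forces us to state \eqref{eq:AntiConcentrationProb} as a conjecture and to corroborate it numerically (appendix~\ref{G}). First, the randomness in \eqref{eq:TranslationInvariantGateset} is \emph{discrete}: it resides only in the bits $c_i,d_i\in\{0,1\}$ at the fixed angles $\theta\in\{\uppi/4,\uppi/8\}$, with deterministic Hadamards and a fixed $CZ$ pattern. Haar/Weingarten averaging therefore does not apply directly, and one must prove the constant moment-gap for this highly structured, anisotropic ensemble---verifying that the finite angle set together with the sign choices generates enough randomness to mix the second-moment operator. Second, and more seriously, the Paley--Zygmund bound only certifies a constant at a \emph{shifted} threshold ($\theta<1$); pinning the precise constant $1/e$ at the threshold $2^{-n}$ requires controlling \emph{all} moments $\mathbb{E}_x[(2^n p_x)^k]\to k!$, i.e.\ approximate $k$-designs for growing $k$, which is substantially harder than the $2$-design estimate. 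Our numerics showing Porter--Thomas statistics (appendix~\ref{G}) are precisely the evidence that this stronger convergence, and hence the sharp value $1/e$, does hold.
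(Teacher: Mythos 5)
The statement you were asked to prove is, in the paper, precisely a \emph{conjecture}: the authors never prove Eq.~(\ref{eq:AntiConcentrationProb}), they support it by exact numerical simulation of the logical circuit families up to $20$ qubits (appendix~\ref{G}), observing both that the fraction $\gamma_{y,\beta}$ converges rapidly to $1/e$ and that the output probabilities approach the Porter--Thomas distribution, together with heuristic appeals to the design-growth and quantum-chaos literature. Your proposal is therefore in the same epistemic position as the paper, and its core diagnosis matches the paper's reasoning exactly: $1/e$ is the Porter--Thomas value, and pinning that sharp constant would require convergence of all moments of $2^n p_x$, which is out of reach for this discrete, structured gate set. Where you go beyond the paper is the explicit second-moment route: a two-replica calculation plus a constant spectral gap for the second-moment transfer operator, fed into Paley--Zygmund, would yield $\mathrm{prob}_x[p_x\geq\theta 2^{-n}]\geq(1-\theta)^2/2$ for typical circuits. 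The paper never attempts this analytically, and it is worth noting two things about it. First, even this weaker statement is not established by your sketch, since the constant gap for an ensemble whose only randomness is the bits $c_i,d_i$ at fixed angles in (\ref{eq:TranslationInvariantGateset}) is itself unproven (standard local-random-circuit gap results assume Haar or at least continuously distributed local gates); you flag this correctly. Second, the weaker shifted-threshold bound would actually suffice for the paper's purposes: the hardness argument in section~\ref{F} explicitly notes robustness to variations of the threshold and constant in Conjecture~\ref{conj:AntiConcentration}, at the price of modified error constants in Theorem~\ref{thm:Main}. So a completed version of your Paley--Zygmund route would be a genuine strengthening of the paper, replacing one conjecture by a theorem at the cost of slightly worse constants, while your honest conclusion---that the sharp $1/e$ form must remain a numerically supported conjecture---coincides with the authors' own treatment.
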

In section \ref{F}, we show that Eq.\mot(\ref{eq:AntiConcentrationProb}) is a sufficient condition for the output distribution of architectures  \ref{enum:Arc1}-\ref{enum:Arc3} to display anti-concentration. Analog numerically-supported conjectures  have been made in Refs.\ \cite{AaronsonArkhipov13LinearOptics4,Boixo161608.00263,Aaronson16_FoundationsQuantumSupremacy}. 
%Analytic proofs of anti-concentration were only known for some IQP-circuit families \cite{BremnerPRL117.080501,BremnerPRL117.080501,SparseNoisySupremacy} prior to this work. 
Here, we ran exact simulations of random circuits with up to 20 logical qubits to test Conjecture\mot{}\ref{conj:AntiConcentration} (appendix \ref{G}), and observed, first, that the anti-concentration ratio of  Eq.\~(\ref{eq:AntiConcentrationProb}) quickly converges to $1/e$ with the system-size; and, second, that measurement outcomes are Porter-Thomas \cite{PorterThomas} (i.e., exponentially) distributed, which is a signature of chaotic Haar-random unitary processes
\cite{HaakeChaos,EmersonPseudoRandom,Emerson05ConvergenceRandomCircuits,brown2008,Boixo161608.00263}. 

Previously, Refs.\ \cite{BremnerPRL117.080501,SparseNoisySupremacy,Bremner17SupremacyReview} argued that anti-concentration of measurement outcomes on an $n \times n$ lattice should require $\Omega(n)$ physical depth on 2D NN layouts in order not to induce a violation of the counting exponential
time hypothesis\mot{}\cite{Brand_et_al:LIPIcs:2017:6942,Curticapean2015}. This contrasts with the constant-depth nature of our proposal. To clarify this discrepancy, we note that our numerical evidence for anti-concentration is for logical $n$-qubit 1D circuits~(\ref{eq:TranslationInvariantGateset}) of depth $O(n)$ (Fig.\ \ref{fig:circuits}), which implies anti-concentration of the corresponding constant-depth evolution on a lattice of size $n \times O(n)$. Because this encoding introduces a linear overhead factor $O(n)$ there is no contradiction with  Refs.\ \cite{SparseNoisySupremacy,Bremner17SupremacyReview}. More critically, the observed signatures of anti-concentration rule out a potential efficient classical simulation of our schemes via sparse-sampling methods \cite{Schwarz13_Sparse,SparseNoisySupremacy}.

% \juan 

% We highlight, above,  we present conjectures Conjectures\mot{}\ref{conj:ConjecturePolyHiear}-\ref{conj:IsingAverageComplexity} in the simplest and most-intuitive forms we know. Yet, our result Theorem\ref{thm:Main} holds given significantly weaker versions of this conjectures, which we discuss in Appendix\mot\ref{app:Conjectures}. \black

We end this section with a remark: Closest to our work is the approach of Ref.\ \cite{Gao17SupremacyIsing}, which is also an NNTI non-adaptive MBQC proposal albeit with larger resource requirements (see Table \ref{tab:resources} for a comparison). 
\dom{ 
What is more, it requires a stronger hardness assumption with regards to the required level of approximation in that it introduces a variation of Conjecture \ref{conj:IsingAverageComplexity} with a less-natural inverse-exponential additive error (cf. Appendix\mot\ref{app:Conjectures}). %The latter encapsulates Conjectures \ref{conj:IsingAverageComplexity}-\ref{conj:AntiConcentration} as presented here. 
%}
}
Furthermore, Refs.\ \cite{Boixo161608.00263,SparseNoisySupremacy,Aaronson16_FoundationsQuantumSupremacy} gave non-TI schemes based on time-dependent NN random circuits acting on square lattices: the latter approaches require  less qubits, but also circuits of polynomial depth. In ours and in that of Ref.\ \cite{Gao17SupremacyIsing}, circuit depth is traded with ancillas and kept constant. For ours and that of Ref.\ \cite{Gao17SupremacyIsing}, efficient certification protocols also exist and can be used to determine if the experiment has actually worked, as discussed below.

\section{Efficient certification of the final resource states}\label{sec:certification} 

It is key to all schemes proposed that the correctness of the final
resource-state preparation in the quantum simulation can be efficiently and rigorously certified. Since the prepared state is the ground state of a gapped
and frustration-free parent Hamiltonian $H_{\text{parent}} = \sum_i h_i$,
Ref.~\cite{Hangleiter} gives a scheme involving local measurements only that
certifies the closeness of the prepared state $\rho$ to anticipated state
$\ket{\Psi_\beta}\bra{\Psi_\beta}$ immediately before measurement in terms of
an upper bound on the trace-distance $\Vert \ket{\Psi_\beta}\bra{\Psi_\beta} -
\rho \Vert_1$ \cite{Hangleiter} (see also Ref.\ \cite{Cramer-NatComm-2010}). 
This directly yields an upper bound
on the $\ell_1$-norm distance between the respective measurement outcome distributions. 

The key idea of the protocol of Ref.\ \mot\cite{Hangleiter} is to estimate the energy $E_\rho = \tr [ \rho
H_{\text{parent}}]$. 
This yields a fidelity witness $F(\rho, \ket{\Psi_\beta}{\bra{\Psi_\beta}}) \geq
1 - E_\rho/\Delta$, where $\Delta$ is the gap of $H_{\text{parent}}$. 
This can be done, for example, by measuring the local Hamiltonian terms
$h_i$, which are five- (six-) body observables for architectures \ref{enum:Arc1}-\ref{enum:Arc2} (\ref{enum:Arc3}).  
\juan Ref.\mot\cite{Hangleiter} showed that this approach requires  $O(N^2\log N)$ samples of the state preparation $\rho$ to estimate a single term $\langle h_i \rangle$ with polynomial accuracy. Hence,   the full certification protocol requires $O(N^3\log(N))$ independent preparations of $\rho$ and five- (six-) body measurements. Though this scaling is efficient, its supra-cubic time-scaling and the complexity of the local measurements could render it impractical for near-term experiments with thousands of atoms.\black

%However, the protocol of Ref.~\cite{Hangleiter} applies generically to ground states of arbitrary local and gapped Hamiltonians, which leaves open the possibility of setting-dependent optimizations for a given experimental scenario
\juan{} We now introduce three optimizations to the protocol of Ref.\ \cite{Hangleiter}, analyzed in Appendix \ref{app:certification} (Lemmas\mot{}\ref{lemma:energy}-\ref{certification}). First, for any non-degenerate gapped local Hamiltonian with known ground-state energy $E_0\in O(1)$ and known gap $\Delta\in \Omega(1)$, we show that the sample complexity of the protocol can be reduced to $O(N^2)$ by exploiting parallel measurement sequences of commuting Hamiltonian terms. Second, we show how to implement this more resource-economical protocol using on-site measurements only by expanding the latter terms in a local product basis. Third, we introduce a few setting-dependent optimizations for the architectures \ref{enum:Arc1}-\ref{enum:Arc3} (Lemma\mot{}\ref{lemma:ArcHamiltonianDecomposition}), tailored to their underlying square lattice geometry, the explicit tensor product  structure of the parent Hamiltonian of their pre-measurement states (Appendix\mot\ref{app:parentHamiltonians}), and their translation-invariant symmetry. 
%\juan First, one can measure the local Hamiltonian terms using on-site  measurements only in all architectures \ref{enum:Arc1}-\ref{enum:Arc2} (\ref{enum:Arc3}).  Second, using a bipartition of the Hamiltonian terms $h_i$ into two groups that do not contain neighboring terms, nearly $N/2$ of these can be measured in parallel, reducing the total sample complexity of the certification protocol to $O(N^2)$. 
We emphasize that,  as in the sampling measurement step \ref{Exp:Measurement} of our architectures, the optimized certification protocol relies on on-site measurements only.\black

In the three afore-mentioned cases (Appendix\mot\ref{app:certification}), the certification measurement pattern inherits the initial symmetry of the preparation step (Table  \ref{tab:resources}), i.e., DO for architecture \ref{enum:Arc1}, TI$_{(1,\infty)}$ for architecture\mot{}\ref{enum:Arc2}, TI$_{(\sqrt{2},\sqrt{2})}$ for architecture\mot{}\ref{enum:Arc3}. For architectures \ref{enum:Arc1}-\ref{enum:Arc2}, our setting resembles a certification protocol for preparing a family of hypergraph states given in Ref.~\cite{miller_quantum_2017}, though states and measurements therein are asymmetric. 
\begin{figure}
\centering
\includegraphics[width=0.45\linewidth]{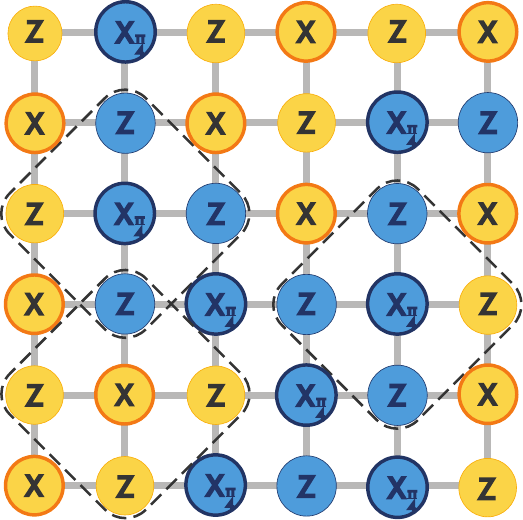}
\caption{\label{fig:certification} Certification protocol. We illustrate how our scheme works for architecture \ref{enum:Arc1}. Qubits with thick (thin) borders denote odd (even) sites in $V_\mathrm{odd}$ ($V_\mathrm{even}$). The figure illustrates a pattern of on-site measurements for one execution of the certification protocol that measures the energy of $H_\mathrm{odd}=\sum_{i\in V_\mathrm{odd}} h_i$ for the configuration of initial states in Fig.\ref{fig:Experiment}. On-site measurements are of type $Z, X$ and $X_{\uppi/4}$. Three hamiltonian terms whose joint measurement can be simulated from these single-qubit measurements are singled out by  the depicted dashed diamonds.}
\end{figure}

The above certification measurement is a slightly more difficult prescription than the experiments as such, yet as simple as one could hope, \juan    since we need to measure additional albeit single-qubit  bases\black. 
However, it is key to see that the correctness of the final-state preparation of an experiment can be certified even the absence of a known classical algorithm for sampling its
output distribution. This is also in contrast to other similar schemes, where no efficient rigorous scheme for certification of the final state before
measurement is known\mot{}\cite{AaronsonArkhipov13LinearOptics4,BremnerPRL117.080501,Boixo161608.00263,SparseNoisySupremacy,Aaronson16_FoundationsQuantumSupremacy}.

\je{Note that the certification protocol} \juan of Ref.\mot\cite{Hangleiter} is stated \black \je{in terms of noise-free measurements.
%
%The main limitation of the certification protocol is that measurements are
%assumed to be noise free. 
Nevertheless, for certain noise models it can be shown that rigorous certification is still possible \cite{Gao17SupremacyIsing}. 
Moreover, it is not an unreasonable assumption that local measurements can be benchmarked to very high a precision. 
Most importantly, the protocol readily accepts imperfect measurements: The imperfect measurements can 
be seen as perfect measurements, preceded by a quantum channel reflecting the noise.
This quantum channel noise can equally well be seen as acting on the quantum state, reducing the
trace-norm closeness to the anticipated state.} \dom{Hence, certifying $\epsilon$-closeness of the state-preparation under the assumption of ideal measurements is equivalent to certifying $\epsilon' < \epsilon$-closeness of the state preparation using measurements preceded by a noise channel with $\diamond$-norm at most $\epsilon - \epsilon'$.
To set up a detection scheme living up to the
required error bounds (that \je{scales} inversely with the system size for each on-site measurement) is demanding, but not unrealistic.}

Last, we highlight that the property of the final-state preparation being certifiable is rooted in the
fact that the states prepared are ground states of gapped frustration-free
local Hamiltonian models. 
At the same time they are injective \emph{projected
entangled pair states (PEPS)} of  constant bond dimension
\cite{Verstraete04VBS_for_QC,Models1}. The protocols discussed here can hence
be seen as PEPS sampling protocols that generate samples from local measurement
on PEPS.

\paragraph*{Certification protocol.}

Let us now outline the precise certification protocol (analyzed in Appendix\mot\ref{app:certification}) including the required
quantum measurements and the post-processing of the measurement outcomes. 
We do so in three steps: first, we find the parent Hamiltonians for state
preparations $\ket{\Psi_\beta}$ in architectures \ref{enum:Arc1}-\ref{enum:Arc3}. 
\juan Second, we show how %in architectures \ref{enum:Arc1} and \ref{enum:Arc2}
on-site measurements are
sufficient to obtain a rigorous certificate. %(two-body terms in architecture \ref{enum:Arc3}). 
Finally, we  comment on the reduced sampling complexity $O(N^2)$ of the protocol. We refer the reader to Lemmas\mot{}\ref{lemma:energy}-\ref{certification}, Appendix \ref{app:certification}) for proofs of the results described below.\black

Observing that the resource states $\ket{\Psi_\beta}$ are stabilizer states,
all we need to do is find the appropriate stabilizers. The sum of the
stabilizers is then a parent Hamiltonian of $\ket{\Psi_\beta}$.  
For architectures \ref{enum:Arc1} and \ref{enum:Arc2} this yields (cf.\ App.~\ref{app:parentHamiltonians})
\begin{align}
    H_{\text{I,II}} & = -\sum_{i \in V}  \left( X_{\beta_i,i} \prod_{j:
(i,j) \in E} Z_j \right) \, , 
\label{stabilizer-1and2}
\end{align}
where $ X_{\beta_i,i} =\euler^{-\imun\frac{\beta_i}{2} Z}X_i\euler^{\imun\frac{\beta_i}{2} Z}$ is a rotated Pauli-$X$ operator
acting on site $i$ and $\beta_i$ is distributed as described
in \ref{Exp:Preparation}. 
Hence, the Hamiltonian consists of $N$ terms that are 5-local except at
the boundary of the lattice, where their locality is reduced to 4- or 3-local. 
In the specific case of architecture \ref{enum:Arc3}, 
a dangling-bond qubit is attached to each qubit via a $CT$ interaction. This yields a two-body term that replaces the
$X_{\beta_i,i}$ term in Eq.~\eqref{stabilizer-1and2} (see \mot{}App.~\ref{app:parentHamiltonians}).

The stabilizers $h_i = X_{\beta_i,i} \prod_{j:(i,j) \in E} Z_j$ in architectures\mot\ref{enum:Arc1}-\ref{enum:Arc2} can be measured using on-site measurements  in a demolition fashion, by first measuring  their tensor components and then multiplying their outcomes using classical post-processing. (This  procedure is reminiscent of the syndrome measurement of subsystem codes \cite{Poulin} and its correctness can easily be seen by decomposing each stabilizer into a eigenbasis.) 
\juan These act on distinct sites and can therefore be measured simultaneously. 
For the specific case of architecture \ref{enum:Arc3}, the local Hamiltonian terms  have on-site $Z$ factors as well as a two-body
component $CT X CT^\dagger$. \juan However, the purpose of measuring these stabilizers in the certification protocol of Ref.\ \cite{Hangleiter} (see Appendix \ref{app:certification}), is to estimate the average energy of the relevant parent Hamiltonian. To this end, we can,  w.l.o.g., expand $CT X CT^\dagger$ as a sum of product operators (Eq.\mot\ref{eq:FactorsofTwoBodyTerm}), measure the on-site factors appearing in this sum, and infer the target expected value using efficient classical post-processing. Hence, for our three architectures,  on-site measurements suffice.

%Specifically, for architectures \ref{enum:Arc1}-\ref{enum:Arc2}, this requires $Z$, $X$ and rotated $X$ measurements (see Fig.~\ref{fig:certification}).  For architecture \ref{enum:Arc3}, it suffices to measure on-site  $Z$ terms plus the list of single-qubit operators that appear in the expansion of the two-body term $CT X CT^\dagger$ in any product basis (see Eq. \ref{eq:FactorsofTwoBodyTerm}), and measure the associated on-site factors instead.

Finally, the sampling complexity can be reduced from supra-cubic to quadratic $O(N^2)$ 
by simultaneously measuring commuting stabilizers (directly, or reducing them to on-site measurements as outlined above) on the same state preparation, following a specific pattern (Fig.~\ref{fig:certification}).\black{}  Precisely, we can define a lattice 2-coloring $V=V_{\mathrm{odd}}\cup V_{\mathrm{even}}$ and
simultaneously measure $Z$ on all sites $i\in V_{\mathrm{odd}}$ and $X_{\beta_j,j}$ on every site $j\in V_{\mathrm{even}}$ (or vice-versa). 
Since our Hamiltonian is commuting, each measurement round allows us to sample from the output distribution of $H_{\mathrm{odd}}:=\sum_{i\in V_\mathrm{odd}}h_i$ ($H_{\mathrm{even}}:=\sum_{i\in V_\mathrm{even}} h_i$), as shown in Fig.~\ref{fig:certification}. Hence, roughly $\sim N/2$ terms of the form $h_i$ can now be measured in parallel. \juan A simple application of Hoeffding's bound then shows that we can estimate the expected energy of $H_{\mathrm{odd}}$ ($H_{\mathrm{even}}$) using $O(N^2,)$ samples. \black{} Our proof concludes by noting that $\tr(H\rho) =\tr(H_{\mathrm{odd}}\rho)+\tr(H_{\mathrm{even}}\rho)$.

\section{Conceivable physical architectures}

We now turn to discussing that the above assumptions are plausible in several physical architectures close to what is available
with present technology.
What we are considering are large-scale quantum lattice architectures on square lattices ${\cal L}$ with a quantum 
degree of freedom per lattice site. 
On the level of physical implementation, the most advanced family of
such architectures and the most plausible \je{for the anticipated system sizes} is that provided by {\it cold atoms in optical lattices} \cite{BlochSimulation}. 
In an optical lattice architecture, internal degrees of freedom are available with \emph{hyperfine levels}. Also, the
encoding in spatial degrees of freedom within double wells is in principle conceivable.
Large-scale translationally invariant controlled-Z interactions---precisely of the type as they are
required for the preparation of cluster and graph states \cite{RaussendorfPhysRevA.68.022312,Graphs}---are feasible via controlled 
collisions \cite{ControlledCollisions,ControlledCollisionsExperiment}. \je{Actually,
the discussion of controlled collisions \cite{ControlledCollisions} and hence the theoretical
underpinning of such quenched dynamics triggered work on cluster states and measurement-based
quantum computing and predates this development.}
Other interactions to nearest neighbors are also conceivable. Interactions such as spin-changing collisions 
for $^{87}$Rb atoms have been experimentally observed \cite{PhysRevLett.95.190405}. Controlled-$T$ gates
require a more sophisticated interaction Hamiltonian. The dangling bonds seem realizable making use of
optical superlattices \cite{Nascimbene-PRL-2012,Trotzky}.
Single sites---specifically of the sampling type considered here---can be addressed in optical lattice architectures via several methods. E.g.,
quantum-gas microscopes allow for single-site resolved imaging \cite{Bakr,Kuhr}, even though the type of single-site 
addressing required here remains a significant challenge. Optical superlattices \cite{Nascimbene-PRL-2012,Trotzky} 
allow for the addressing of entire rows of sites in the same fashion. Entire rows that contain a fixed particle number
neighboring ones left empty can already be routinely prepared \cite{GrossPrivate}. Also, disordered initial
states can be prepared \cite{BlochMBL,MBL2D,PhysRevX.7.041047}.

But also other architectures are well conceivable. This includes in particular large 
arrays of {\it semiconductor quantum dots} allowing for single-site addressing---a setting that has 
already been employed to simulate the Mott-Hubbard model in the atomic limit \cite{Pellegrini}---or 
polaritons or exciton-polariton systems in {\it arrays of micro-cavities} \cite{PhysRevLett.112.116402}. In 
this type of architecture, the addressing of entire rows is also particularly feasible.
{\it Superconducting architectures} also promise to allow for large-scale array structures of the type anticipated here\mot{}\cite{Superconducting,SuperconductingSimulation,SuperconductingSimulation2}. 
{\it Trapped ions} can also serve as feasible architectures \cite{Trapped}. None of the physical architectures realize all elements required to the necessary precision, but at the same time, the prescriptions presented here are comparably close to what can be done.

\section{Proof of hardness result}

In this section we prove Lemma~\ref{lemma:SharpPHardness} and Theorem~\ref{thm:Main}, and develop the main techniques of the paper. The section is organized as follows:
\begin{itemize}[leftmargin=17pt]
\item In section \ref{C}, we use MBQC techniques to develop mappings that allow us to recast architectures \ref{enum:Arc1}-\ref{enum:Arc3} as (computationally equivalent) MBQCs on 2D cluster states (as introduced in  \cite{RaussendorfPhysRevLett.86.5188}).
\item In section \ref{D}, we show that enhancing architectures \ref{enum:Arc1}-\ref{enum:Arc3} with the ability (or an oracle) to post-select the outcomes of random variables turns them as powerful as a post-selected universal quantum computer (as defined  in\mot\cite{Aaronson-ProcRS-2005}). 
\item In section \ref{E}, we prove Lemma \ref{lemma:SharpPHardness} using earlier findings and a new parallelization technique to implement the two-local ``dense'' IQP circuits of Ref.\ \cite{BremnerPRL117.080501} in linear depth on a 1D nearest-architecture. 
\item In section \ref{F}, we give the proof of our main result Theorem\mot\ref{thm:Main}. The proof makes use of Lemma~\ref{lemma:SharpPHardness} and Stockmeyer's Theorem~\cite{Stockmeyer85ApproxiationSharpP}. The latter is applied in an analogous way as in the Boson-sampling and IQP-circuit settings\mot\cite{AaronsonArkhipov13LinearOptics4,BremnerPRL117.080501} to show that if an efficient classical algorithm can approximately sample from the output distribution of architectures \ref{enum:Arc1}-\ref{enum:Arc3}, then an \textsf{FBPP}$^{\textsf{NP}}$ algorithm can approximate a large fraction of the amplitudes in (\ref{eq:Probabilities=PartFunct}), if the latter are also sufficiently anti-concentrated. By Conjectures \ref{conj:IsingAverageComplexity}-\ref{conj:AntiConcentration}, the latter algorithm can solve any problem in \textsf{P}$^\textsf{\#P}$, which contains \textsf{PH} via Toda's theorem \cite{toda}. This implies a collapse of the Polynomial Hierarchy to its 3rd level. 
\end{itemize}

\subsection{Mapping architectures \ref{enum:Arc1}-\ref{enum:Arc3} to cluster state MBQCs}\label{C}

We show that any architecture \ref{enum:Arc1}-\ref{enum:Arc3} can be mapped via a bijection  to a computationally-equivalent sequence of X-Y-plane single-qubit measurements on the  2D cluster state\mot{}\cite{RaussendorfPhysRevLett.86.5188}. Below,  $T:=\mathrm{diag}\left(1,\euler^{\imun\uppi/4}\right)$ and $\sqrt{T}:=\mathrm{diag}\left(1,\euler^{\imun\uppi/8}\right)$.
First, note that (via teleportation)  the effect of measuring a dangling-qubit  (if present) is equivalent  to generating a uniformly-random classical bit $b\in\{0,1\}$ and, subsequently, implementing the gate $T^b$  onto its neighbor;  we can thus replace all dangling-bond qubits  by introducing a uniformly-random  measurement of  $X$ or $X_{-\frac{\uppi}{4}}=T^\dagger XT\propto X-Y$ on every primitive qubit. Furthermore, we can re-write the input  $\ket{\psi_\beta}$ in \ref{Exp:Preparation} as 
\begin{equation}\label{app:eq:InputState}
\ket{\psi_\beta}=\bigotimes_{i=1}^N \sqrt{T}^{kb_i}\ket{+}^{\otimes N},
\end{equation}
where $\ket{+}\propto\ket{0}+\ket{1}$, $k\in\{1,2\}$  and  $b=(b_1,\ldots,b_N)$ is a random bit-string defined via $b_i:=\beta_i/\theta$, with $\beta,\theta$ as in~\ref{Exp:Preparation}. Since $\sqrt{T}$ gates in (\ref{app:eq:InputState}) commute with the Hamiltonian~(\ref{eq:HardIsingModels}) and their effect is unobserved by $Z$ measurements, they can be propagated out of the experiment by measuring $X_{-\frac{\uppi}{8}}=\sqrt{T}^{b_i\dagger}X_i\sqrt{T}^{b_i}$ instead of $X$ on every  primitive qubit $i\in V$.
Combining these facts, we obtain the following mappings:
\begin{enumerate}[label=\textnormal{(C\arabic*)}]
\item Architectures \ref{enum:Arc1} and \ref{enum:Arc3} are computationally equivalent to a quantum circuit that prepares a 2D cluster state on their underlying primitive square lattice and measures $\{X,X_{-\frac{\uppi}{4}}\}$ randomly on each vertex.\label{app:MapMBQCA1A3}
\item Architecture \ref{enum:Arc2} is computationally equivalent to an analogous circuit of random  $\{X,X_{-\uppi/8}\}$ single-qubit measurements, which chooses the latter measurements to be identical along the columns of the 2D cluster state.\label{app:MapMBQCA2}
\end{enumerate}

\subsection{Universality of architectures \ref{enum:Arc1}-\ref{enum:Arc3} for postselected measurement based quantum computation}\label{D}

For each of our architectures \ref{enum:Arc1}-\ref{enum:Arc3}, we prove that the ensemble $\{p_\beta,\ket{\Psi_\beta}\}_\beta$ is a universal resource for postselected MBQC w.r.t.\ the measurements in step \ref{Exp:Measurement}. Precisely, this means that if the ability to post-select the outcomes of the experiment's random variables (the qubit outcomes in step \ref{Exp:Measurement} and the random vector $\beta$) is provided as an oracle \cite{terhal2004adaptive,Aaronson-ProcRS-2005}, then it is possible to implement any poly-size quantum circuit~\cite{Nielsen} with arbitrarily high-fidelity in a subregion of the lattice using  (at most) polynomially-many qubits. Our proof is constructive and shows how to simulate universal circuits of Clifford+$T$  gates \cite{Boykin00_Clifford+T} via postselection.

Below, we call a quantum circuit  \emph{1D homogeneous} if it consists of 1D  nearest-neighbor gates that and  parallel operations are identical modulo  translation. The latter do not need to be translation-invariant: e.g., an arbitrary $S$-size 1D nearest-neighbor circuit can be serialized to be 1D homogeneous in depth $O(S)$. In Fig.~\ref{fig:LinearDepthIQP} below, we give an example of an $S$-size IQP circuit that can be implemented in depth $O(\sqrt{S})$ (by bringing single-qubit gates to the end). 1D homogeneous circuits, as defined here, can be regarded as examples of quantum cellular automata \cite{Raussendorf05_QCA_universalQC}.
\begin{lemma}[Postselected universality] \label{lemma:PostSelection}  Let $V$ be an $n$-qubit  $D$-depth 1D homogeneous  circuit of Clifford+$T$ gates. Then, for any architecture \ref{enum:Arc1}-\ref{enum:Arc3}, it is possible to prepare the right-most primitive qubits of an $O(n)\times O(Dn)$-qubit lattice on a state $\ket{\psi}:=\left(V\ket{0}^{\otimes n}\right)\ket{0}^{\otimes r}$, $r\in O(n)$,  using postselection.
\end{lemma}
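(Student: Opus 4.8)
The plan is to reduce everything to the cluster-state picture of Section~\ref{C} and then run a standard measurement-based computation, using postselection to play the two roles that adaptivity plays in ordinary MBQC. By the equivalences \ref{app:MapMBQCA1A3}--\ref{app:MapMBQCA2}, each architecture \ref{enum:Arc1}--\ref{enum:Arc3} is the same, up to a bijection of random variables, as preparing a 2D cluster state on the primitive lattice and measuring every vertex in the $X$--$Y$ plane at an angle drawn from $\{0,-\tfrac{\uppi}{4}\}$ (for \ref{enum:Arc1}, \ref{enum:Arc3}) or from $\{0,-\tfrac{\uppi}{8}\}$ constant along columns (for \ref{enum:Arc2}). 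First I would use the postselection oracle on the random vector $\beta$ (equivalently, on the induced angle choices and the dangling-bond bits) to fix, site by site, exactly which of the two angles is measured, turning the random pattern into any chosen deterministic one compatible with the architecture's symmetry. Second, I would postselect all qubit outcomes to the $+1$ eigenvalue; this sets every Pauli byproduct operator to the identity, so that the non-adaptive pattern implements the intended unitary deterministically, without the feed-forward corrections that adaptive MBQC would otherwise require.

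With the byproducts trivialized, the effective logical gate set is read off from the available angles. An $X$ measurement (angle $0$) propagates a logical wire through a Hadamard, while a measurement at angle $-\tfrac{\uppi}{4}$ implements an operation of the form $H R_z(\pm\tfrac{\uppi}{4})$, i.e.\ a $T$-type rotation; the cluster-state edges between neighbouring wires supply $CZ$ gates. These generate the Clifford$+T$ set, which is universal \cite{Boykin00_Clifford+T}, so I would invoke the explicit translation-invariant and $X$--$Y$-plane MBQC schemes of Refs.~\cite{Raussendorf05QC_TIOperations_1DChain,Mantri16MBQC_XYmeasurements} to realize any $V$. For architecture \ref{enum:Arc2}, where the angle must be constant along each column, the same references apply directly: their column-homogeneous measurement patterns are exactly what a $1$D \emph{homogeneous} circuit produces once its translation-related per-layer gates are laid out along the lattice columns, and $\{0,-\tfrac{\uppi}{8}\}$ still suffices since two $\tfrac{\uppi}{8}$ rotations compose to a $T$.

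Finally I would account for the lattice size. Placing the $n$ logical qubits as $n$ horizontal wires gives $O(n)$ rows, with a constant number of extra rows for the vertical bridges that realize the $CZ$ gates. Each gate of $V$ occupies $O(1)$ cluster columns, and the $D$-depth homogeneous layout spreads its $O(Dn)$ gates along the horizontal direction, so the whole pattern fits in an $O(n)\times O(Dn)$ lattice. As in cluster-state MBQC the computation flows left-to-right, so once all but the rightmost column has been measured the rightmost primitive qubits carry the logical output $V\ket{0}^{\otimes n}$, with the unused wires left in $\ket{0}^{\otimes r}$, $r\in O(n)$, exactly as claimed.

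The main obstacle I anticipate is concentrated in architecture \ref{enum:Arc2} and is twofold. One must check that the column-constancy restriction on the measurement angles is genuinely compatible with the desired universal pattern---this is where the $1$D homogeneity of $V$ is essential, since it guarantees the induced measurement pattern is itself column-translation-related---and one must verify that postselecting all outcomes to $+1$ occurs with nonzero probability and leaves the intended unitary intact after the byproduct operators are set to the identity. Both are routine given the schemes of Refs.~\cite{Raussendorf05QC_TIOperations_1DChain,Mantri16MBQC_XYmeasurements}, but they are precisely the steps that make the non-adaptive, symmetry-constrained setting go through.
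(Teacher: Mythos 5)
There is a genuine gap, and it sits exactly where the paper's own key idea lives. Your plan postselects \emph{all} measurement outcomes to the $+1$ eigenvalue so that every Pauli byproduct becomes the identity. But the byproducts are not a nuisance to be postselected away---they are the resource the paper's proof uses. After the mappings \ref{app:MapMBQCA1A3}--\ref{app:MapMBQCA2}, the physically available angles are only the \emph{negative} ones, $\{0,-\tfrac{\uppi}{4}\}$ or $\{0,-\tfrac{\uppi}{8}\}$, whereas the universality schemes you (and the paper) invoke, Refs.~\cite{Raussendorf05QC_TIOperations_1DChain,Mantri16MBQC_XYmeasurements} as packaged in Lemmas~\ref{lemma:lemma:MBQCX-YOnColumns}--\ref{lemma:MBQCX-YOnSites}, require \emph{both} signs $X_{\pm\uppi/8}$, $X_{\pm\uppi/4}$: e.g.\ the sequences \ref{app:GateSequence1}--\ref{app:GateSequence3} achieve selective addressing precisely by cancelling a global $U_Z(+\uppi/8)$ against a global $U_Z(-\uppi/8)$. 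The paper recovers the missing sign by postselecting outcomes to a \emph{nontrivial} pattern, exploiting that an unaccounted byproduct $X^{s}$ turns a subsequent $X_\theta$ measurement into an effective $X_{(-1)^s\theta}$ measurement (Fig.~\ref{fig:PostSelectingAngles}). Your all-$+1$ postselection forecloses exactly this mechanism, and your own notation betrays the problem: you write that an angle-$(-\tfrac{\uppi}{4})$ measurement implements ``$H R_z(\pm\tfrac{\uppi}{4})$'', but with all byproducts killed there is no $\pm$ freedom left---only $H R_z(-\tfrac{\uppi}{4})$.

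For architectures \ref{enum:Arc1} and \ref{enum:Arc3} this could be patched, since the single-sign logical gates have finite order and their inverses are powers of themselves, at a constant-factor cost in depth (a repair you would still need to state). For architecture \ref{enum:Arc2}, however, the gap is serious: the TI$_{(1,\infty)}$ constraint forces the \emph{physical} angle to be uniform along each column, so after your postselection the only logical operations per column are $E$ or $E\,U_Z(-\tfrac{\uppi}{8})$---sign-uniform, site-independent global rotations always accompanied by the forced entangling gate $E$. The scheme of Ref.~\cite{Raussendorf05QC_TIOperations_1DChain} is \emph{not} ``column-homogeneous'' in this sense; the paper's Lemma~\ref{lemma:lemma:MBQCX-YOnColumns} is explicitly \emph{quasi-periodic}, allowing the sign to vary from site to site within a column, and that freedom is what the byproduct-flipping trick supplies while respecting the column-uniform physical measurements. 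Your remark that ``two $\uppi/8$ rotations compose to a $T$'' does not rescue this, because consecutive columns interleave the non-commuting gate $E$, so rotations in different columns do not simply add. Asserting that ``the same references apply directly'' is therefore incorrect as stated; without the sign-recovery mechanism (or a genuinely new single-sign, column-uniform universality argument, which you do not give), the proof does not go through for architecture \ref{enum:Arc2}.
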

We highlight that the complexity of the simulation  in Lemma~\ref{lemma:PostSelection} scales with the depth of the input circuit (not the size), allowing to parallelize concurrent nearest-neighbor gates. To prove  this result, we  assumes basic knowledge of MBQC on cluster states~\cite{RaussendorfPhysRevLett.86.5188,RaussendorfPhysRevA.68.022312}. Additionally, we make use of two technical lemmas.\black{}
\begin{lemma}[Efficient preparation via MBQC]\label{lemma:lemma:MBQCX-YOnColumns} Let $V$ be an $n$-qubit $D$-depth 1D homogeneous  Clifford+$T$ circuit. Then, the state vector $\ket{\psi}:=\left(V\ket{0}^{\otimes n}\right)\ket{0}^{\otimes {3n-2}}$ can be efficiently prepared exactly via an MBQC of single-qubit $\{X,X_{\pm\uppi/8}\}$ measurements on an $(4n-2)\times O(Dn)$-qubit  2D cluster state, and even if measurements are constrained to act ``quasi-periodically'' as follows: for every column,  each of its qubits is measured in either the $X$ basis, or in one of the $X_{\pm\uppi/8}$ bases (where the sign can be picked freely on distinct sites). 
\end{lemma}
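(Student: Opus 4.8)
The plan is to realize the circuit $V$ through the standard toolbox of cluster-state MBQC~\cite{RaussendorfPhysRevA.68.022312}, restricting the measurement angles to the available set $\{0,\pm\uppi/8\}$. First I would recall the three elementary gadgets on a 2D cluster state: (i) a horizontal chain of qubits measured in the $X$ basis transports a logical qubit to the right while applying a Hadamard, so that consuming pairs of such sites realizes an identity wire; (ii) measuring a site in the X--Y-plane basis $X_\phi$ (in the notation $X_\theta:=\euler^{-\imun\frac{\theta}{2}Z}X\euler^{\imun\frac{\theta}{2}Z}$) injects the rotation $\euler^{-\imun\phi Z/2}$ followed by a Hadamard, up to Pauli byproducts; and (iii) a vertical cluster bond between two logical wires implements a $CZ$ between them for free. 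Together with the usual feed-forward rules these primitives are well documented, so the work is to assemble them under the stated constraints.

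Second, I would show that the angle set $\{0,\pm\uppi/8\}$ already generates Clifford+$T$. Up to a global phase $\euler^{-\imun(\uppi/8)Z/2}=\sqrt{T}$, whence $T=\sqrt{T}^{2}$, $S=\sqrt{T}^{4}$ and $Z=\sqrt{T}^{8}$ are obtained by chaining two, four and eight $X_{\uppi/8}$-measured sites along a wire, while $H$ comes from a plain $X$ measurement (gadget (i)) and $CZ$ from gadget (iii). Thus any single-qubit gate of $V$ is recompiled into $O(1)$ native rotations, and the homogeneous layered structure of $V$ (each parallel block being a translate of the same gate) is preserved by this substitution. The key bookkeeping point is byproduct propagation: an accumulated $X$-type byproduct conjugates $\euler^{-\imun\phi Z/2}$ into $\euler^{+\imun\phi Z/2}$, i.e.\ it flips the sign of the injected angle, which is exactly why both $X_{+\uppi/8}$ and $X_{-\uppi/8}$ must be available and why the required sign may differ from wire to wire, whereas the remaining $Z$-type byproduct only relabels the measurement outcome and is absorbed into classical post-processing. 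Choosing the physical sign according to the $X$-byproduct parity (adaptively, or equivalently by postselecting the byproducts to be trivial as later exploited in Lemma~\ref{lemma:PostSelection}) makes the preparation exact up to a final Pauli frame acting trivially on the output register.

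Third comes the explicit embedding. I would lay out $n$ horizontal logical wires separated by ancilla/spacer rows used to host the vertical $CZ$ bonds and to route the rotations; counting these rows fixes the width at $4n-2$, of which the $n$ computational wires emit $V\ket{0}^{\otimes n}$ at the right-most primitive qubits while the remaining $3n-2$ ancilla wires are returned to $\ket{0}$, matching $\ket{\psi}=(V\ket{0}^{\otimes n})\ket{0}^{\otimes 3n-2}$. Since $V$ has depth $D$, compiling each layer into a constant number of columns and accounting for the routing of the nearest-neighbor $CZ$ gadgets across the $n$ wires yields a horizontal extent of $O(Dn)$. The quasi-periodicity is then read off directly: because $V$ is 1D homogeneous, within each compiled column every wire calls for the same \emph{magnitude} of measurement angle (either $0$, an $X$ basis, or $\uppi/8$, an $X_{\pm\uppi/8}$ basis), with only the $\pm$ sign varying per site as dictated by the byproduct parities.

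I expect the main obstacle to be precisely reconciling exactness with the quasi-periodic constraint: one must verify that the per-site sign corrections demanded by the Pauli frame are always realizable within a column of fixed angle magnitude, that the $Z$-byproduct relabelings never force a physical angle outside $\{0,\pm\uppi/8\}$, and that the ancilla wires genuinely return to $\ket{0}$ rather than accumulating stray rotations. Pinning down the exact dimensions $4n-2$ and $3n-2$ likewise requires committing to one explicit gadget layout and carefully tallying the spacer rows at the boundaries; the remainder is routine cluster-state bookkeeping.
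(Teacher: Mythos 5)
Your gadget toolbox and the observation that the angle set $\{0,\pm\uppi/8\}$ generates Clifford+$T$ are fine as far as they go, but the step where you claim the quasi-periodicity can be ``read off directly'' from the homogeneity of $V$ is exactly where the difficulty of this lemma lives, and your layout does not deliver it. In a wire-based compilation the $n$ logical wires are interleaved with the $3n-2$ spacer rows; in any column where the logical wires require $X_{\pm\uppi/8}$ measurements, the constraint forbids measuring the spacer rows in $X$, so they must also be measured in the $\uppi/8$-type basis and accumulate rotations you have no mechanism to undo --- a worry you flag yourself but do not resolve. Worse, the paper's notion of 1D homogeneity includes serialized circuits in which a layer applies a gate to a \emph{single} logical qubit; under the column-uniform constraint every site in that column receives the same angle magnitude, and the per-site sign freedom only lets you flip between $+\uppi/8$ and $-\uppi/8$, not switch the rotation off, so unwanted rotations land on every other wire. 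Within one column nothing can cancel them, and across consecutive columns the interleaved Hadamards of the wire gadget turn the would-be cancelling rotation into an $X$-rotation, so naive pairing of opposite signs does not cancel either. Your construction simply has no mechanism for localizing a gate onto one wire while whole columns are measured with a uniform basis type.

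The paper's proof solves precisely this problem by not compiling $V$ wire-by-wire at all. It simulates, in MBQC, the translation-invariant 1D computation scheme of Ref.\ \cite{Raussendorf05QC_TIOperations_1DChain}: a chain of $M=4n-2$ physical qubits carries $n$ logical qubits at positions $2i-1$, the remaining $3n-2$ sites are buffered in $\ket{0}$ (this is where both constants come from, rather than from any tally of spacer rows), and every elementary operation is \emph{global} --- the entangler $E$, the Pauli layer $Y_{\mathrm{all}}$, and uniform rotations $U_Z(\pm\uppi/8)$, $U_X(\pm\uppi/8)$ acting on all sites. Gates on individual logical qubits are obtained from the mirror-symmetry sequences \ref{app:GateSequence1}-\ref{app:GateSequence3}, in which paired applications of opposite-sign global rotations, conjugated by powers of $E$ and by $Y_{\mathrm{all}}$, cancel everywhere except at the targeted site. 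In the cluster picture a column of $X$ measurements implements $E$, a column of $X_{\pm\uppi/8}$ measurements (signs chosen per site to counteract byproducts) implements $E\,U_Z(\pm\uppi/8)$, and $Y_{\mathrm{all}}$ is absorbed into the Pauli frame --- so column-uniformity of the basis type holds by construction. Each localized gate costs $O(n)$ columns, which is what produces the $O(Dn)$ horizontal extent. Without this (or an equivalent) localization mechanism, your argument cannot be repaired into a proof of the stated lemma.
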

\begin{lemma}[On-site efficient preparation via MBQC] \label{lemma:MBQCX-YOnSites}
Let $V$ be an $n$-qubit $D$-depth 1D homogeneous Clifford+$T$ circuit. Then, the state vector $\ket{\psi}=V\ket{0}^{\otimes n}$ can be efficiently prepared exactly via an MBQC of single-qubit $\{X,X_{\pm\uppi/4}\}$ measurements on an $n\times O(Dn)$-qubit  2D cluster state.
\end{lemma}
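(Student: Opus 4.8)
The plan is to obtain this lemma as a specialisation of a \emph{known} universal scheme for measurement-based quantum computation on the 2D cluster state with X-Y-plane measurements \cite{Mantri16MBQC_XYmeasurements,RaussendorfPhysRevA.68.022312}, restricting attention to the Clifford$+T$ gate set and verifying that only the three bases $\{X,X_{\pm\uppi/4}\}$ are ever required. First I would set up the standard circuit-to-cluster correspondence: assign one horizontal \emph{wire} (a row of cluster qubits) to each of the $n$ logical qubits, so that the cluster width equals $n$, and let the horizontal direction carry the gate sequence of the 1D homogeneous circuit $V$. I then recall the elementary MBQC gadgets on this cluster: a qubit measured in the $X$ basis teleports its logical qubit along the wire while applying a Hadamard, a qubit whose measurement observable is $X_\phi$ applies $H R_z(\phi)$ up to a Pauli byproduct, and a vertical cluster bond between two wires realises a $CZ$ gate.

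The next step is to check the gate set. Since Clifford$+T$ is generated by $\{H,T,CZ\}$ (with $S=T^2$), it suffices that each generator be implementable within $\{X,X_{\pm\uppi/4}\}$: the Clifford part, i.e.\ $H$ together with all $CZ$ gates (hence the entire cluster graph structure and all wiring), uses only $X$-basis measurements, while the non-Clifford gate $T=R_z(\uppi/4)$ is produced by a single $X_{\pm\uppi/4}$ measurement. No finer angle---in particular none of $\pm\uppi/8$---is ever needed, which is precisely the feature distinguishing this ``on-site'' lemma from Lemma~\ref{lemma:lemma:MBQCX-YOnColumns}.

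The main obstacle, and the reason both signs $X_{\pm\uppi/4}$ must remain available site by site, is the byproduct bookkeeping at the non-Clifford gates. After each measurement the logical register carries an accumulated Pauli $\Sigma$ that is a classically computable function of the outcomes; Pauli-$Z$ byproducts commute with $R_z$ and are harmless, but a Pauli-$X$ byproduct obeys $X R_z(\uppi/4)=R_z(-\uppi/4)X$, so it flips the sign of the intended rotation. I would resolve this by the usual adaptive feed-forward: at each $T$-gadget the sign of the $X_{\pm\uppi/4}$ measurement is chosen according to the $X$-component of the byproduct propagated up to that site. Because this only ever toggles the \emph{sign} and never the magnitude of the angle, the pattern stays within $\{X,X_{\pm\uppi/4}\}$, and the logical evolution becomes exactly $V$ up to a final, classically tracked Pauli on the output register---which is the sense in which $\ket{\psi}=V\ket{0}^{\otimes n}$ is ``prepared exactly''. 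The freedom to pick this sign independently on each site is exactly the ``on-site'' character asserted by the lemma.

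Finally I would read off the dimensions and efficiency. One row per logical qubit fixes the width at $n$. Expanding each of the $D$ homogeneous depth layers into its nearest-neighbour single-qubit and $CZ$ gadgets consumes $O(n)$ columns per layer (there are $O(n)$ gates per layer, each gadget occupying a constant number of columns), so the time direction has length $O(Dn)$, yielding an $n\times O(Dn)$ cluster. Since $nD$ is polynomial, the cluster is of polynomial size and the byproduct bookkeeping is polynomial-time classical post-processing, which establishes the claimed efficiency.
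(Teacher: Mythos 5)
Your proposal has a genuine gap, and it sits exactly where the difficulty of this lemma lies. The lemma requires the \emph{full} 2D cluster state of width exactly $n$ (one row per logical qubit) with measurements restricted to $\{X,X_{\pm\uppi/4}\}$. On such a cluster, vertical bonds are present between \emph{every} pair of adjacent wires in \emph{every} column, so measuring a column in the $X$ basis implements the global entangling gate $E=\bigl(\prod_{i} CZ_{i,i+1}\bigr)\bigl(\prod_i H_i\bigr)$ on all wires at once --- not a selectable pattern of $CZ$ gates. Your gadget ``a vertical cluster bond between two wires realises a $CZ$ gate'' is only valid if the graph state is tailored to the circuit (bonds placed only where $CZ$s are wanted), or if unwanted qubits can be deleted; but deletion requires $Z$ measurements (or $Y=X_{\uppi/2}$ for local-complementation-plus-deletion), neither of which is in the allowed set. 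Consequently your central claim that ``the Clifford part, i.e.\ $H$ together with all $CZ$ gates (hence the entire cluster graph structure and all wiring), uses only $X$-basis measurements'' fails: with $X$-only measurements on a width-$n$ cluster you can only realize the rigid quantum-cellular-automaton dynamics generated by $E$, not arbitrary selective Clifford wiring.

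This selectivity problem is precisely what the paper's proof solves by invoking Lemma~3 of Ref.~\cite{Mantri16MBQC_XYmeasurements}: there, a single $X_{\pm\uppi/4}$ measurement on a \emph{boundary} qubit of an $n\times(n+2)$ cluster block selectively implements one gate from the set $\euler^{\mp\imun\uppi Z_i/8}$, $\euler^{\mp\imun\uppi X_i/8}$, $\euler^{\mp\imun\uppi Z_jX_{j+1}/8}$, $\euler^{\mp\imun\uppi X_jZ_{j+1}/8}$, with commuting gates parallelized as in Lemma~\ref{lemma:lemma:MBQCX-YOnColumns}. Note that in this construction the entangling gates are themselves non-Clifford two-qubit rotations produced by $X_{\pm\uppi/4}$ measurements, and selective Clifford gates (e.g.\ a $CZ$ between a chosen pair of wires) are obtained only by \emph{composing} these $\uppi/8$-rotations --- the opposite of your picture in which Cliffords are free and only $T$ costs a rotated measurement. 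You cite the Mantri et al.\ result at the outset but then never use its content, substituting the naive wire correspondence for it; that substitution is what breaks. The byproduct bookkeeping and sign-adaptation you describe, as well as the $n\times O(Dn)$ size accounting, are fine, but they do not repair the missing mechanism for selective gates on a width-$n$ cluster without $Z$ measurements.
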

Lemma \ref{lemma:lemma:MBQCX-YOnColumns} is an MBQC implementation of a 1D quantum-computation scheme given in Ref.\ \cite{Raussendorf05QC_TIOperations_1DChain}. Lemma~\ref{lemma:MBQCX-YOnSites} follows from Lemma~3 in Ref.\  \cite{Mantri16MBQC_XYmeasurements} by using that commuting-gate measurement-patterns can be applied simultaneously in MBQC.
\begin{proof}[Proof of Lemma \ref{lemma:lemma:MBQCX-YOnColumns}]
We first show how to implement a universal set of gates that can be converted to the Clifford+$T$ gate-set. We begin by picking a translationally-invariant gate set with the desired property \cite{Raussendorf05QC_TIOperations_1DChain}
\begin{align}
&\left\lbrace  E := \left(\prod_{i=1}^{M-1} CZ_{i,i+1}\right) \left(\prod_{j=1}^M H_i\right),\quad Y_\mathrm{all}:=\prod_{j=1}^M Y_i,\right.\notag \\
& \left.\:\:\:\:  U_A(\alpha):=\prod_{j=1}^N \euler^{-\imun \frac{\alpha}{2} A_i},\quad \textnormal{ where } A\in\{X,Z\} \hfill 
\right\rbrace \label{app:eq:TranslationInvariantGateset}.
\end{align}
Above,  gates act on a one-dimensional chain of $M:=4n-2$; $H$ is the Hadamard gate; $CZ_{i,i+1}$ is the $CZ$ gate on qubits $i,i+1$;  $E$ is a global entangling gate; and $E^{M+1}$ implements a ``mirror'' permutation $i\rightarrow \overline{i}:=M+1-i$  of the qubits. The computation is encoded on  $n$ logical qubits with physical positions $[i]:=2i-1,1\leq i \leq n$. The remaining qubits are kept in the state $\ket{0}$. Ref.\  \cite{Raussendorf05QC_TIOperations_1DChain} shows how to implement generators for the Clifford+$T$ gate-set using the following sequences of\mot{}(\ref{app:eq:TranslationInvariantGateset}) operations:
\begin{enumerate}[label=\textnormal{(S\arabic*)}]
\item $\left(E^{M+1-i}Y_\textnormal{all}EY_\textnormal{all}E^{i-1}U_Z\left(\mp\frac{\uppi}{8}\right)\right)\allowbreak\left(E^{M+1-i}Y_\textnormal{all}EY_\textnormal{all}E^{i-1}U_Z\left(\pm\frac{\uppi}{8}\right)\right)$,
\label{app:GateSequence1}
\item $\left(E^{M-i}Y_\textnormal{all}EY_\textnormal{all}E^{i}U_X\left(\mp\frac{\uppi}{8}\right)\right)$ $\left(E^{M-i}Y_\textnormal{all}EY_\textnormal{all}E^{i}U_X\left(\pm\frac{\uppi}{8}\right)\right)$,   \label{app:GateSequence2}
\item $\left(E^{M-2-[i]}Y_\textnormal{all}EY_\textnormal{all}E^{{[i]+1}}U_X\left(\mp\frac{\uppi}{8}\right)E\right)$ $\left(E^{M-2-[i]}Y_\textnormal{all}EY_\textnormal{all}E^{[i]+1}U_X\left(\pm\frac{\uppi}{8}\right)E\right)$. \label{app:GateSequence3}
\end{enumerate}
Sequence \ref{app:GateSequence1} implements a  $\euler^{\mp\imun\frac{\uppi}{8} Z_{i}}$ gate; \ref{app:GateSequence2},   a  $\euler^{\mp\imun \frac{\uppi}{8}X_{i}}$ gate; and \ref{app:GateSequence3}, a logical $\euler^{\mp\imun \frac{\uppi}{8} X_{[i]}X_{[i+1]}}$ gate \cite{Raussendorf05QC_TIOperations_1DChain}.
We will now show that any of the above gate sequences can be implemented directly on an MBQC on an $(4n-2)\times O(n)$-qubit 2D cluster state with quasi-periodic $\{X,X_{\pm\uppi/8}\}$ measurements. W.l.o.g., we make $E$ (resp.\  non-entangling unitaries) act on even steps (resp.\ odd ones) by introducing identity gates when necessary. We now  reorder operations in the creation and measurement of the cluster state as indicated in Fig.\ \ref{fig:MBQC_TIScheme}: therein, balls denote qubits prepared in  $\ket{+}$; steps Fig.~\ref{fig:MBQC_TIScheme}.(1) and Fig.~\ref{fig:MBQC_TIScheme}.(2) implement $CZ$ gates for the preparation of the cluster state; and Fig.~\ref{fig:MBQC_TIScheme}.(3) implements a round of measurements.

In MBQC, $Y_\textnormal{all}$ gates can be treated as byproduct Pauli operators and do not need to be enacted~\cite{RaussendorfPhysRevA.68.022312}. Further, performing a periodic  measurement of $X$ in Fig.~\ref{fig:MBQC_TIScheme}.(3) implements the $E$ gate in (\ref{app:eq:TranslationInvariantGateset}). In turn, a quasi-periodic $X_{\pm\uppi/8}$ measurement, where observables' signs are chosen adaptively to counteract random byproduct operators, can be used to implement $E$ followed by a $U_Z(\pm\uppi/8)$ gate. Similarly, and last, we can implement $EU_X(-\uppi/8)$ by delaying the measurement to the next step and propagating a $U_Z(-\uppi/8)$ backwards: this works because $U_Z$ and $U_X$ never occur in subsequent odd steps in sequences \ref{app:GateSequence1}-\ref{app:GateSequence2}-\ref{app:GateSequence3}. We thus have an MBQC simulation of the translation-invariant computation in Ref.\ \cite{Raussendorf05QC_TIOperations_1DChain}.
\begin{figure}
\includegraphics[width=0.35\linewidth]{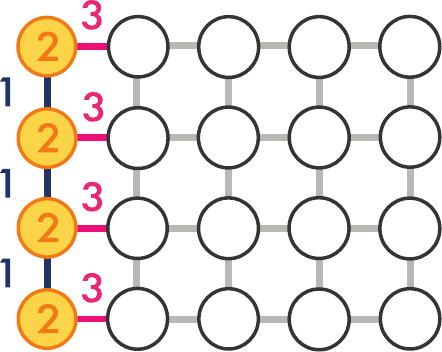}
\caption{Mapping the TI quantum computation scheme of Ref.\ \cite{Raussendorf05QC_TIOperations_1DChain} to a cluster-state MBQC.}
\label{fig:MBQC_TIScheme}
\end{figure}

Last, we show that an $n$-qubit $D$-depth homogeneous circuit of $\euler^{\mp\imun \uppi Z_{[i]}/8}$, $\euler^{\mp\imun \uppi X_{[i]}/8}$, $\euler^{\mp\imun \uppi X_{[i]}X_{[i+1]}/8}$ gates can be implemented on an $(4n-2)\times O(Dn)$-qubit cluster-state MBQC using the above protocol. Here, we invoke that measurement-patterns on  disjoint-regions of a cluster-state MBQC can be  simultaneously applied for  commuting logical gates (hence, also concurrent ones): the latter fact is easily verified in the MBQC's logical-circuit picture\mot\cite{RaussendorfPhysRevA.68.022312,Childs05Unified_MBQC_PRA}. 
\end{proof}
\begin{proof}[Proof of Lemma~\ref{lemma:MBQCX-YOnSites}]  Lemma 3 in Ref.\ \cite{Mantri16MBQC_XYmeasurements} shows that performing an $X_{\pm\uppi/4}$ measurement on a boundary qubit of an $n \times (n + 2)$ one can selectively implements any logical gate of form $\euler^{\mp\imun \uppi Z_{i}/8}$, $\euler^{\mp\imun \uppi X_{i}/8}$, $\euler^{\mp\imun \uppi Z_{j}X_{j+1}/8}$, $\euler^{\mp\imun \uppi X_{j}Z_{j+1}/8}$, $1\leq  i \leq n$, $1\leq  j \leq n-1$ on an $n$-qubit 1D chain. As in proof of Lemma~\ref{lemma:MBQCX-YOnSites}, measurement-patterns associated to commuting gates can be implemented simultaneously. The  result of Ref.\mot{}\cite{Mantri16MBQC_XYmeasurements} thus yields an exact  $n \times (n + 2)$ cluster-state MBQC implementation of any circuit of form
\begin{equation}
\prod_{i=1}^n \euler^{-\imun \frac{\uppi b_i}{8} X_{i}}   C \prod_{i=1}^n \euler^{-\imun \frac{\uppi a_i}{8} Z_{i}}, \quad b_i,a_i\in{0,1}
\end{equation} 
for any $n$-qubit 1D  commuting circuit $C$ of $\euler^{-\imun \frac{\uppi}{8} Z_{j}X_{j+1}}$,  $\euler^{-\imun \frac{\uppi}{8} X_{j}Z_{j+1}}$ gates. The proof follows as the one of Lemma\mot{}\ref{lemma:lemma:MBQCX-YOnColumns}.
\black{}
\end{proof}
We now proof the main claim of this section.
\begin{proof}[Proof of Lemma \ref{lemma:PostSelection}] Recall that our architectures can be re-casted as a non-adaptive cluster-state MBQCs via  mappings\mot{}\ref{app:MapMBQCA1A3}-\ref{app:MapMBQCA2}. Hence, it suffices to show how to prepare  $\ket{\psi}=(V\ket{0}^{\otimes n})\ket{0}^{\otimes r}$  exactly for some $r\in O(n)$ using two kinds of operations:
\begin{enumerate}[label=\textnormal{(D\arabic*)}]
\item Postselected random $\{X,X_{-\frac{\uppi}{4}}\}$  measurements on a $(n+r)\times O(Dn)$-qubit  cluster state\label{app:ProofPostSelectionLemma1}.
\item Postselected random $\{X,X_{-\frac{\uppi}{8}}\}$ measurements, chosen identically on columns, on a $(n+r)\times O(Dn)$-qubit  cluster state.\label{app:ProofPostSelectionLemma2}
\end{enumerate}
Statement  \ref{app:ProofPostSelectionLemma1} (resp.\ \ref{app:ProofPostSelectionLemma2}) covers the case for  \ref{enum:Arc1} and \ref{enum:Arc3} (resp.\ architecture~\ref{enum:Arc2}). 
To prove \ref{app:ProofPostSelectionLemma1}-\ref{app:ProofPostSelectionLemma2}, we  show that if an \textnormal{MBQC} scheme on a cluster state is universal w.r.t.\ a family of X-Y plane measurements $\{X_{\theta_i}\}_i$, $X_{\theta_i}=\euler^{-\imun\frac{\theta_i}{2} Z}X\euler^{\imun\frac{\theta_i}{2} Z}$, then, the reduced negative-angle subfamily $\{X_{-|\theta_i|}\}_i$ is universal for \textnormal{postMBQC}; in combination with Lemmas~\ref{lemma:MBQCX-YOnSites}-\ref{lemma:lemma:MBQCX-YOnColumns}, the claims follow. Recall that any non-final measurement in cluster-state MBQC\mot{}\cite{RaussendorfPhysRevLett.86.5188,RaussendorfPhysRevA.68.022312} produces a uniformly random outcome $s\in\{0,1\}$ (cf.\ section \ref{E}, Eq.~(\ref{app:eq:MarginalProbs}) for an explicit formula), whose effect in the logical circuit is to introduce a random byproduct Pauli operator $X^s$ on its associated qubit-line. If  not accounted for (e.g., by adapting the measurement basis) and an  $X_{\theta}$ is subsequently performed, the latter effectively implements a $X_{(-1)^s\theta}$ measurement: this can be seen by propagating $X^s$ forward in the circuit using conjugation relationships, and it is illustrated in Fig.~\ref{fig:PostSelectingAngles}. 
\end{proof}

\begin{figure}
\begin{center}
\includegraphics[width=0.8\linewidth]{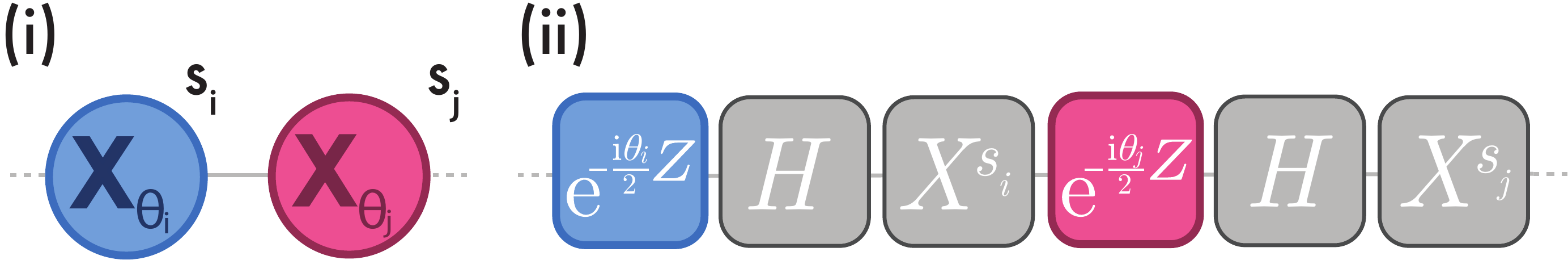}
\end{center}
\caption{(i) Measurement of  $X_{\theta_i},X_{\theta_j}$ on an edge of a 1D cluster state:  the outcomes $s_i,s_j\in\{0,1\}$ are uniformly random. (ii) The associated logical circuit: the byproduct operator $X^{s_i}$ can be propagated forward in circuit  by substituting $\theta_j$ with $(-1)^{s_1}\theta_j$. The argument extends to the full cluster by induction, choosing the 1st qubit to be measured in the $X$ basis (this fixes the input of the logical circuit and does not change its universality properties). The 2D cluster-state case is analogous~\cite{RaussendorfPhysRevLett.86.5188,RaussendorfPhysRevA.68.022312}.}
\label{fig:PostSelectingAngles}
\end{figure}

\subsection{\textsf{\#P}-hardness of approximating output probabilities (proof of Lemma \,\ref{lemma:SharpPHardness})}\label{E}

In this section we  prove Lemma \ref{lemma:SharpPHardness}. Our proof below shows that the ability to approximate the given Ising partition function can be used to approximate the output probabilities of the ``dense'' 2-local long-range IQP circuits of Ref.\  \cite{BremnerPRL117.080501}. The proof further exploits a new technique (Lemma\mot\ref{lemma:1D_IQP}) to implement $O(n^2)$-size long-range IQP circuits in $O(n)$-depth in a 1D nearest-neighbor architecture, which is asymptotically optimal. We regard Lemma\mot\ref{lemma:1D_IQP} of independent interest since the latter dense IQP circuits were argued in Ref.\  \cite{BremnerPRL117.080501} to exhibit  a quantum speedup but, to our best knowledge, linear-depth 1D implementations for them were not previously known. On the other hand, recently, it has been shown that a ``sparse'' sub-family of the latter IQP circuits can be implemented in depth $O(n \mathrm{log} n)$ in a 2D nearest-neighbor architecture \cite{SparseNoisySupremacy}.

\subsubsection*{A 1D linear-depth implementation of dense IQP circuits} We first derive our intermediate result for IQP circuits.
For any positive $n$, we let $\mathcal{C}$ be any  ``dense'' random  $n$-qubit IQP circuit whose gates are uniformly chosen from the set
\begin{eqnarray}\label{app:eq:IQPGateSet}
&\biggl\{\euler^{\imun \theta_{i,j} X_iX_j} , \euler^{\imun \theta_i X_i}\: : \: i,j\in\{1,\ldots,n\},\,\\
&\theta_i,\theta_{i,j} \in\left\{\tfrac{\uppi k}{8},k=0,\ldots,7\right\}\biggr\},
\end{eqnarray} 
which contains arbitrary long-range interactions in a fully-connected architecture.
\begin{lemma}[Dense IQP circuits]\label{lemma:1D_IQP}
Dense $n$-qubit IQP circuits of\mot{}(\ref{app:eq:IQPGateSet})  gates can be implemented in $\Theta(n)$-depth in a 1D nearest-neighbor architecture. 
\end{lemma}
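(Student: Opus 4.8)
The plan is to implement a dense $n$-qubit IQP circuit of gates from~(\ref{app:eq:IQPGateSet}) in $\Theta(n)$-depth on a 1D line by scheduling the $\binom{n}{2}$ two-qubit phase gates $\euler^{\imun\theta_{i,j}X_iX_j}$ using a systolic sequence of nearest-neighbor \textsc{swap}s, exploiting the crucial fact that all gates in the circuit commute (they are all diagonal in the $X$-eigenbasis). First I would conjugate the whole circuit by a transversal layer of Hadamards so that we may work in the computational basis, turning each $\euler^{\imun\theta_{i,j}X_iX_j}$ into a diagonal $\euler^{\imun\theta_{i,j}Z_iZ_j}$ gate and each single-qubit term into $\euler^{\imun\theta_iZ_i}$; the single-qubit phases can all be applied in one layer at the end and contribute only $O(1)$ depth, so the entire difficulty is realizing every pairwise interaction $\{i,j\}$ exactly once.

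The key step is a routing argument. Because the gates commute, it suffices to bring every pair of logical qubits into physical adjacency at some time step, apply the local $ZZ$ phase gate there, and eventually return (or merely permute) the qubits. The standard tool here is an odd-even transposition network: in each time step one applies nearest-neighbor \textsc{swap}s on alternating pairs $(1,2),(3,4),\dots$ and then $(2,3),(4,5),\dots$. Over $n$ such rounds this network realizes the reversal permutation and, more importantly, guarantees that \emph{every} pair of logical qubits becomes physically adjacent at least once. At each adjacency I would insert the corresponding $ZZ$ phase gate, folding it into the \textsc{swap} (the composite gate $\euler^{\imun\theta_{i,j}Z_iZ_j}\cdot\textsc{swap}$ is still a two-qubit nearest-neighbor gate, so no extra depth is incurred). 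This yields a circuit of depth $O(n)$, which gives the upper bound; the matching $\Omega(n)$ lower bound follows from a light-cone argument, since a qubit at one end must influence a qubit at the other end, and in a 1D nearest-neighbor architecture information propagates at most one site per layer.

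The main obstacle will be bookkeeping the byproduct permutations and verifying that each pair interacts \emph{exactly once} rather than an even number of times (in which case the accumulated phase would cancel, since $\euler^{\imun\theta}\euler^{\imun\theta}\neq\euler^{\imun\theta}$ in general). I would handle this by appealing to the well-known property of the odd-even transposition sort that, realizing the full reversal, it brings each of the $\binom{n}{2}$ pairs into adjacency a controlled number of times whose parity I can track; where a pair would otherwise meet twice, I simply withhold the phase gate on the second encounter. A clean way to make this rigorous is to observe that in the sorting network that reverses the order, each pair of elements is compared exactly once (this is precisely the inversion-counting property of a reversal), so applying the $ZZ$ phase only at those designated comparison steps realizes each interaction once and only once. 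The remaining verification—that the total depth is $\Theta(n)$ and that undoing the net permutation costs at most another $O(n)$ layers—is then routine.
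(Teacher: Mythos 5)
Your proposal is correct and takes essentially the same route as the paper: both implement the commuting two-qubit gates via an odd-even transposition (brickwork SWAP) network in which every pair of qubits becomes adjacent exactly once, folding the phase gates into the crossings, handling the single-qubit gates in spare slots or an extra layer, and supplying a matching $\Omega(n)$ lower bound. The only cosmetic differences are your Hadamard conjugation to work with $Z$-diagonal gates and your light-cone lower bound in place of the paper's gate-counting argument; neither changes the substance.
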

\begin{proof}
It is easy to see that $n^2$-size 2-local quantum circuits require $\Omega(n)$ depth to be implemented. Our proof gives a matching upper bound for the given IQP circuits.

Recall that IQP gates can performed in any order (as they commute). Hence, by reordering gates and redefining the $\theta_{i,j}$  angles, any given IQP circuit $\mathcal{C}$ can be put in a normal form $\mathcal{C}'$ that contains at most one single-qubit gate per qubit and one two-qubit gate per pair of qubits.  Our approach now is to introduce additional layers of nearest-neighbor SWAP gates following layers of two-qubit gates (Fig.\ref{fig:LinearDepthIQP}).
\begin{figure}
\centering
\includegraphics[width = 0.8\linewidth]{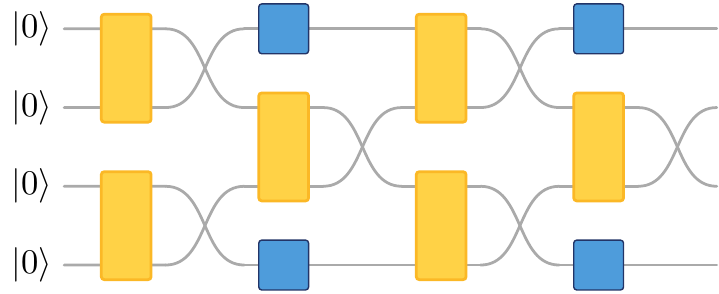}
\caption{Linear-depth implementation of dense IQP circuits (\ref{app:eq:IQPGateSet}). We illustrate our algorithm for $4$-qubits. Yellow- (resp.\ blue-) blocks implement the two- (resp.\ one-) qubit gates in (\ref{app:eq:IQPGateSet}). The  contains 4 single-qubit gates (resp.\ 6 two-qubit ones), which coincides with the number of vertices (resp.\ edges) of the complete graph $K_4$.}\label{fig:LinearDepthIQP}
\end{figure}
To illustrate the algorithm, we regard qubits as  ``particles'' moving up or down the line by the action of the SWAP gates. At a given step $t$, we apply a two-qubit IQP gate followed by a SWAP to all pairs of form $(2i-1,2i)$ for $1\leq i \leq  \lfloor n/2\rfloor$ when $t$ is even (resp.\ $(2i,2i+1)$ for $1\leq i \leq   \lfloor (n-1)/2\lfloor$ when $t$ is odd). By iterating this process $n$-times,  the qubit initially in the $i$th position in the line (with arbitrary $i$) travels to the $n-i+1$th position, meeting every other qubit exactly one time along the way due to the Intermediate Value Theorem; each 2-qubit gate of $\mathcal{C}'$ is implemented in one of these crossing. Furthermore, each qubit spends one step without meeting any qubit when they reach the line's boundary; at these points, single-qubit gates can be implemented.
\end{proof}

\subsubsection*{Proof of Lemma \ref{lemma:SharpPHardness}}

Below, we denote with $\Gamma:=\{\beta \in\{0,\theta\}^{mn}: p_\beta\neq 0\}$ the set of allowed configurations for $\beta$ in step \ref{Exp:Preparation}; $x\in\{0,1\}^n$ (resp.\ $y\in\{0,1\}^{N-n}, N=\mu m n$) be the measurement outcomes of the $n$ right-most primitive qubits (resp.\ remaining ones) after step \ref{Exp:Measurement}; and  $q(x,y,\beta)$ be the final total probability of observing  the values $x,y,\beta$. 

As in section \ref{D}, it will be convenient to recast our architectures as non-adaptive using mappings \ref{app:MapMBQCA1A3}-\ref{app:MapMBQCA2}. In this picture,  the following identity  readily follows from  standard properties of $X$-teleportation circuits \cite{Zhou00LogicGateConstruction,Childs05Unified_MBQC_PRA}:
\begin{align}\label{app:eq:MarginalProbs}
q(x,y,\beta)= & q(x, y| \beta) p_\beta=q(x | y,\beta)\frac{1}{2^{N-n}}\frac{1}{|\Gamma|}, \quad \\
&\textnormal{ for any } x\in\{0,1\}^n, y\in\{0,1\}^{N-n}, \beta \in\Gamma.\notag
\end{align}
Above, we used that\black{} $p_\beta$ is uniformly supported over $\Gamma$ (by design) as well as $q(y|\beta)=1/2^{N-n}$, which follows from  standard properties of $X$-teleportation circuits \cite{Zhou00LogicGateConstruction,Childs05Unified_MBQC_PRA}.
Note that  $\textnormal{prob}(a,b|\beta)$ in  (\ref{eq:Probabilities=PartFunct}) and  $q(x,y|\beta)$ as above are identical  probability distributions up to  a relabeling $(x,y)=\ell(a,b)$ of the random variables. Thus,  if $\widetilde{q}(\ell(a,b)|\beta)$ approximates $q(\ell(a,b)|\beta)$ up to relative error~$1/4+o(1)$, then, $|\widetilde{\mathcal{Z}}^{( \uppi a,\frac{\uppi}{4} b + \beta)}|^2:=\widetilde{q}(\ell(a,b)|\beta)2^{N_X+N_Z/2}$ approximates $|\mathcal{Z}^{( \uppi a,\frac{\uppi}{4} b + \beta)}|^2$ with the same error. Hence, the proof reduces to showing that approximating $q(x,y|\beta)$ for architectures \ref{enum:Arc1}-\ref{enum:Arc3}  is \textsf{\#P}-hard for the given error and $m\in O(n^2)$.

Next, recall that the output probabilities $a_{z}=|\bra{z_1,\ldots,z_k}\mathcal{C}|0\rangle^{\otimes k}|^2$ of an arbitrary $k$-qubit dense IQP circuits $\mathcal{C}$ as in (\ref{app:eq:IQPGateSet})  are \textnormal{\textsf{\#P}}-hard to approximate up to relative error $1/4+o(1)$ \cite{Fujii13_CommutingCircuitsIsing,Goldberg14_ComplexityIsingTutte}. Via Lemmas \ref{lemma:PostSelection} and ~\ref{lemma:1D_IQP}, the latter can further be implemented in  our architectures using lattices with $n\times O(n^2)$ qubits for some $n:= k + r$ with $r\in O(k)$; to apply Lemma \ref{lemma:PostSelection}, we can either decompose $\mathcal{C}$ exactly as a 1D homogeneous Clifford+$T$ circuits \cite{Soeken13RootsPauliMatrices}, or use the gadgets in the proofs of Lemmas~\ref{lemma:lemma:MBQCX-YOnColumns}-\ref{lemma:MBQCX-YOnSites} to directly implement the IQP gates. Now, let $\ket{\psi}_{y,\beta}$ denote the state vector of the $n$ right-most primitive qubits after observing $y,\beta$.  It follows from our discussion that $\ket{\psi}_{y,\beta}=\left(\mathcal{C}\ket{0}^{\otimes k}\right)\ket{0}^{\otimes r}$ for some efficiently-computable value of $y,\beta$.
Defining $\overline{z}:=(z_1,\ldots,z_k,0_{k+1},\ldots,0_{k+r})$, it follows that $a_x=q(\overline{z}|y,\beta)$. Further, if $\widetilde{q}(\overline{z},y|\beta)$ approximates $q(\overline{z},y|\beta)$ up to relative error ~$\eta>0$, then, $\widetilde{q}(\overline{z}|y,\beta):=\widetilde{q}(\overline{z},y|\beta)2^{N-n}$  approximates $q(\overline{z}|y,\beta)$ with the same error. Hence, approximating $q(\overline{z}|y,\beta)$ up to relative error $1/4+o(1)$ is \textnormal{\textsf{\#P}}-hard.\hfill\qedsymbol

\subsection{Hardness argument (proof of Theorem\,�\ref{thm:Main})}\label{F}

Finally, we prove Theorem \ref{thm:Main}. 
Similarly to Refs.\  \cite{AaronsonArkhipov13LinearOptics4,BremnerPRL117.080501}, we apply Stockmeyer's Theorem~\cite{Stockmeyer85ApproxiationSharpP} to relate the problems of approximately sampling from output distributions of quantum circuits to approximating individual output probabilities. 
Our proof is by contradiction: assuming that the worst-case \textsf{\#P}-hardness of estimating the partition functions of the Ising models (\ref{eq:RandomIsingModel}) extends to average-case (Conjecture \ref{conj:IsingAverageComplexity}) and that the  output probabilities of architectures \ref{enum:Arc1}-\ref{enum:Arc3} are sufficiently anti-concentrated (Conjecture \ref{conj:AntiConcentration}), we show that the existence of a classical algorithm for sampling from the latter within constant $\ell_1$-norm implies that an $\textsf{FBPP}^\textsf{NP}$ algorithm can solve \textsf{\#P}-hard problems; this leads to a collapse of the polynomial hierarchy to its third level in contradiction with  Conjecture
\ref{conj:ConjecturePolyHiear}.

Let $\Gamma:=\{\beta \in\{0,\theta\}^{mn}: p_\beta\neq 0\}$ be the set of allowed  $\beta$ configurations in step~\ref{Exp:Preparation}; $x\in\{0,1\}^n$ (resp.\ $y\in\{0,1\}^{N-n}, N=\mu m n$) be the measurement outcomes of the $n$ right-most primitive qubits (resp.\ remaining ones) after step \ref{Exp:Measurement}; and  $q(x,y,\beta)$ be the final total probability of observing  the values $x,y,\beta$. As a preliminary, we prove that if Conjecture~\ref{conj:AntiConcentration} holds, then the  probability distribution $q(x,y,\beta)$ associated to  random-input states and measurement outcomes of architectures \ref{enum:Arc1}-\ref{enum:Arc3} is anti-concentrated. We first note that $q(x|y,\beta)$ in (\ref{app:eq:MarginalProbs}) coincides with the output distribution of some $n$-qubit $O(m)$-depth circuit $\mathcal{C}_{y,\beta}$ of gates of form\mot{}(\ref{eq:TranslationInvariantGateset}): this is easily seen from mappings \ref{app:MapMBQCA1A3}-\ref{app:MapMBQCA2} and standard properties of $X$-teleportation \cite{RaussendorfPhysRevA.68.022312,Zhou00LogicGateConstruction,Childs05Unified_MBQC_PRA} (cf.\ also the next section and Fig.~\ref{fig:circuits}). For arbitrary $y\in\{0,1\}^{N-n}$, $\beta\in\Gamma$, let us now define 
\begin{align}\label{app:GammaAlpha}
\gamma_{y,\beta}&:=\frac{\left|\{x\in\{0,1\}^{n}:q(x|y,\beta)\geq 1/2^{n}\}\right|}{2^n},\\
\alpha&:=\frac{\left|\left\lbrace y\in\{0,1\}^{N-n},\beta\in\Gamma: \gamma_{y,\beta} \geq 1/e\right\rbrace\right|}{2^{N-n}|\Gamma|};
\end{align}
$\gamma_{y,\beta}$ is the  fraction of $\mathcal{C}_{y,\beta}$'s output probabilities larger than $1/2^n$, and $\alpha$ is the fraction of $\mathcal{C}_{y,\beta}$ circuits that fulfill (\ref{eq:AntiConcentrationProb}). Conjecture~\ref{conj:AntiConcentration}  states that $\gamma_{y,\beta}\geq 1/e$ for $m\in O(n)$. Consequently,  for $n\times O(n)$ lattices, (\ref{app:eq:MarginalProbs}) implies that
\begin{equation}
\label{app:AntiConcentrationTotal}
\mathrm{prob}_{x,y,\beta}\left( q(x,y,\beta)\geq \frac{1}{2^{N}|\Gamma|}\right)\geq 1/e.
\end{equation}Furthermore, since  $q(y,\beta)$  is uniformly distributed over its support, it also follows from (\ref{app:eq:MarginalProbs}) that
\begin{align}
 \mathrm{prob}_{x,y,\beta}\left( q(x,y,\beta)\geq \frac{1}{2^{N}|\Gamma|}\right)&=\sum_{y,\beta}\frac{\gamma_{y,\beta}}{2^{N-n}|\Gamma|} \label{eq:AntiConFullVSMarginal1}\\
& =\mathbb{E}_{y,\beta}\left(\gamma_{y,\beta}\right)\geq\frac{\alpha}{e},\notag
\end{align}
for a uniformly random $x,y\in\{0,1\}^{N},\beta\in\Gamma$. Eq.~(\ref{eq:AntiConFullVSMarginal1}) tells us that the robustness of the anti-concentration inequality\mot{}(\ref{app:AntiConcentrationTotal}) can be tested by computing the average value $\mathbb{E}_{y,\beta}\left(\gamma_{y,\beta}\right)$ of $\gamma_{y,\beta}$, or by estimating the fraction $\alpha$.  As it is discussed further in appendix \ref{G}, $\gamma_{y,\beta} \times e$ and $\alpha$ are expected to converge to one for universal 1D nearest-neighbor circuits as $n$ grows asymptotically \cite{EmersonPseudoRandom,brown2008,Harrow2009,Emerson05ConvergenceRandomCircuits} in the regime $m\in O(n)$ \cite{Kim13BallisticSpread,Hosur2016,Brandao2016LocalRandomQuantumCircuitsAreDesigns}. In appendix \ref{G}, Fig.\ \ref{fig:anticoncentration box}, we present  numerical evidence that $\mathbb{E}_{y,\beta}\left(\gamma_{y,\beta}\right)\rightarrow 1/e$, $\alpha\rightarrow 1$ in the asymptotic limit and a tight agreement for $n\geq 9$; more strongly, we also find that $\gamma_{y,\beta}$ is nearly $1/e$ for almost every uniformly-sampled instance for $n\geq 9$ and that $q(x|y,\beta)$  is Porter-Thomas distributed, which is a signature of Haar-random chaotic unitary processes 
\cite{PorterThomas,HaakeChaos,EmersonPseudoRandom,Emerson05ConvergenceRandomCircuits,brown2008,Boixo161608.00263}.

We are now ready to prove Theorem \ref{thm:Main}. For any architecture\mot{}\ref{enum:Arc1}-\ref{enum:Arc3},  we let $a$ denote an element of $\{0,1\}^{N}\times \Gamma$, 
and assume that the output distribution $p_{c}(a)$ of a classical BPP algorithm fulfills that
\begin{equation}\label{app:eq:EpsilonFar}
\sum_{a\in \{0,1\}^{N}\times \Gamma} |p_c(a)-q(a)| \leq \varepsilon,
\end{equation}
for a  constant $\varepsilon\geq 0$. By Stockmeyer's Theorem \cite{Stockmeyer85ApproxiationSharpP} and the triangle inequality,  there exists an $\textsf{FBPP}^\textsf{NP}$ algorithm that computes an estimate $\widetilde{p_c(a)}$ such that
\begin{equation}
|\widetilde{p_c(a)}-q(a)|\leq \frac{q(a)}{\ppoly{N}}+|p_c(a)-q(a)| \left(1+\tfrac{1}{\ppoly{N}}\right),\notag
\end{equation}
where we used $\log |\Gamma| \in O(N)$ to remove dependencies on $|\Gamma|$. From Markov's inequality and (\ref{app:eq:EpsilonFar}), we get that 
\begin{equation}
\textnormal{prob$_{a}$}\left(|p_c(a)-q(a)|\geq\frac{\varepsilon}{2^N|\Gamma|\delta}\right)\leq \delta,
\end{equation}
for any constant $0<\delta<1$, where  $a\in\{0,1\}^{N}\times \Gamma$ is picked  uniformly at random.
Hence,
\begin{align}\label{app:eq:MarkovApproximation}
|\widetilde{p_c(a)}-q(a)|\leq \frac{q(a)}{\ppoly{N}}+\frac{\varepsilon\left(1+o(1)\right)}{\delta 2^N|\Gamma|}
\end{align} 
with probability at least $1-\delta$ over the choice of $a$.

We now claim that Eqs.\ (\ref{app:eq:MarkovApproximation}) and (\ref{app:AntiConcentrationTotal})  simultaneously  hold for a single $q(a)$ with probability  $(1/e)\cdot(1-\delta)$, since  a classical description of Ref.\ \ref{enum:Arc1}-\ref{enum:Arc3} does not reveal to a classical sampler which output probabilities are \textsf{\#P}-hard to approximate:  hence, the latter cannot  adversarially corrupt the latter. This is manifestly seen at the encoded random circuit level, due to the presence of random byproduct operators of form $\prod_{i}X_i^{y_i}$ (with random $y_i$), which obfuscate the location of the \textsf{\#P}-hard probabilities from the sampler \cite{BremnerPRL117.080501,Bremner17SupremacyReview}. Hence, setting $\varepsilon=\gamma/8,\delta=\gamma/2, \gamma=1/e$, we obtain that
\begin{equation}
|\widetilde{p_c(a)}-q(a)|\leq\left(\frac{1}{4}+o(1)\right) q(a)
\end{equation}
with probability at least $\gamma(1-\gamma/2)>0.3$ over the choice  of $a$. Setting $\varepsilon = 1/22 < \gamma/8$, the above procedure yields an approximation $\widetilde{p_c(a)}$ of  $q(a)$ up to relative error $1/4+o(1)$. Using (\ref{eq:Probabilities=PartFunct}) and (\ref{app:eq:MarginalProbs}), we obtain an $\textsf{FBPP}^\textsf{NP}$ algorithm that approximates $|\mathcal{Z}^{\alpha,\beta}|^2$ with relative error  $1/4+o(1)$  for at least a 0.3 fraction of the instances. This yields a contradiction.\black{}

As final remarks, note that the above argument is  robust to small finite-size variations to the threshold $\gamma=1/e$ in Conjecture\mot\ref{conj:AntiConcentration}, Eq.\mot(\ref{eq:AntiConcentrationProb}) since the constants $\varepsilon=\gamma/5, \delta=\gamma/2$, and $\gamma(1-\gamma/2)$ have only linear and quadratic dependencies on  $\gamma$. 
Also, notice that Conjectures \ref{conj:IsingAverageComplexity} and \ref{conj:AntiConcentration} enter the above argument in order to allow for a constant additive error $\varepsilon$, which is key for a real-life demonstration of a quantum speedup. \je{Additive errors give rise to demanding, but not unrealistic prescriptions.}
However, in the ideal case where one assumes no-sampling errors, or  multiplicative sampling errors, our result holds even without these conjectures 
via the arguments in
Refs.\  \cite{terhal2004adaptive,BremnerJozsaShepherd08}.\black{}

\section{Conclusion}

In this work, we have established feasible and simple schemes for quantum simulation that  exhibit a superpolynomial quantum speedup with high evidence, in a
complexity-theoretic sense. As such, this work is expected to significantly contribute to 
bringing notions of quantum devices outperforming classical supercomputers closer to reality.
This work can be seen as an invitation towards a number of further exciting research directions: 
While the schemes presented may not quite yet constitute experimentally realizable blue-prints, it should be clear that steps already experimentally taken are very similar to those discussed. It seems hence interesting to explore detailed settings for cold atoms or trapped ions in detail, requiring little local control and allowing for comparably short coherence times. 
What is more, it appears obvious that further complexity-theoretic results on intermediate problems seem needed to fully capture the potential of quantum devices outperforming classical computers without being universal quantum computers. It is the hope that the present work can contribute to motivating such further work,
guiding experiments in the near future.

%
%\blue{} After completion of this work, four of us further found analytic anti-concentration bounds for similar hard-to-simulate quantum %architectures to the ones presented here \cite{Hangleiter17_Anticoncentration}. The latter are slightly less simple to implement, as %they requires engineering (4,2)-periodic Hamiltonians and preparations of more types of initial single-qubit states. Yet, the existence %of a quantum speed-up for the architectures of Ref.\ \cite{Hangleiter17_Anticoncentration}  relies on Conjectures %
%\ref{conj:ConjecturePolyHiear} and \ref{conj:IsingAverageComplexity} only. Additionally, Ref.\ 
%\cite{Morimae17_DQC1_hard_with_constant_error} proved an anti-concentration bound for the one clean qubit model of quantum %computation. \black{}

\section{Acknowledgements}
 e
We thank Scott Aaronson, Dan E.\ Browne, Andreas Elben, Bill Fefferman, Wolfgang Lechner and Ashley Montanaro for discussions. JBV thanks Vadym Kliuchnikov and Neil Julien Ross for helpful comments. This work was supported by the EU (AQUS), the Templeton and Alexander-von-Humboldt Humboldt Foundations, The ERC (TAQ), and the DFG (CRC 183, EI 519/7-1). RR is funded by NSERC, Cifar, IARPA and is Fellow of the Cifar Quantum Information Program.

\phantomsection

%\bibliography{database_QArc_for_QSimulation_and_QSupremacy_PRX_EPRINTS} 
%\input{QSupremacyVersionPRXSubmissionResubmittedFinal.bbl}

%merlin.mbs apsrev4-1.bst 2010-07-25 4.21a (PWD, AO, DPC) hacked
%Control: key (0)
%Control: author (0) dotless jnrlst
%Control: editor formatted (1) identically to author
%Control: production of article title (0) allowed
%Control: page (1) range
%Control: year (0) verbatim
%Control: production of eprint (0) enabled
%

\pagebreak
\newpage

\appendix

\begin{widetext}

\section{Full Hamiltonian (\ref{eq:HardIsingModels})}\label{A}

For any  architecture \ref{enum:Arc1}-\ref{enum:Arc3} (Fig.~\ref{fig:Experiment}), let $\mathcal{L}_\mathrm{P}=(V_\mathrm{P},E_\mathrm{P})$ be the sublattice of $\mathcal{L}$ containing all primitive qubits (i.e., the square lattice subgraph of $\mathcal{L}$). Further,  let  $\mathcal{L}_{\mathrm{DB}}=(V_\mathrm{DB},E_\mathrm{DB})$ be $\mathcal{L}$'s sublattice containing all dangling-bonds for architecture \ref{enum:Arc3} and the empty graph otherwise. For any $i\in V$, let $\mathrm{deg}_{\mathrm{P}}(i)$ (resp.\ $\mathrm{deg}_{\mathrm{DB}}(i)$) be the number of primitive (resp.\ dangling-bond) qubits connected to $i$ in $\mathcal{L}$. Then, the full Hamiltonian~(\ref{eq:HardIsingModels}) of the experiment reads
\begin{align}
H =\underbrace{\sum_{(i,j)\in E_{\mathrm{P}}} \frac{\uppi}{4} Z_i Z_j - \sum_{i'\in V_\mathrm{P}} \frac{\uppi}{4} \mathrm{deg}_\mathrm{P}(i')Z_{i'}}_{H_{\mathrm{CZ}}}+
\underbrace{\sum_{(k,l)\in E_{\mathrm{DB}}} \frac{\uppi}{16} Z_k Z_l -  \sum_{k'\in V_\mathrm{DB}} \frac{\uppi}{16} \mathrm{deg}_\mathrm{DB}(k')Z_{k'}}_{H_{\mathrm{CT}}}.\label{eq:FullHamiltonian}
\end{align}
Above, the Hamiltonian $H_{\mathrm{CZ}}$ (resp.\ $H_{\mathrm{CT}}$) implements a $CZ$ (resp.\ $CT$) gate on every edge of the bright (resp.\ dark) sublattice in Fig.~\ref{fig:Experiment}. Note that $H_{\mathrm{CT}}$ is not present in architectures \ref{enum:Arc1}-\ref{enum:Arc2}. Also, realize that $\mathrm{deg}_{\mathrm{P}}(i)$ takes value 4 on bulk qubits, 2 on the corners and 3 elsewhere on edges; for architecture \ref{enum:Arc3}, $\mathrm{deg}_{\mathrm{DB}}(i)$ takes value 1 everywhere.

\section{Mapping output probabilities to Ising partition functions}\label{B}

Let $\mathcal{L}_X=(V_X,E_X)$ (resp.\ $\mathcal{L}_Z=(V_Z,\emptyset)$) be the primitive-qubit square sublattice of $\mathcal{L}$ (resp.\ the disjoint union of all dangling-bond qubits), and pick $\mathcal{L}_X$ to be the lattice $\mathcal{L}_\mathrm{sq}$ in section \ref{main result}, Eq.\mot{}\ref{eq:RandomIsingModel}.  Let $\alpha:=\uppi a$, $\vartheta:=\beta+\frac{\uppi}{4}b$, where we let $b$ be the string of outcomes of the  $Z$ measurements  in architecture \ref{enum:Arc3}, and  define $b_i:=0,i=1,\ldots,N_Z$, by convention,  for architectures \ref{enum:Arc1}-\ref{enum:Arc2}. Further, let $\theta=\uppi/8$ for architecture \ref{enum:Arc2} and  $\theta=\uppi/4$ otherwise. We now prove equation (\ref{eq:Probabilities=PartFunct}) using formula~ (\ref{eq:FullHamiltonian}):
\begin{align}
|\bra{a,b} \euler^{-\imun H} \ket{\psi_\beta}|&
=|\bra{a,b} \euler^{-\imun (H + \sum_{i\in V} \frac{\beta_i}{2} Z_i) } \ket{+}^{\otimes N}|
=|\bra{a, b} \euler^{-\imun (\sum_{(i,j)\in E} J_{i,j}Z_iZ_j - \sum_{i\in V} (h_i-\frac{\beta_i}{2}) Z_i) } \ket{+}^{\otimes N}|\notag\\
&=\frac{1}{\sqrt{2}^{N_Z}}|\bra{a} \euler^{-\imun (\sum_{(i,j)\in E_X} \frac{\uppi}{4}Z_iZ_j -\sum_{i\in V_X}\sum_{(i,k)\in E_Z} (h_i-\frac{\beta_i}{2}- \frac{\uppi b_i}{8}) Z_i) } \ket{+}^{\otimes N_X}|\notag
\\
&=\frac{1}{\sqrt{2}^{N_Z}}|\bra{+} \euler^{-\imun\sum_{i\in V_X} \frac{\alpha_i}{2} Z_i}\euler^{-\imun (\sum_{(i,j)\in E_X} \frac{\uppi}{4}Z_iZ_j - \sum_{i\in V_X}\sum_{(i,k)\in E_Z} (h_i-\frac{\vartheta_i}{2}) Z_i) } \ket{+}^{\otimes N_X}|
\notag\\
&=\frac{1}{\sqrt{2}^{N_Z}}|(\bra{0}H)^{\otimes N_X}\euler^{-i H^{(\alpha,\beta)} }(H\ket{0})^{\otimes N_X}|=\frac{1}{\sqrt{2}^{N_Z}2^{N_X}}\left|\sum_{x,y\in\{0,1\}^{N_X}} \bra{x}\euler^{-i H^{(\alpha,\beta)} } \ket{y}\right|\notag\\
&=\left|\frac{\textnormal{tr}\left(\euler^{-\imun H^{(\alpha,\beta)}}\right)}{\sqrt{2}^{N_Z+2N_X}}\right|=
\frac{|\mathcal{Z}^{(\alpha,\beta)}|}{\sqrt{2}^{N_Z+2N_X}},\notag
\end{align}
where we have defined $\ket{+}:=(\ket{0}+\ket{1})/\sqrt{2}$ in the second step, and used that $H^{(\alpha,\beta)}$ is diagonal in the final one. We obtain (\ref{eq:Probabilities=PartFunct}) by squaring; therein, $\alpha_i\in\{0,\uppi\}$, $\vartheta\in\{0,\theta\}$ follows from the definition.

\section{Numerical evidence for anti-concentration of the output distribution}\label{app:numerical evidence}\label{G}
\vspace{-5pt}

In this appendix we present numerical evidence for the validity of Conjecture~\ref{conj:AntiConcentration}, which is exploited in the proof of Theorem~\ref{thm:Main} as discussed in section \ref{F}. Therein, we discussed that for any architectures \ref{enum:Arc1}-\ref{enum:Arc3}, given an initial $n$-row, $m$-column square lattice, there exists an $n$-qubit $D$-depth 
circuit family $\{\mathcal{C}_{\beta,y}\}_{y,\beta}$ of gates of form (\ref{eq:TranslationInvariantGateset}), with $D\in O(m)$, such that $q(x|y,\beta)=|\langle x | \mathcal{C}_{\beta,y}\ket{0}|^2$. If  the circuits $\{\mathcal{C}_{\beta,y}\}_{y,\beta}$ exhibit anti-concentration as in Conjecture~\ref{conj:AntiConcentration}, then $q(x,y,\beta)$ is anti-concentrated as in Eq.\mot(\ref{app:AntiConcentrationTotal}). This is used in section \ref{F} to turn an approximate classical sampler into an $\textsf{FBPP}^\textsf{NP}$ algorithm to approximate single output probabilities with high-accuracy. 

The concrete circuit families associated to  each architecture are derived below and depicted in  Fig.~\ref{fig:circuits}. 
To numerically test Conjecture~\ref{conj:AntiConcentration}, we have performed simulations of randomly generated circuits of gates of form (\ref{eq:TranslationInvariantGateset}) (for each circuit family) in LIQUiD  \cite{liquid} with up to 20 logical qubits. For each system size, we generated 100 random instances for circuits associated to $n\times n$ and  $n \times n^2$ lattices. For each instance, we evaluated exactly the fraction $\gamma_{y,\beta}$ (Eq.\ \ref{app:GammaAlpha}) of output-probabilities fulfilling\mot(\ref{eq:AntiConcentrationProb}). Our results are summarized in  Fig.\ \ref{fig:anticoncentration box}: therein, one can see that both for circuits associated to $n \times n$ and $n \times n^2$ lattices, this fraction quickly
approaches a constant $\gamma = 1/e$ with rapidly decreasing variance with
respect to the choice of circuits. We can conclude that, with very high probability, in a realization of the
proposed experiment the amplitude of the final state of the computation anti-concentrates

As discussed in Refs.\  \cite{BremnerPRL117.080501,SparseNoisySupremacy,Bremner17SupremacyReview}, it might seem a priori counter-intuitive that constant-depth nearest-neighbor architectures  anti-concentrate. However, the above connections between our architectures and random circuits shed key insights into why this behavior is actually natural. As shown in section \ref{D}, the random logical circuits of gates of form (\ref{eq:TranslationInvariantGateset}) encoded in our architectures  are universal for quantum computation. Universal random quantum circuits of increasing depth are known to approximate the Haar measure under various settings \cite{EmersonPseudoRandom,Emerson05ConvergenceRandomCircuits,brown2008,Harrow2009,Brandao2016LocalRandomQuantumCircuitsAreDesigns}. For 1D nearest-neighbor layouts, the latter are expected to reach a chaotic Porter-Thomas-distributed regime \cite{PorterThomas,HaakeChaos,EmersonPseudoRandom} in depth $D\in O(n)$ \cite{Kim13BallisticSpread,Hosur2016,Brandao2016LocalRandomQuantumCircuitsAreDesigns} (cf. \cite{Boixo161608.00263} for further discussion). As an additional piece of supporting evidence for anti-concentration, 
we numerically confirmed that our output-probabilities are close to being Porter-Thomas-distributed in $\ell_1$-norm (Fig.\ \ref{fig:tv distance pt}). Furthermore, our numerics are in agreement  with prior numerical works on MBQC settings \cite{brown2008} and other gate sets in 2D layouts \cite{Boixo161608.00263,Aaronson16_FoundationsQuantumSupremacy}.  

\emph{Circuit families.} For the sake of completeness, we spell out the logical circuits that are effectively implemented in architectures\mot\ref{enum:Arc1}-\ref{enum:Arc2}. The latter are derived  via mappings \ref{app:MapMBQCA1A3}-\ref{app:MapMBQCA2} and $X$-teleportation properties \cite{Zhou00LogicGateConstruction,Childs05Unified_MBQC_PRA}. Examples for $4 \times 2$ lattices are depicted in Fig.~\ref{fig:circuits}.
\begin{figure} 
    \includegraphics[width=.45\textwidth]{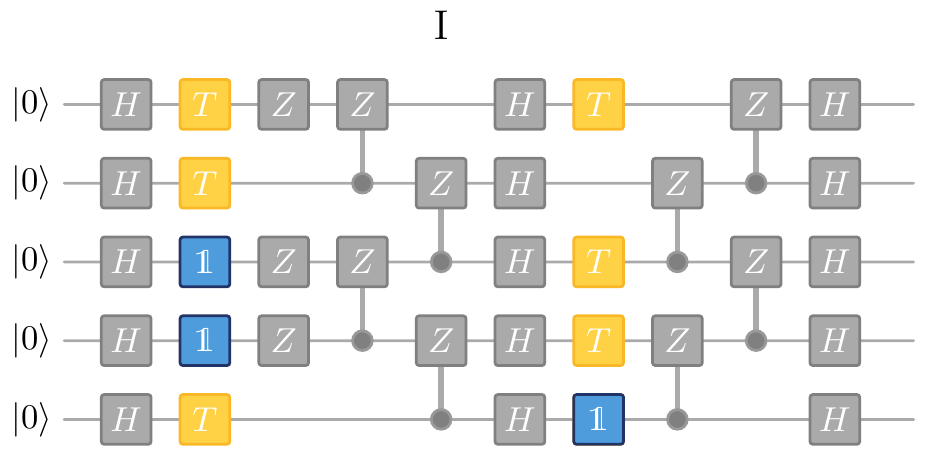}
    \qquad\qquad 
    \includegraphics[width=.45\textwidth]{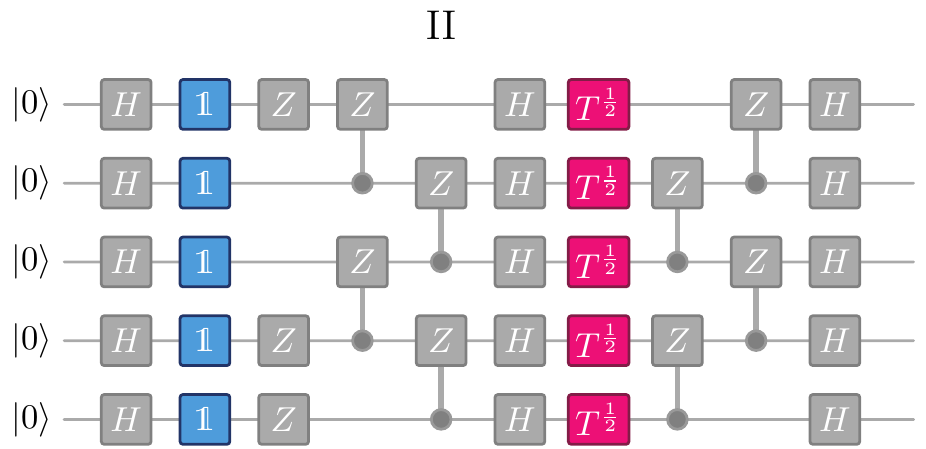}
    \caption{We show the logical circuit corresponding to architectures \ref{enum:Arc1} and
    \ref{enum:Arc3} (left) and \ref{enum:Arc2} (right) for a 5-row 2-column lattices prepared with
    uniformly random $\beta_{i} \in \{ 0, \pi/4\}$ (blue, yellow) (\ref{enum:Arc1} and \ref{enum:Arc3}) and  column-random $\beta_{i} \in \{ 0 , \pi/8 \}$ (blue, crimson) (\ref{enum:Arc2}). 
    At those qubits that have been prepared with $\beta_{i} = 0 $ an identity gate is applied, on those with $\beta_{i} = \pi/4$ a $T$ gate and on those  with $\beta_i = \pi/8$ a $\sqrt{T}$ gate.\label{fig:circuits}}
\end{figure} 
The logical circuit family corresponding to \ref{enum:Arc1} and \ref{enum:Arc3} (resp.\ to \ref{enum:Arc2})  is denoted $\mathcal{F}_{\text{DO}}$ (resp.\ $\mathcal{F}_{\text{col}}$). Let us now label primitive-lattice sites by row-column coordinates $[i,j]$. The circuits are generated inductively, starting from the left column $j=1$. Measurements are ordered from left to right. The computation begins on  the $\ket{+}^{\otimes n}$ state and proceeds as follows:
\begin{enumerate*}
\item Apply the gate $\exp{(\imun \beta_{[i,j]} Z_{[i,j]})}$ to qubit ${[i,j]}$, with $\beta_{[i,j]}$ chosen as in step \ref{Exp:Preparation} for \ref{enum:Arc1}-\ref{enum:Arc2}; for \ref{enum:Arc3}, we let $\beta_{[i,j]}:=s_{[i,j]}\uppi/4$,  where $s_{[i,j]}$ is the outcome after measuring the dangling-neighbor of $[i,j]$.
\item  If $j < m$,  apply a random  $Z_{[i,j]}^{a_{[i,j]}}$ gate to  every qubit $[i,j]$, where   $a_{[i,j]}$ is the outcome of the 
measurement  at site $[i,j]$.
\item Apply $CZ$ on all neighboring qubits.
\item Apply  a Hadamard gate to each qubit. 
\item If $j= m$, measure in the standard basis and terminate; otherwise increase $j:=j+1$.
\end{enumerate*}
% ----- Anti-concentration figure. ------------------------------------------ 
\begin{figure*}[t] % Anti-concentration
    \centering
    \includegraphics[width=0.95\textwidth]{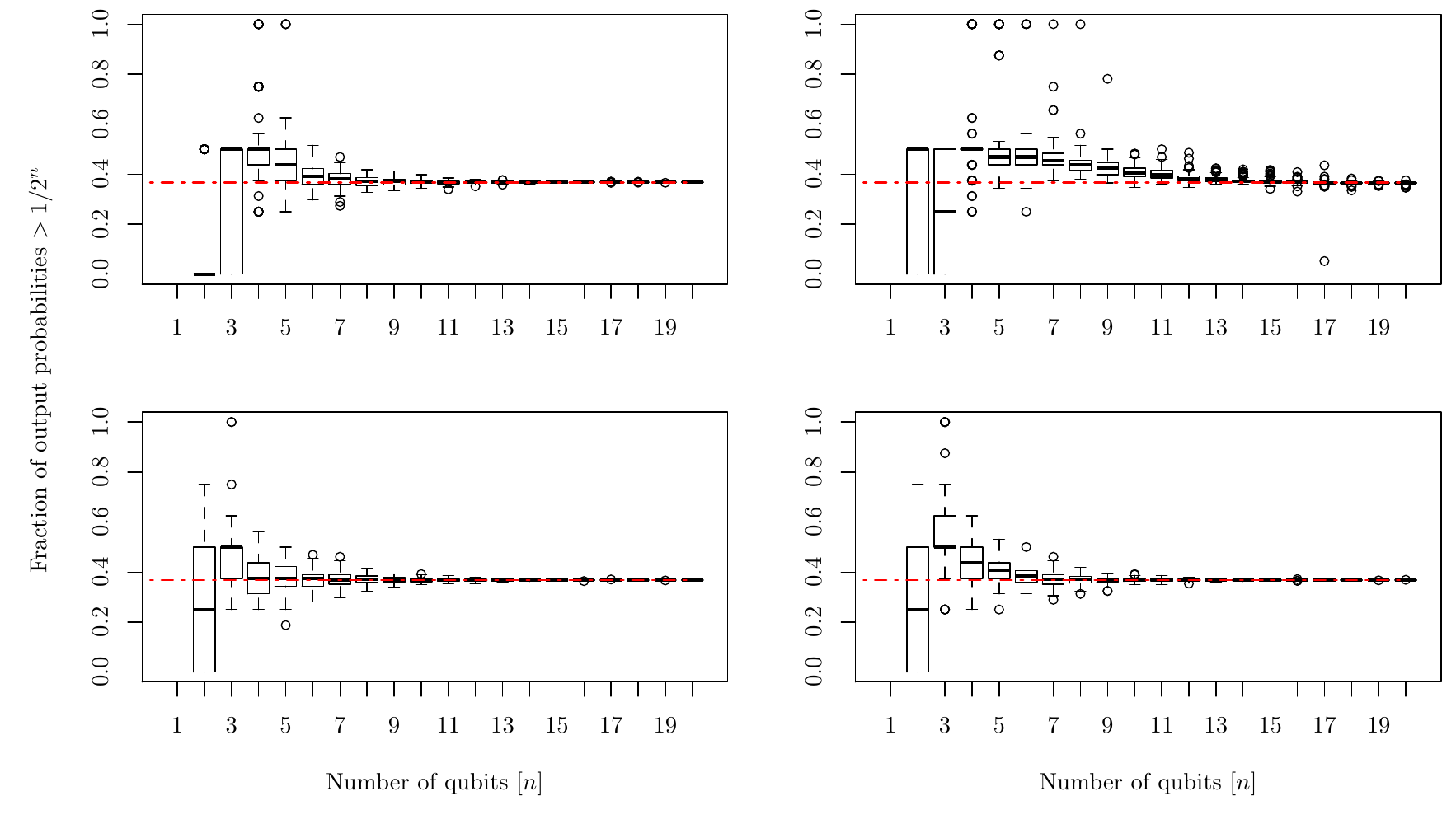}
    \caption{Fraction of output probabilities larger than $1/2^n$ 
    of random circuits drawn from
    the families $\mathcal{F}_{\text{DO}}$ (l.h.s.) and
    $\mathcal{F}_{\text{col}}$ (r.h.s.) for both linear (top) and quadratic
    (bottom) circuit
    depth in the number of qubits the circuit acts upon $n$, 
    i.e., lattices of size $n \times n$ (top) and $n \times n^2$
    (bottom). 
    For each $n$ we draw 100 i.i.d.\ realizations $(\beta,y)$ and thus of the
    circuit $\mathcal{C}_{\beta,y}$ and plot the
    resulting distribution in the form of a box plot. 
    The red dashed line shows the value of $1/e$, which is precisely the value
    to be expected if the output probabilities are Porter-Thomas distributed. 
  \label{fig:anticoncentration box} }
\end{figure*} 
% ----- --------------------------------------------------------------------  

\paragraph{Convergence to the chaotic regime.} 
% ---Figure PT -------------------------------------------------
\begin{figure*}[t] % Porter Thomas TV distance
    \centering
    \includegraphics[width=0.95\textwidth]{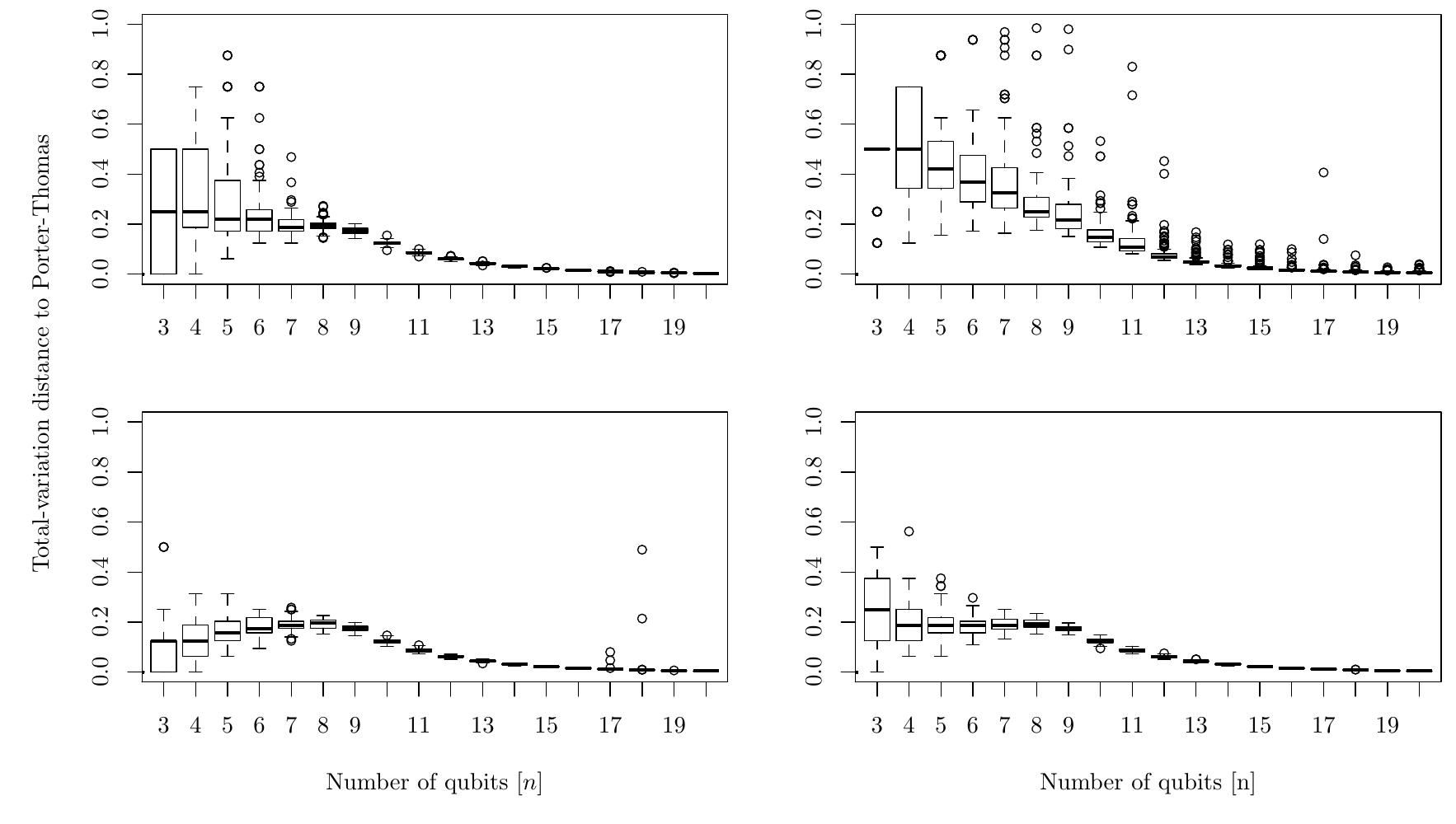}
    \caption{ Total variation distance to the Porter-Thomas distribution of  the empirical distribution of output probabilities of random circuits from
    the families $\mathcal{F}_{\text{DO}}$ (l.h.s.) and
    $\mathcal{F}_{\text{col}}$ (r.h.s.) for both linear (top) and quadratic
    (bottom) circuit
    depth in the number of qubits the circuit acts on $n$, i.e., lattice of size $n \times n$ (top) and $n \times n^2$
    (bottom). 
    For each $n$ we draw 100 i.i.d.\ realizations $(\beta,y)$ and thus of the
    circuit $\mathcal{C}_{\beta,y}$ and plot the
    resulting distribution in the form of a box plot. 
    \label{fig:tv distance pt} }
\end{figure*}  
% -------------------------------------------------------------

It is an interesting detail that the value of $\gamma = 1/e$, which we observe
above, is a signature of the exponential distribution (also known as Porter-Thomas distribution) that is known to emerge in chaotic quantum systems for large system sizes \cite{PorterThomas,HaakeChaos,EmersonPseudoRandom,brown2008,Boixo161608.00263,Aaronson16_FoundationsQuantumSupremacy}. 
This distribution is given by 
\begin{equation}
    P_{PT}(p) =  2^n \exp(-2^n p ) \, ,  \label{pt dist}
\end{equation}
and thus anti-concentrates in precisely the fashion observed here.
Note that the same behavior was observed in previous work investigating random MBQC settings \cite{brown2008}, as well as in  recent  work  \cite{Boixo161608.00263},  which investigated random universal circuits on a 2D architecture. Notably, the finite and universal gate sets considered in these works are very similar to the ones considered here. 
Likewise, convergence to the exponential distribution was observed in Ref.\  \cite{Aaronson16_FoundationsQuantumSupremacy} for approximately Haar-random two-qubit unitaries in a 2D setup.

In Fig.~\ref{fig:tv distance pt} we show the total variation distance between
the empirical distributions of output probabilities of the random circuits
generated in our numerical experiments and the discretized Porter-Thomas
distribution. 
We can see that as the number of qubits increases, the
output distributions of random circuits approach Porter-Thomas distribution.

To calculate the total variation distance to the exponential distribution
\eqref{pt dist}, we discretized the interval $[0,1]$ into $m$ bins each of
which contains probability weight $1/m$. In other words, the discretization
$(p_0, p_1, \ldots, p_m)$ is defined by given $p_0 = 0, \, p_m = 1$ and 
\begin{equation}
    \int_{p_i}^{p_{i+1}} P_{PT}(p) \mathrm{d} p = \frac{1}{m} \,. 
\end{equation}
Denote by $Q(p)$ the
numerically observed distribution of output probabilities $p = |\bra{x}
\mathcal{C} \ket{0}|^2$ over the set $\Omega = \{ [p_i,p_{i+1} ]\}_{i =
0,\ldots, m}$. 
The total variation distance between $P$ and the exponential distribution is then given by 
\begin{align}
    \Vert P - Q \Vert_{\mathrm{TV}} = \frac{1}{2} \sum_{X \in \Omega} | P(X) -
    1/m | \, . 
\end{align}
Since the number of samples we obtain in each run is given by $2^n$we choose the number of bins $m$ depending on $n$. Specifically, we choose $m = \mathrm{min} \{ \lceil 2^n/5 \rceil,100 \}$ to allow fair comparison for small $n$.

\section[\#P-hardness from n x O(n) lattices]{\textsf{\#P}-hardness from $n  \times O(n)$ lattices}\label{H}

In this appendix, we show that \textsf{\#P}-hardness of approximating output probabilities in Lemma \ref{lemma:SharpPHardness}  arises already for $n\times m$-qubit lattices  with $m\in O(n)$. Specifically, we prove this by introducing two slight modifications in architectures \ref{enum:Arc1}-\ref{enum:Arc2} (cf.\mot{}Fig.\ \ref{fig:PostSelectingAngles}) without changing the fundamental structure of the basic layout of the steps \ref{Exp:Preparation}-\ref{Exp:Measurement}. 

The key idea is to introduce different types of input states (with different $\beta_i$s) in the preparation step \ref{Exp:Measurement}: specifically, we  pick $\beta_i\in\{0,\theta\}$ for qubits on  odd-row sites, $\beta_i\in\{0,\uppi/2\}$ on even-row ones; additionally, we perform  a local Hadamard rotation on even-row sites before the Ising Hamiltonian evolution in step \ref{Exp:Evolution} begins. This is shown in Figure~\ref{fig:VariationArc1}. The net effect is to initialize even-site qubits on either $\ket{0}$ (the +1-eigenstate of $Z$) or $\ket{-\imun}:=\ket{0}-\imun \ket{1}$ (the -1-eigenstate of $Y$)  at random. Qubits initialized in $\ket{0}$  are invisible to the Ising evolution (\ref{Exp:Evolution}) and effectively become unentangled from the computation. Further, preparing a qubit in $\ket{\mathrm{-}\imun}$, evolving (\ref{Exp:Evolution})  and measuring $X$ is equivalent to  preparing $\ket{+}$ and measuring $Y$ instead at the end of the computation.  Again, we choose the $\ket{\psi_\beta}$ to be fully disordered (DO) for architecture \ref{enum:Arc1}. For architecture \ref{enum:Arc2}, we pick $\ket{\psi_\beta}$ to be TI in one direction with period at most 4, i.e.,  TI$_{(4,\infty)}$-symmetric in our notation.

The full experiment can now be mapped to non-adaptive MBQC analogue to \ref{app:MapMBQCA1A3} with two differences: first, the MBQC acts on a graph state vector $\ket{G}$ \cite{nest06Entanglement_in_Graph_States}, instead of a cluster state, whose underlying graph $G$ is derived from the 2D lattice by deleting $\ket{0}$-state vertices (the output probabilities of the computation can be mapped  to an Ising model on $G$ using the tools of appendices\mot{}A-B); second, the remaining vertices on even columns are measured on the $Y$ basis. We now pick $n=2k-1$  and study the logical-circuit of the MBQC in  two scenarios:
\begin{itemize}
\item[(i)] All even-row qubits are initialized in $\ket{0}$. The MBQC acts on a graph state vector $\ket{G'}$ that is the  product of $k$ disconnected 1D cluster states. Modulo byproduct operators, the local measurements drive a  random logical $k$-qubit circuit of single-qubit gates $\{R_{i}^{a_i}(\theta):=H_i \euler^{\imun (a_i \theta) Z_i}, a_i\in\{0,1\}\}$ and depth $m-1$ (information flows on odd rows). For architecture \ref{enum:Arc2}, the latter circuit inherits a TI$_{(2,\infty)}$-symmetry from the TI$_{(4,\infty)}$-one of the input state vector $\ket{\psi_\beta}$.
\item[(ii)] Even-column even-row  qubits are initialized in $\ket{0}$;  even-column odd-row ones are left unspecified. The MBQC acts on a cluster-state with ``holes'' $\ket{G''}$ as in Refs.\  \cite{Bravyi07_MBQC_toric_code,VanDenNest08_CompletenessClassicalIsingModel_UniversalQC}. Information flows again on odd-rows. If an even-column odd-row qubit $i$ is prepared in $\ket{-\imun}$ and measured in the $X$ basis (or, equivalently, in $\ket{+}$ and measured in the $Y$ basis), we obtain a  reduced graph-state vector $\ket{G'''}$ whose graph $G^{'''}$ is obtained from $G^{''}$ by contracting the edges incident to $i$ \cite{Bravyi07_MBQC_toric_code,VanDenNest08_CompletenessClassicalIsingModel_UniversalQC}. Thus,  
$\ket{-\imun}$ state vectors between the odd qubit lines let us  implement logical entangling gates of form $$E^{b,c}(\theta):=\left(\prod_{i=1}^{k} H_i\right) \left(\prod_{i=1}^{k-1} CZ_{i,i+1}^{b_i}\right)
\left( \prod_{i=1}^{k} \euler^{-\imun (c_i \theta) Z_i}\right),\qquad b_i,c_i,\in\{0,1\},$$ 
where $b_i=1$ if the qubit between lines $2i,(2i-1)$ is in $\ket{-\imun}$ and zero otherwise; and $c_i$ indicates whether we measure $X$ or $X_{-\theta}$ on the $(2i-1)$th line. Postselecting $b_i$  gives us the ability to implement non-translation-invariant 2-qubit entangling gates between qubit lines at will. Again, for architecture \ref{enum:Arc2}, the gate $E^{b,c}(\theta)$ inherits a TI$_{(2,\infty)}$ symmetry.
\end{itemize}
Combining the above facts, it follows that we can  simulate arbitrary  $k$-qubit nearest-neighbor Clifford+$T$ circuits in the modified architecture \ref{enum:Arc1} via postselection, using  lattices with $(2k-1)\times (2k-1)$ qubits. The latter can efficiently implement the \textsf{\#P}-hard IQP circuits of Lemma~\ref{lemma:1D_IQP} with a constant-overhead factor. 

\begin{figure}[t]
\centering 
\includegraphics[width=0.6\textwidth]{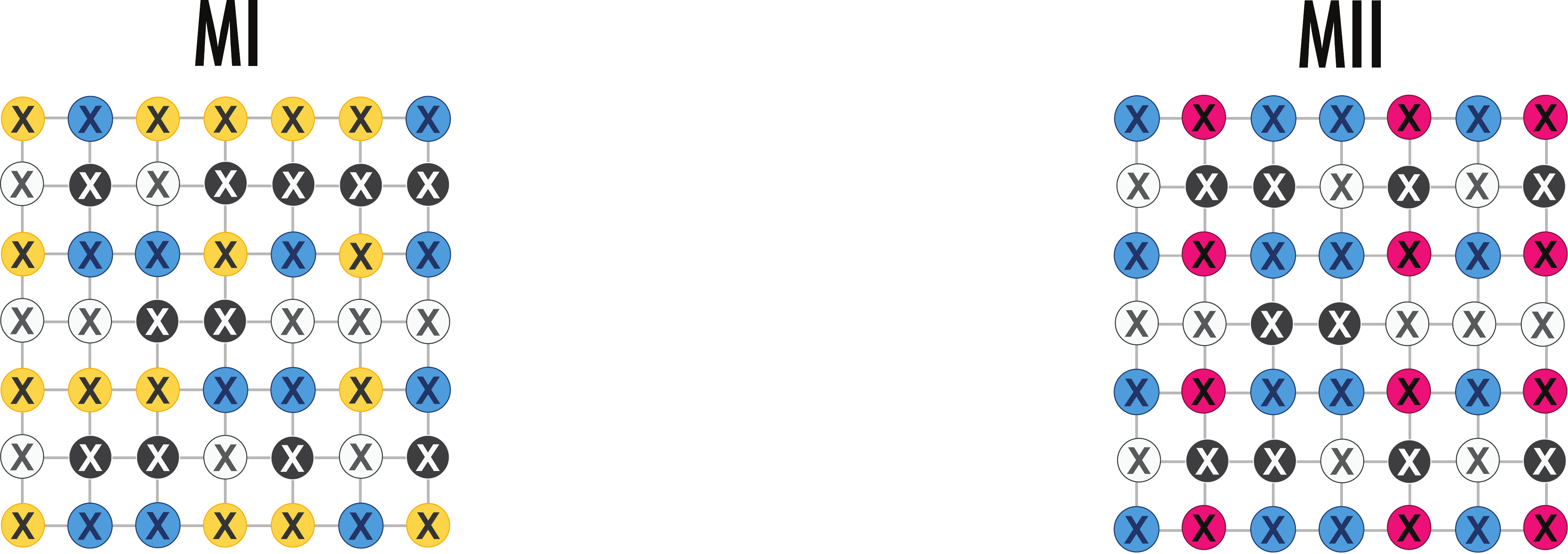}
\caption{Modified architectures \ref{enum:Arc1}-\ref{enum:Arc2}, named ``MI'' and ``MII'' in the figure. Changes are introduced on the even rows with respect to Fig.\mot{}\ref{fig:Experiment}. On even rows we pick $\beta_i\in\{0,\uppi/2\}$ (black stands for $0$, white stands for $\uppi/2$) and perform a local Hadamard rotation. In architecture MI, $\beta_i$ is uniformly-random. In architecture MII, $\beta_i$ is translation invariant on columns with period less than or equal to $4$ (i.e., TI$_{(4,\infty)}$ in the notation of the main text). }\label{fig:VariationArc1}
\end{figure}

In the modified architecture~\ref{enum:Arc2}, observation (i) and postselection of byproduct operators (Fig.\ \ref{fig:PostSelectingAngles}) yields TI$_{(2,\infty)}$-symmetric  circuits of  $\{H_i,\euler^{\pm\imun \frac{\uppi}{16} Z_i}, \euler^{\pm\imun \frac{\uppi}{16} X_i}\}$  gates. In combination with  the gadgets in the proof of Lemma~\ref{lemma:lemma:MBQCX-YOnColumns}, this lets us implement $k$-qubit TI$_{(2,\infty)}$-symmetric nearest-neighbor circuits of $\euler^{\mp \imun \frac{\uppi}{8} X_i X_{i+1}}$ gates.  Furthermore, byproduct operators also let us break the TI$_{(2,\infty)}$ symmetry via the identities below, and allow us to implement non-TI arbitrary nearest-neighbor Clifford+$T$ circuits as in the previous case with constant overhead:
\begin{align}
&\prod_{i=1}^{\lfloor k/2\rfloor} \euler^{\imun a_i \frac{\uppi}{4} X_{[i]} X_{[i+1]}}
=
\left(\prod_{i=1}^{\lfloor k/2\rfloor} \euler^{\imun  \frac{\uppi}{8} X_{[i]} X_{[i+1]}}\right) \left(\prod_{i=1}^{\lfloor k/2\rfloor} Z_{[i]}^{a_i}\right)
\left(\prod_{i=1}^{\lfloor k/2\rfloor} \euler^{-\imun  \frac{\uppi}{8} X_{[i]} X_{[i+1]}}\right),
& [i]:=2i-x_0,x_0\in\{0,1\}, a_i\in\{0,1\},\notag\\
&\prod_{i=1}^{\lfloor k/2\rfloor} \euler^{\imun b_i \frac{\uppi}{8} X_{[i]} }
=
\left(\prod_{i=1}^{\lfloor k/2\rfloor} \euler^{\imun  \frac{\uppi}{16} X_{[i]} }\right) \left(\prod_{i=1}^{\lfloor k/2\rfloor} Z_{[i]}^{b_i}\right)
\left(\prod_{i=1}^{\lfloor k/2\rfloor} \euler^{-\imun  \frac{\uppi}{16} X_{[i]} }\right),
& [i]:=2i-x_0,x_0\in\{0,1\}, b_i\in\{0,1\}.\notag
\end{align}
Again, this yields an  efficient and exact postselected implementation of the desired IQP circuits on $k$ logical qubits.

\section{Certification protocol}
\label{app:certification} 

\newcommand{\supp}{\mathrm{supp}}

\newcommand{\av}[1]{\langle #1\rangle}
\newcommand{\Av}[1]{\left\langle #1\right\rangle}
\newcommand{\abs}[1]{\lvert #1\rvert}
\newcommand{\Abs}[1]{\left\lvert #1\right\rvert}
\newcommand{\Norm}[1]{\left\lVert#1\right\rVert}
\newcommand{\iprod}[2]{( #1, #2)}
\newcommand{\Iprod}[2]{\left( #1, #2 \right)}
\newcommand{\set}[1]{\{ #1  \}}
\newcommand{\Set}[1]{\left \{ #1 \right \}}

\newcommand{\Pb}{\mathbb{P}}

\newcommand{\ee}{\mathrm{e}}
\newcommand{\ii}{\mathrm{i}}
\newcommand{\dd}{\mathrm{d}}

In this appendix, we describe an efficient  parallelizable certification protocol for ground states of gapped local Hamiltonians. The protocol is an optimization of the algorithm presented in Ref.\ \cite{Hangleiter} 
featuring a reduced sample complexity. A direct implementation of the algorithm of Ref.\mot\cite{Hangleiter} requires a super-cubic number $O(N^3\log N)$ of prepare-and-measure experiments, where $N$ is the system size. Our protocol brings this complexity down to $O(N^2)$ by combining parallel sequences of local measurements and efficient classical post-processing. Additionally, it relies on on-site single-qubit observables only, as opposed with few-body Hamiltonian terms, which are needed in Ref.\mot\cite{Hangleiter}.  These improvements render the  protocol of Ref.\mot\cite{Hangleiter} faster and more suitable for, e.g, near-term quantum speedup experiments. However, our optimized algorithm can be used for verifying ground states of arbitrary non-degenerate gapped local  Hamiltonians, and may also be of independent interest.

\subsection{Parent Hamiltonians}\label{app:parentHamiltonians}

As a preliminary, we derive non-degenerate gapped local parent Hamiltonians for the pre-measurement state vectors $\ket{\psi_\beta}$ of our architectures. To find the stabilizer operators corresponding to $\ket{\psi_\beta}$
in architectures\mot{}\ref{enum:Arc1}-\ref{enum:Arc3}, we note that $\ket{\psi_\beta}$ can be prepared 
by applying the tensor-product unitary $U_\beta = \prod_{i \in V} \euler^{-\beta_i Z_i/2}$
followed by the local quench unitary $U$ to the initial state $ \ket{+}_V = \prod_{i \in V} \ket{+}_i$ where $\ket{+} = (\ket{0} + \ket{1})/\sqrt{2}$ as
$\ket{\Psi_\beta} = U U_\beta \ket{+}_V$. 
Since $\ket{+}$ is a $+1$-eigenvector of $X$, the parent Hamiltonian of
$\ket{\Psi_\beta}$ is given by $- U U_\beta ( \sum_{i \in V} X_i)
U_\beta^\dagger U^\dagger$.
For architectures \ref{enum:Arc1}-\ref{enum:Arc2} this evaluates to 
\begin{align}
    H_{\textnormal{\ref{enum:Arc1},\ref{enum:Arc2}}} & = -\sum_{i \in V}  \left( X_{\beta_i,i} \prod_{j:
(i,j) \in E} Z_j \right) \, . 
\label{app:stabilizer-1and2}
\end{align}
For the specific case of architecture \ref{enum:Arc3}, 
let us partition the vertices of the lattice $\mathcal{L}$ into two sets $V_1$ and $V_2$, such that $V_1$
contains all qubits on the square primitive sublattice, and $V_2$ the remaining dangling-bond qubits.
Further, let $E_1$ contain those edges where $J_{i,j} = \pi/4$, and $E_2$ the
dangling bonds where $J_{i,j} = \pi/16$.
We find the corresponding parent Hamiltonian to be 
\begin{align}
    \label{arcIII}
    H_{\textnormal{\ref{enum:Arc3}}} = & - \sum_{\substack{i \in V_1\\k:(i,k)\in E_2}} \left( CT_{(i,k)} X_i CT_{(i,k)}^\dagger   \prod_{j: (i,j) \in
    E_1} Z_j \right)
     - \sum_{\substack{k \in V_2\\i:(i,k)\in E_2}} \left(   CT_{(i,k)} X_k CT_{(i,k)}^\dagger\right)  \, , 
\end{align}
where the two-body terms evaluate to 
\begin{equation}
\label{eq:FactorsofTwoBodyTerm}
CT_{(i,k)} X_i CT_{(i,k)}^\dagger =  (X_i - TX_iT^\dagger)
 Z_k + (X_i + T X _i T ^\dagger)
 \id_k.
\end{equation}
The Hamiltonian terms of $H_\mathnormal{\ref{enum:Arc3}}$ are 6-local except at the boundary where its locality is reduced. 

\subsection{Energy estimation of $G$-local Hamiltonians}

An $N$-qubit  non-degenerate  gapped local Hamiltonian $H=\sum_{i\in V} h_i$  with $\tau$-body interactions on  an (simple connected) interaction graph $G=(V,E)$ is called ``\emph{$G$-local}'' if each qubit is located at a vertex $i\in V$ and each term $h_i$ is supported on the neighborhood $\partial(i)$ of $i$,
\begin{equation}
\supp (h_i) \subseteq\partial(i):=\{j:(i,j)\in E \},\quad   \forall i\in V.
\end{equation}
Though we will consider arbitrary interaction graphs $G$ in our analysis, we will be particularly interested on \emph{constant-degree} ones in our applications: i.e., those with maximum vertex degree $\deg(G)$  upper bounded by a constant, independently of the number of qubits. Because, w.l.o.g, we can pick $\tau=\mathrm{deg}(G)$, the latter graphs model physical systems with geometrically constrained connectivity, which are ubiquitous both in condensed matter physics and quantum information processing. Examples of such graphs are lattices of constant geometric dimension $D$ and fixed-size primitive cells. In particular, the Hamiltonians of the main text are $\mathcal{L}$-local and have $\deg(\mathcal{L})=4$ ($\deg(\mathcal{L})=5$) for architectures \ref{enum:Arc1}-\ref{enum:Arc2} (\ref{enum:Arc3}). 

The key ingredient of our result below is a subroutine for estimating  average energies of $G$-local Hamiltonians using parallelized measurement circuits with time complexity dominated by chromatic number $G^{(2)}=(V^{(2)},E^{(2)})$ of the next-neighbor interaction graph of $G^{(2)}$. The latter has  with same vertices $V^{(2)}=V$ as $G$ and edges between all pairs of neighbors and next-neighbors of $G$ ($(v_1,v_2)\in E^{(2)}$ iff  $v_1,v_2\in V$ and have graph distance $d(v_1,v_2)\leq 2$). The existence of such parallel circuits relies on the existence of certain decompositions of local Hamiltonians into commuting terms, named ($\kappa, \alpha, \tau$) (``cat'') decompositions below. In short, a Hamiltonian is $(\kappa,\alpha,\tau)$ if it can be decomposed as a small sum Hermitian operators that admit parallel measurement circuits that use single-shot $\tau$-body operators. 
\begin{definition}[($\kappa, \alpha, \tau$)-decomposition]\label{def:CABDecomp}
A $G$-local $N$-qubit Hamiltonian $H$ on a graph $G=(V,E)$ is $(\kappa,\alpha,\tau)$-decomposable if there exist  constants $\alpha\in(0,1]$, $\kappa,\tau,\chi\in\mathbb{N}$, a partition $V=\bigcup_{i=1}^{\chi} V_i$, $\chi \leq \kappa$, and a map $f:[\kappa]\rightarrow [\chi]$,  such that
\begin{equation}\label{eq:RenormalizationJointMeasurability}
H=\sum_{i=1}^{\kappa} H_i, \:\: H_i := \sum_{j\in V_{f(i)}} h_{j}^{(i)}, \:\: \supp\left(h_j^{(i)}\right) \subset \partial(j),
\end{equation}
where (a) $\max_i{|V_i|}\leq \alpha N$, (b) the terms $h_j^{(i)}$ are hermitian, and (c) the energy distribution of $H_i$ can be sampled from via parallel measurements of $\tau$-body observables and efficient classical post-processing.
\end{definition}
A sufficient condition for Def\mot\ref{def:CABDecomp}.(c) to  hold is that $i\in[\kappa]$, all terms in $\{ h_j^{(i)}\}_j$ are $\tau$-local have non-overlapping support. Then, one  can simply measure all  $h_j^i$ in parallel, obtain  the outcomes $\{e_{j}^{(i)}\}_j$, and sample the output distribution of $H_i$ by classically computing the sum $e_{i}:=\sum_{j\in V_{f(i)}} e_{j}^i$. The fact that Hamiltonians of these form admit parallel measurement circuits is formalized by the following lemma, which generalizes Lemma 1 in Ref.\ \cite{Hangleiter}.

\begin{lemma}[Estimation of the energy] \label{lemma:energy}Let $H$ be an $N$-qubit Hamiltonian with a given $(\kappa,\alpha,\tau)$-decomposition (\ref{eq:RenormalizationJointMeasurability}). Let 
$P_{i,\mu}$  be the $\mu$th  $e_{i,\mu}$-eigenprojector of $H_i$, for $i\in[\kappa]$, and $X_i^{(j)}$ be the random variable that takes the value $e_{i,\mu}$ with probability $\tr (\rho_p^{(j)} P_{i,\mu})$, modeling a measurement of $H_i$ on the $j^{\text{th}}$ copy of $\rho_p$. Moreover, let 
$\av{H_i}_{\rho_p}^* = \frac{1}{m} \sum_{i=1}^m X_i^{(j)}$
be the estimate
of $\av{H_i}$ on $\rho_p$ by a finite-sample average of $m$ measurement
outcomes, and $\av{H}_{\rho_p}^* = \sum_{i=1}^\kappa \av{H_i}_{\rho_p}^*$ the resulting estimate of $\av{H}_{\rho_p}$.
Last, let $J = \max_\lambda \norm{ h_\lambda}$. Then, for any $\overline{p_{\mathrm{err}}}\in[1/2,1)$ and $\epsilon >0$ it holds that 
\begin{equation}
\Pb \left[\abs{\av{H}_{\rho_p}^* - \av{H}_{\rho_p} } \leq \epsilon \right] \geq \overline{p_{\mathrm{err}}} \, ,
\qquad 
\textnormal{whenever}
\qquad
 m \geq  \left[\frac{\alpha^2\kappa^2  J^2}{2 \epsilon^2} \ln\left[\frac{\kappa+1}{\ln(1/\overline{p_{\mathrm{err}}})}\right] \right]N^2.\label{measurements}
\end{equation}
\end{lemma}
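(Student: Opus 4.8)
The plan is to treat each of the $\kappa$ partial estimators $\av{H_i}_{\rho_p}^*$ as an empirical mean of bounded, independent and identically distributed random variables, to control each by Hoeffding's inequality, and to recombine the $\kappa$ estimates through the triangle inequality and an error-budget split across the colour classes. This is the natural generalisation of Lemma~1 of Ref.~\cite{Hangleiter} to $(\kappa,\alpha,\tau)$-decomposable Hamiltonians.

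First I would check unbiasedness: writing the spectral decomposition $H_i=\sum_\mu e_{i,\mu}P_{i,\mu}$, the definition of $X_i^{(j)}$ gives $\mathbb{E}[X_i^{(j)}]=\sum_\mu e_{i,\mu}\tr(\rho_p^{(j)}P_{i,\mu})=\tr(\rho_p H_i)=\av{H_i}_{\rho_p}$, since each $\rho_p^{(j)}$ is an independent copy of $\rho_p$. Next I bound the spread of $X_i^{(j)}$: as it takes values among the eigenvalues of $H_i$, its range is at most $\lambda_{\max}(H_i)-\lambda_{\min}(H_i)$, and using the triangle inequality together with condition (a) of Definition~\ref{def:CABDecomp} and $J=\max_\lambda\norm{h_\lambda}$ one obtains $\norm{H_i}\le\sum_{j\in V_{f(i)}}\norm{h_j^{(i)}}\le\abs{V_{f(i)}}\,J\le\alpha N J$, so that $X_i^{(j)}$ lives in an interval whose length is controlled by $\alpha N J$. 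When the $h_j^{(i)}$ have pairwise disjoint support, as ensured by the sufficient condition for (c), the eigenvalue spread is simply additive and yields the same estimate.

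Then I apply Hoeffding's inequality to $\av{H_i}_{\rho_p}^*=\tfrac1m\sum_{j=1}^m X_i^{(j)}$ with per-term tolerance $t=\epsilon/\kappa$, giving
\begin{equation}
\Pb\!\left[\Abs{\av{H_i}_{\rho_p}^*-\av{H_i}_{\rho_p}}>\frac{\epsilon}{\kappa}\right]\le 2\exp\!\left(-\frac{m\,\epsilon^2}{2\kappa^2\alpha^2 N^2 J^2}\right)=:\delta.
\end{equation}
Because $\av{H}_{\rho_p}^*-\av{H}_{\rho_p}=\sum_{i=1}^\kappa(\av{H_i}_{\rho_p}^*-\av{H_i}_{\rho_p})$, the triangle inequality shows that if every partial estimate lies within $\epsilon/\kappa$ of its mean, the total error is at most $\epsilon$. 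The key observation is that the $\kappa$ measurement sequences act on disjoint copies of $\rho_p$ (the $H_i$ need not mutually commute), so the $\av{H_i}_{\rho_p}^*$ are mutually independent and the joint success probability factorises as $(1-\delta)^\kappa$. To force $(1-\delta)^\kappa\ge\overline{p_{\mathrm{err}}}$ it suffices, using $\ln(1-\delta)\ge-\delta/(1-\delta)$ and $\ln(1/\overline{p_{\mathrm{err}}})\le\ln 2<1$, to demand $\delta\le\ln(1/\overline{p_{\mathrm{err}}})/(\kappa+1)$; this is precisely where the factor $\kappa+1$ and the term $\ln(1/\overline{p_{\mathrm{err}}})$ of~\eqref{measurements} come from. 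Substituting the Hoeffding tail for $\delta$ and solving the resulting inequality $cm\ge\ln[(\kappa+1)/\ln(1/\overline{p_{\mathrm{err}}})]$ for $m$ produces the advertised threshold, quadratic in $N$.

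I expect the only delicate point to be the constant bookkeeping needed to match~\eqref{measurements} exactly rather than merely its order $O(\kappa^2\alpha^2 J^2 N^2/\epsilon^2)$: the range bound is immediate from (a) and Hoeffding is standard, so the remaining work lies in choosing the per-term tolerance, preferring the tighter product bound over a crude union bound, and invoking the elementary inequalities $-\ln(1-\delta)\le\delta/(1-\delta)$ and $\overline{p_{\mathrm{err}}}\ge 1/2$ to pin down the logarithmic factor. None of this is conceptually hard; the care is entirely in tracking the numerical constants.
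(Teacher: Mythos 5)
Your proposal follows essentially the same route as the paper's proof: unbiasedness of each $X_i^{(j)}$, Hoeffding's inequality for each colour class with per-class tolerance $\epsilon/\kappa$, independence of the $\kappa$ finite-sample estimators (measurements on distinct copies) so that the success probabilities multiply into $(1-\delta)^\kappa \geq \overline{p_{\mathrm{err}}}$, and elementary manipulation of logarithms to extract a threshold on $m$. The divergences are all in the constant bookkeeping, and they cut both ways. First, you bound the spread of $X_i^{(j)}$ by $2\Vert H_i\Vert \leq 2\alpha N J$, which is what the stated hypotheses actually give (the terms $h_j^{(i)}$ in Definition~\ref{def:CABDecomp} are only assumed Hermitian); the paper instead asserts $0 \leq X_i^{(j)} \leq \Vert H_i\Vert$, i.e., it implicitly assumes $H_i \geq 0$, which holds for suitably shifted frustration-free parent Hamiltonians in the application but is not part of the definition. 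This costs you a factor of $4$ in the prefactor relative to Eq.~(\ref{measurements}). Second, for the logarithmic factor the paper solves exactly for $m_{\mathrm{opt}} \propto \ln\bigl[2/(1-\overline{p_{\mathrm{err}}}^{1/\kappa})\bigr]$ and then cites Ref.~\cite{Aolita-NatComm-2015} to replace this by $\ln\bigl[(\kappa+1)/\ln(1/\overline{p_{\mathrm{err}}})\bigr]$, whereas your self-contained derivation via $-\ln(1-\delta) \leq \delta/(1-\delta)$ and $\ln(1/\overline{p_{\mathrm{err}}}) \leq \ln 2 < 1$ yields $\ln\bigl[2(\kappa+1)/\ln(1/\overline{p_{\mathrm{err}}})\bigr]$. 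Your extra factor of $2$ inside the logarithm appears to be genuinely needed: at $\kappa=1$, $\overline{p_{\mathrm{err}}}=1/2$ one has $\ln\bigl[2/(1-\overline{p_{\mathrm{err}}})\bigr]=\ln 4$, which exceeds $\ln\bigl[2/\ln 2\bigr]$, so the paper's final substitution is not actually an upper bound there. In short, your proof is correct and is in substance the paper's proof, but it establishes the claim with slightly weaker constants: a factor of $4$ that disappears if one grants the paper's implicit positivity assumption, and a factor of $2$ inside the logarithm that, on inspection, the paper should have kept as well.
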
	
\begin{proof}
Since the random variables $\set{X_i^{(j)}}_j$ are independent and $0 \leq X_i^{(j)} \leq \norm{H_i}$,  Hoeffding's inequality implies,
\begin{align} \notag
\forall i \in [\kappa]: \, \Pb \left[ \abs{\av{H_i}_{\rho_p}^* -  
\av{H_i}_{\rho_p} } \leq \epsilon \right] \geq 1 - 2 \ee^{-\frac{2m \epsilon^2}{\norm{H_i}^2}}, 
\end{align}
and since all measurements are independent
\begin{align}
\Pb\left[\abs{\av{H}_{\rho_p}^* - \av{H}_{\rho_p} } \leq
\epsilon \right] &\geq \Pb  \left[  \forall i \in [\kappa]:  \abs{\av{H_i}_{\rho_p}^* -
 \av{H_i}_{\rho_p} } \leq \frac{\epsilon}{\kappa} \right] 
  \geq \prod_{i=1}^{\kappa} \left( 1 - 2 \ee^{-\frac{ 2 m
 \epsilon^2}{\kappa^2\norm{H_i}^2 }} \right)  \geq \left( 1 - 2 \ee^{-\frac{ 2 m
  \epsilon^2}{\kappa^2(\alpha NJ)^2} }  \right)^{\kappa} \geq \overline{p_{\mathrm{err}}} \, . \notag
\end{align}
The latter identity holds whenever $ m \geq m_{\mathrm{opt}}$ with
\begin{align}
 m_{\mathrm{opt}}:=\frac{\alpha^2\kappa^2  J^2 }{2 \epsilon^2} \ln\left[ \frac{2}{1 -
   \overline{p_{\mathrm{err}}}^{1/\kappa} }\right] N^2\quad \stackrel{\textnormal{\cite{Aolita-NatComm-2015}}}{\leq}\quad\left[\frac{\alpha^2\kappa^2  J^2}{2 \epsilon^2} \ln\left[\frac{\kappa+1}{\ln(1/\overline{p_{\mathrm{err}}})}\right] \right]N^2.
\end{align} 
\end{proof}
The next result states that any $G$-local Hamiltonian admits a $(\kappa,\alpha,\tau)$ decomposition.
\begin{lemma}[Local decompositions]\label{lemma:KMeasurability} Let $H$ be an $N$-qubit   $\tau$-body local Hamiltonian on a graph $G=(V,E)$, and let  $\tau\in O(1)$. %with  constant maximum vertex degree $\mathrm{deg}(G)$.
Let  $\chi\left(G^{(2)}\right)$ and $\iota\left(G^{(2)})\right)$, denote the chromatic and independence numbers of $G^{(2)}$ \cite{bondy}. Then, $H$ admits a $(\kappa,\alpha,\tau)$    decomposition with $\kappa\leq \chi(G^{(2)})\leq \tau^2+1$, $\alpha\leq \iota\left(G^{(2)}\right)/N<1$. Furthermore, any $(\kappa,\alpha,\tau)$ decomposable Hamiltonian admits an on-site $(\kappa 4^{\tau}, \alpha ,1)$ decomposition.
\end{lemma}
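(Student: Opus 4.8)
The plan is to establish the two assertions separately: first, that every $G$-local Hamiltonian admits a $(\kappa,\alpha,\tau)$-decomposition with the claimed parameters, via a proper coloring of the distance-two graph $G^{(2)}$; and second, that any such decomposition can be refined into an on-site one by expanding each local term in the single-qubit Pauli basis. Throughout I take $\tau=\mathrm{deg}(G)$ without loss of generality, as noted before the statement.

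For the first part I would properly color $G^{(2)}$ with $\chi(G^{(2)})$ colors, inducing a partition $V=\bigcup_{i=1}^{\chi}V_i$ into independent sets of $G^{(2)}$, and then set $\kappa=\chi=\chi(G^{(2)})$, take $f$ to be the identity (so $\chi\leq\kappa$ trivially), and define $H_i:=\sum_{j\in V_i}h_j$. The key observation is that two vertices $j,j'$ sharing a color are non-adjacent in $G^{(2)}$, hence at graph distance $d_G(j,j')\geq 3$; their radius-one balls $\{j\}\cup\partial(j)$ and $\{j'\}\cup\partial(j')$ are then disjoint, so the supports $\supp(h_j),\supp(h_{j'})$, being contained in those balls, do not overlap. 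By the sufficient condition stated below Def.~\ref{def:CABDecomp}, condition (c) holds with $\tau$-body observables. Since each color class is independent in $G^{(2)}$ one has $\max_i|V_i|\leq\iota(G^{(2)})$, giving $\alpha\leq\iota(G^{(2)})/N<1$ (strict because $G^{(2)}$ carries at least one edge when $G$ is connected with $N\geq2$). The bound $\kappa\leq\tau^2+1$ is then a degree count: each vertex of $G^{(2)}$ has at most $\mathrm{deg}(G)+\mathrm{deg}(G)(\mathrm{deg}(G)-1)=\mathrm{deg}(G)^2=\tau^2$ neighbors, so the greedy bound (chromatic number at most maximum degree plus one) yields $\chi(G^{(2)})\leq\tau^2+1$.

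For the second part, given a $(\kappa,\alpha,\tau)$-decomposition I would expand each Hermitian term in the Pauli basis on its support: fixing an ordering of $\partial(j)$ padded to length $\tau$, write $h_j^{(i)}=\sum_{\vec P\in\{I,X,Y,Z\}^{\tau}}c_{j,\vec P}^{(i)}\,\sigma_{j,\vec P}$ with $\sigma_{j,\vec P}=\bigotimes_k(\vec P)_k$ and real coefficients. Regrouping by the pair $(i,\vec P)$ gives the $\kappa4^{\tau}$ operators $H_{i,\vec P}:=\sum_{j\in V_{f(i)}}c_{j,\vec P}^{(i)}\sigma_{j,\vec P}$, with $H=\sum_{i,\vec P}H_{i,\vec P}$, the \emph{same} partition $V=\bigcup V_i$ (so $\alpha$ and $\chi\leq\kappa\leq\kappa4^{\tau}$ are preserved), and new map $\tilde f(i,\vec P)=f(i)$. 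Within each $H_{i,\vec P}$ the summands are commuting Pauli strings on pairwise-disjoint supports; I would verify condition (c) for $\tau=1$ directly through the demolition argument already used in the main text, namely measuring every qubit $k\in\partial(j)$ in the single-qubit basis fixed by $(\vec P)_k$ in parallel, reconstructing each eigenvalue $\prod_k o_k$ from the outcomes $o_k\in\{\pm1\}$, and summing with the $c_{j,\vec P}^{(i)}$ to sample the energy distribution of $H_{i,\vec P}$ using only on-site ($1$-body) measurements and classical post-processing.

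The main obstacle I anticipate is conceptual rather than computational and lies in the second part: one must be careful that the index ``$1$'' refers to the \emph{on-site} nature of the measurement and not to the locality of the operators $\sigma_{j,\vec P}$, which remain multi-qubit. Consequently condition (c) cannot be read off from the simple non-overlapping-$\tau$-local sufficient condition and must be argued through the single-qubit demolition measurement, checking that measuring in the common product eigenbasis of the disjoint-support commuting strings faithfully samples the joint eigenvalue distribution. A secondary point requiring care is the bookkeeping that holds the partition, and hence $\alpha$, fixed while only the index set of the decomposition grows by the factor $4^{\tau}$, together with the clean verification that distance-two adjacency in $G^{(2)}$ is exactly the condition guaranteeing disjoint supports within a color class.
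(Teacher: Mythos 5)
Your proposal is correct and follows essentially the same route as the paper's proof: a proper coloring of $G^{(2)}$ to obtain color classes whose terms have pairwise disjoint supports (with $\alpha\leq\iota(G^{(2)})/N$ and the degree bound $\deg(G^{(2)})\leq\tau^2$ giving $\kappa\leq\tau^2+1$), followed by a single-qubit Hermitian (Pauli) basis expansion of each term to produce the on-site $(\kappa 4^{\tau},\alpha,1)$ decomposition. The only cosmetic differences are that you invoke the greedy coloring bound where the paper cites Brooks' theorem, and you spell out the on-site demolition-measurement argument for condition (c) slightly more explicitly than the paper does.
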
 
The bounds in Lemma\mot\ref{lemma:KMeasurability} are not necessarily tight for commuting Hamiltonians. For instance, the fully-connected Ising Hamiltonian $H=-\sum_{i,j}J_{i,j}Z_iZ_J-\mu\sum_k h_kZ_k$  admits a $(1,1,1)$ decomposition, though $\chi(G^{(2)})=N$, because it can be measured directly via single-shot on-site $Z_i$ measurements and classical post-processing (as it is a polynomial of the latter commuting observables).  For the Hamiltonians in architectures \ref{enum:Arc1}-\ref{enum:Arc3}, we also find much tighter bounds.

\begin{lemma}[Local decompositions in the architectures]\label{lemma:ArcHamiltonianDecomposition}
Let $H_{\text{a}}$ be the $N$-qubit Hamiltonian in architecture $a\in\{\textnormal{\ref{enum:Arc1},\ref{enum:Arc2},\ref{enum:Arc3}}\}$. Then,
\begin{itemize*}
\item $H_{\textnormal{\ref{enum:Arc1}}}$ admits a $\left(2,\tfrac{5}{9},1\right)$ decomposition of form (\ref{eq:RenormalizationJointMeasurability}) where every $H_i$ has on-site terms and $\mathrm{DO}$ symmetry.
\item $H_{\textnormal{\ref{enum:Arc2}}}$ admits a $\left(2,\tfrac{5}{9},1\right)$ decomposition of form (\ref{eq:RenormalizationJointMeasurability}) where every $H_i$ has on-site terms and $\mathrm{TI}_{(1,\infty)}$ symmetry.
\item $H_{\textnormal{\ref{enum:Arc3}}}$ admits  a two-body $\left(2,\tfrac{5}{9},2\right)$ and  an on-site $\left(32,\tfrac{5}{9},1\right)$ decompositions  where  every $H_i$ has $\mathrm{TI}_{(\sqrt{2},\sqrt{2})}$ symmetry.
\end{itemize*}
\end{lemma}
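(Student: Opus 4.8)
The plan is to construct, for each architecture, an explicit bipartite two-coloring of the primitive square lattice and to show that the two resulting color classes furnish the claimed decomposition. Recall from Appendix \ref{app:parentHamiltonians} that the parent Hamiltonians are sums of the mutually commuting stabilizers $h_i = X_{\beta_i,i}\prod_{j:(i,j)\in E} Z_j$ for architectures \ref{enum:Arc1}-\ref{enum:Arc2}, and of the $CT$-conjugated terms \eqref{arcIII} for architecture \ref{enum:Arc3}. Since each $h_i$ carries a (rotated) $X$ on its ``center'' $i$ and $Z$'s on the neighbors of $i$, two stabilizers whose centers are adjacent demand conflicting single-qubit bases on the shared site; hence the centers read out in one on-site round must form an independent set of the primitive sublattice $\mathcal{L}_{\mathrm P}$. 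Because the square lattice is bipartite, its two color classes $V = V_{\mathrm{odd}}\cup V_{\mathrm{even}}$ are independent sets that cover $V$, which is exactly what yields the value $\kappa = 2$ for on-site schemes.

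For architectures \ref{enum:Arc1}-\ref{enum:Arc2} I would set $H_1 = \sum_{i\in V_{\mathrm{odd}}} h_i$ and $H_2 = \sum_{i\in V_{\mathrm{even}}} h_i$, so that $H = H_1 + H_2$. The single on-site round ``measure $X_{\beta_i}$ on every odd site and $Z$ on every even site'' is a genuine tensor-product measurement, and every term of $H_1$ is the product of the outcome at its odd center with the $Z$-outcomes at its even neighbors; as $H_1$ is a sum of commuting terms, this round samples its energy distribution, giving $\tau = 1$ (the complementary round does the same for $H_2$). For $\alpha$ I would bound the color-class sizes: for an $n\times m$ grid with $n,m\ge 3$ the larger class has size $\lceil nm/2\rceil \le \tfrac 59 nm$, with equality attained exactly at the $3\times 3$ grid, so $\alpha = \tfrac 59$ is a valid universal constant. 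The measurement data (the rotation angles $\beta_i$) then inherits the preparation symmetry, DO for \ref{enum:Arc1} and TI$_{(1,\infty)}$ for \ref{enum:Arc2}, on top of the fixed checkerboard $X/Z$ pattern.

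For architecture \ref{enum:Arc3} I would keep the same two-coloring of the primitive sublattice but assign to each color class \emph{both} the primitive stabilizer of a site and the dangling-bond stabilizer attached to it. Writing $A_{(i,k_i)} = CT_{(i,k_i)} X_i CT_{(i,k_i)}^\dagger$ and $B_{(i,k_i)} = CT_{(i,k_i)} X_{k_i} CT_{(i,k_i)}^\dagger$, the crucial observation is that these are $CT$-conjugates of the commuting operators $X_i$ and $X_{k_i}$, hence commute and share an eigenbasis, so a single joint two-qubit measurement on the pair $(i,k_i)$ returns both eigenvalues. Combined with $Z$ on the disjointly supported even primitive neighbors, one round samples $H_1 = \sum_{i\in V_{\mathrm{odd}}}\bigl(h_i^{\mathrm{prim}} + h_{k_i}^{\mathrm{dang}}\bigr)$, giving a two-body decomposition with $\tau = 2$, $\kappa = 2$, and $\alpha = \tfrac 59$ (here $N = 2mn$ and each class contains $2\lceil mn/2\rceil \le \tfrac 59 N$ qubits), with checkerboard symmetry TI$_{(\sqrt 2,\sqrt 2)}$. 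The on-site $(32,\tfrac 59,1)$ decomposition is then immediate from Lemma \ref{lemma:KMeasurability}: expanding each two-body observable via \eqref{eq:FactorsofTwoBodyTerm} into single-qubit products turns the $(2,\tfrac59,2)$ decomposition into an on-site $(2\cdot 4^2,\tfrac59,1) = (32,\tfrac59,1)$ one.

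I expect the main obstacle to be the careful verification of condition (c) of Definition \ref{def:CABDecomp} for architecture \ref{enum:Arc3}: one must check that the joint two-qubit readout of the commuting pair $A_{(i,k_i)}, B_{(i,k_i)}$ is simultaneously consistent with the $Z$-measurements of the neighbor factors, i.e.\ that all observables within a round genuinely act on disjoint qubit groups, and that the expansion \eqref{eq:FactorsofTwoBodyTerm} stays within the $4^\tau$ sub-term budget of Lemma \ref{lemma:KMeasurability}. A secondary, largely bookkeeping point is confirming that the translational symmetry claimed for each $H_i$ (measured by the inhomogeneity of the angle data rather than the fixed bipartite $X/Z$ pattern) matches that of the preparation step, and that $\alpha = \tfrac 59$ is tight and holds uniformly over all admissible lattice sizes.
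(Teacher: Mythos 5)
Your proposal is correct and follows essentially the same route as the paper's own proof: a bipartite two-coloring of the primitive lattice giving $\kappa=2$, on-site (resp.\ two-body) measurement rounds whose outcomes are multiplied in classical post-processing, the bound $\alpha=\lceil nm/2\rceil/nm\le 5/9$ saturated at the $3\times 3$ lattice, and the $\kappa 4^{\tau}$ expansion of Lemma~\ref{lemma:KMeasurability} to convert the $(2,\tfrac{5}{9},2)$ decomposition of architecture~\ref{enum:Arc3} into the on-site $(32,\tfrac{5}{9},1)$ one. The only point you elaborate beyond the paper is the explicit commutation argument for jointly measuring $CT_{(i,k)}X_iCT_{(i,k)}^\dagger$ and $CT_{(i,k)}X_kCT_{(i,k)}^\dagger$ on a dangling pair, which the paper leaves implicit.
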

\begin{proof}[Proof of Lemma \ref{lemma:KMeasurability}]
We  prove the existence of the   $(\kappa,\alpha,\tau)$ decomposition by  constructing a partition $V=\bigcup_{i=1}^\kappa V_i$ such that
\begin{equation}\label{eq:Renormalization}
H=\sum_{i=1}^{\kappa} H_i, \quad H_i := \sum_{i\in V_i} h_{i},
\end{equation}
where the terms in $\{h_i\in H_i\}$  are $\tau$-body by construction and  have non-overlapping support  for all $i\in[1,\kappa]$, (hence, can be simultaneously measured).  First, $\chi\left(G^{(2)}\right)$ is the minimal number of classes in any vertex coloring of $G^{(2)}$ (i.e., a vertex partition where no pair of adjacent vertices falls in the same class).  %Equivalently,  $\chi(G^{(2)})$ is the minimal number of independent subsets of $V^{(2)}$ that can form a partition of $V^{(2)}$. 
Further, two vertices $v_1,v_2\in V^{(2)}$ are adjacent in $G^{(2)}$ iff they are neighbors or next-neighbors in $G$. Letting $V=\bigcup_{i=1}^{\chi(G^{(2)})} V_i,$ be a minimal vertex coloring of $G^{(2)}$, we obtain a decomposition of form (\ref{eq:Renormalization}) with $\kappa\leq \chi\left(G^{(2)}\right)$. Last, picking $\alpha :=\max_i |V_i|/N$ we get $\alpha\leq\iota(G^{(2))}/ N$ since $G^{(2)}$ is connected.

The existence of the  $(\kappa4^{\deg(G)},\alpha,1)$ decomposition now follows by expanding  each term $h_i$ in every $H_i$ in (\ref{eq:Renormalization}) in a product basis 
\begin{equation}
 \mathcal{A}_i:=\left\{\bigotimes_{j\in\partial(i)} A(x_{j})_{j}\,\,A(x_{j})\in\mathcal{A}\right\}, 
 \end{equation}
 where $\mathcal{A}=\{A(\mu)\}_{\mu=1}^4$  
 is some Hermitian single-qubit operator basis (e.g., the standard Pauli matrices). By picking a fixed ordering of every set $\partial(j)$,
 this lets us write each $H_i$ as a  sum  of $4^{\tau}$ Hermitian operators  $\{ H_{i,x}, x=(x_1,\ldots,x_{\tau}),x_i=[1,4] \}_x$,  $H_{i,x} := \sum_{j\in V_i} \alpha_{j,x} \bigotimes_{k\in\partial(j) }A(x_{k})_{k}$, the energy distribution of which can be sampled from via on-site $A(x_i)$ measurements.

%\emph{Bound (a)}. We first show the bound holds assuming $G$ to be infinite with open boundary conditions. Let $\{e_1,\ldots,e_D\}\subset \mathbb{R}^D$ be a set of linearly-independent lattice generators, and $\mathcal{L}\cong\mathbb{Z}^D$ their generated group. For any $v \in V$, the $D$-dimensional hypercube $N_v:=\{v+\sum_{i} b_i e_i:b_i\in\{\pm 1\}\}$ contains all lattice unit cells whose closure contains $v$. Since  nearest neighbor pairs must lie within the closure of a unit cell, $N_v$ contains all neighbors of $v$. Similarly, the set $N_v^{(2)}:=\{v+\sum_{i} b_i e_i:b_i\in\{\pm 1,\pm 2\}\}$ contains all nearest- and next-neighbors of $v$, and no vertex pair in  $v+\{\sum_i b_ie_i,b_i\in\{\pm 3\}\}$ can have a common neighbor. By induction,  no vertex pair in  $v+S$, $S:=\langle 3e_1,\ldots,3e_D\rangle$ has a common neighbor. Splitting $G$ into cosets of $S$, we obtain a decomposition of form (\ref{eq:Renormalization}). Because $\mathcal{L}/S\cong \mathbb{Z}_3^D$, $|\mathbb{Z}_3^D|=3^D$, the number of sets in this partition is  $3^D$ times the number of vertices in the primitive cell, which is $|V\cap C|$. Last, if $G$ has closed boundary conditions, we need to take into account that parallel extremal hyperfaces within  $v+S$ can have a common neighbor. By removing one out of each pair of parallel extremal faces and moving all of them to a new boundary set $B_v$, the same approach works while doubling the size of the partition.

Finally, we upper bound  $\chi(G^{(2)})$. Let  $\deg(G)$ denote the maximum vertex degree of  $G$. Since every vertex  $v\in V^{(2)}$ has at most  has $\deg(G)$ neighbors and at most $\deg(G)(\deg(G)-1)$ next-neighbors, it follows that $\deg\left(G^{(2)}\right)\leq \deg(G)^2$. By Brook's theorem\mot\cite{bondy},  $\chi\left(G^{(2)}\right)\leq \deg(G)^2+1$. Last, we use that  $\tau=\deg(G)$, since the $\tau$ is by definition the maximum vertex degree of the interaction graph.
\end{proof}

\begin{proof}[Proof of Lemma \ref{lemma:ArcHamiltonianDecomposition}]

For $H_\textnormal{\ref{enum:Arc1},\ref{enum:Arc2}}$, we first split the Hamiltonian terms in (\ref{app:stabilizer-1and2}) into two groups (``even" and ``odd'') using a bi-coloring  of the $N$-sites square lattice, and set $H=H_{\textnormal{even}}+H_{\textnormal{odd}}$. The terms $\{h_i\}_i$ of  $H_{\textnormal{even}}$ ($H_{\textnormal{odd}}$) are  products of $X_{\tau_i,i}$, $Z_j$ on-site factors. Because we use a 2-coloring, the on-site factor list associated to two distinct $h_i,h_j$ terms contains at most two overlapping on-site pairs of form $(Z_k,Z_k),(Z_k',Z_k')$. Hence, overlapping terms are  identical and can be measured jointly. This allows us to measure $H_{\textnormal{even}}$ ($H_{\textnormal{even}}$) via a parallel measurement of all on-site factors and classical post-processing. This yields a $(\kappa,\alpha,\tau)$ decomposition with $\kappa=2$ and $\tau=1$. Further, the largest component $V_{\max}$ of a square lattice 2-coloring  has $N/2$ vertices for even $N$, and $(N+1)/2$ otherwise. Hence, we can pick  $\alpha=|V_{\max}|/N\leq 1/2(1+1/N)\leq 5/9$, where we use  that the smallest odd value of $N$ is 9. The same approach works leads to a $(2,5/9,2)$ decomposition for  $H_\textnormal{\ref{enum:Arc3}}$,  Eq.\ (\ref{arcIII}) where  $H_{\textnormal{even}}$, $H_{\textnormal{odd}}$ are  sums of products of on-site terms supported on the primitive lattice, and two-body terms acting on dangling-bonds. This leads to a $(32,5/9,1)$ one by expanding the 2-body terms on a local basis, as in the proof of Lemma\mot\ref{lemma:KMeasurability}. Finally, all new Hamiltonians inherit the symmetry of their corresponding parent Hamiltonian by construction.\qedhere
\end{proof}

\subsection{Certification protocol}

Finally, we  describe a quadratic-time weak-membership  certification protocol for ground states of  non-degenerate gapped $G$-local $\tau$-body Hamiltonians with constant $\tau$, and, in general, any $(\kappa,\alpha,\tau)$-measurable Hamiltonians (Def.\mot\ref{def:CABDecomp}). By virtue of Lemma\mot\ref{lemma:ArcHamiltonianDecomposition}, the protocol can be applied to efficiently certify the final state preparation of our proposed quantum architectures\mot\ref{enum:Arc1}-\ref{enum:Arc3}. We describe the protocol for the latter class since we know any $G$-local Hamiltonian is of that form (Lemmas\mot\ref{lemma:KMeasurability},\ref{lemma:ArcHamiltonianDecomposition}). The protocol is simply a parallelized version of the one in Ref.\mot\cite{Hangleiter}.
\begin{definition}[Weak-membership quantum state certification \cite{Aolita-NatComm-2015}]. Let $F_T>0$ be a threshold fidelity and $0<p_\mathrm{err}<1$ be a maximal failure probability. A test which takes as an input a classical description of $\rho_0$ and copies of a preparation of $\rho_p$, and outputs ``reject” or ``accept” is a weak-membership certification test if with high probability $p_\mathrm{succ}\geq 1 -p_\mathrm{err}$ it rejects every $\rho_p$ for which $F(\rho_p,\rho_0) \leq F_T$, and accepts every $\rho_p$ for which $F(\rho_p,\rho_0)\geq F_T+\delta$ for some fidelity gap $\delta>0$.
\end{definition}
\begin{protocol}[Certification of $(\kappa,\alpha,\tau)$-decomposable Hamiltonians]\label{certprotocol} The protocol receives % a state  $\widetilde{\rho_0}$ and 
a description of a non-degenerate gapped Hamiltonian $H$ that admits a  $(\kappa,\alpha,\tau)$-decomposition of form (\ref{eq:RenormalizationJointMeasurability}), which is given to us, and performs the following steps:
\begin{enumerate*}
\item Arthur chooses a threshold fidelity $F_T< 1$, maximal failure probability $1> p_{\mathrm{err}} >0$ and an  error $\epsilon \leq (1-F_T)/2$.
\item Arthur asks Merlin to prepare a sufficient number of copies of the
  ground state $\rho_0$ of $H$. 
\item Arthur performs $m$ energy measurements for each Hamiltonian term $H_i$ on distinct copies of
the state $\rho_p$ 
%prepared by Merlin 
to determine an estimate $E^*$ of the
expectation value $\sum_i \tr[\rho_p H_i]$, with  $m$ given by expression \eqref{measurements}. Each $H_i$ is measured by a single-shot circuit of $\tau$-local observables and classical postprocessing..

\item From the estimate $E^*$ he obtains an estimate $F_{\min}^*$ of 
lower bound $F_{\min}=1-\av{H}_{\rho_p}/\Delta$  \cite{Hangleiter} on the fidelity $F =F(\rho_p, \rho_0)$
such that $F_{\min}^* \in [F_{\min} - \epsilon, F_{\min} + \epsilon]$ with probability at least $1-p_{\mathrm{err}}$. 

\item  If $F_{\min}^* < F_T + \epsilon$ he rejects, otherwise he
accepts. 
\end{enumerate*}
\end{protocol}
\begin{lemma}[Weak-membership certification] \label{certification}
Let $H$ be an $N$-qubit non-degenerate gapped $(\kappa,\alpha,\tau)$-decomposable Hamiltonian with known ground state energy $E_0$, gap $\Delta$,  interaction strength $J  = \max_\lambda \norm{h_\lambda}$, and let $E_0, \Delta^{-1}, J$ be upper bounded by a constant. 
Then, Protocol~\ref{certprotocol} is a weak-membership certification test, in the sense of \cite{Hangleiter},  with fidelity gap
\begin{align} 
\delta =  \left( 1- F_T \right) \left( 1- \frac{\Delta}{\norm{H}} \right) +
\frac{2 \epsilon \Delta}{\norm{H}} 
\, , \label{fidelity gap}
\quad 
\textnormal{and requires}
\quad
  m \geq  \left[\frac{\alpha^2\kappa^2  J^2}{2 \Delta^2\epsilon^2} \ln\left[-\frac{\kappa+1}{\ln(1-p_{\mathrm{err}} )}\right] \right]N^2
\end{align}
repetitions  to determine the expectation value $\av{H}_{\rho_p} =\tr[ H \rho_p]$. 
\end{lemma}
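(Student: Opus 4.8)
The plan is to combine the energy-estimation subroutine (Lemma~\ref{lemma:energy}) with the fidelity witness $F_{\min} = 1 - \av{H}_{\rho_p}/\Delta$ of Ref.~\cite{Hangleiter}, and then verify that Protocol~\ref{certprotocol} satisfies the two defining properties of a weak-membership test (reject every $\rho_p$ with $F \leq F_T$, accept every $\rho_p$ with $F \geq F_T + \delta$) with the claimed fidelity gap and sample complexity. The key observation is that for a non-degenerate gapped Hamiltonian with ground-state energy $E_0$, the true fidelity obeys $F(\rho_p,\rho_0) \geq F_{\min} = 1 - (\av{H}_{\rho_p} - E_0)/\Delta$; since we may shift $H$ so that $E_0 = 0$, this is exactly the bound used in Ref.~\cite{Hangleiter}. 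Thus, certifying closeness reduces to estimating $\av{H}_{\rho_p}$ accurately, which is precisely what Lemma~\ref{lemma:energy} delivers.

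First I would fix the accounting of errors. Running Protocol~\ref{certprotocol} with tolerance $\epsilon_H$ on the energy estimate $E^*$ translates, via $F_{\min}^* = 1 - E^*/\Delta$, into a tolerance $\epsilon = \epsilon_H/\Delta$ on the fidelity-lower-bound estimate $F_{\min}^*$, so that $F_{\min}^* \in [F_{\min} - \epsilon, F_{\min} + \epsilon]$ with probability at least $1 - p_{\mathrm{err}}$. Feeding $\epsilon_H = \Delta \epsilon$ into the sample-complexity bound \eqref{measurements} of Lemma~\ref{lemma:energy} (with $\overline{p_{\mathrm{err}}} = 1 - p_{\mathrm{err}}$, using $\ln(1/\overline{p_{\mathrm{err}}}) = -\ln(1 - p_{\mathrm{err}})$) yields exactly the stated $m \geq \bigl[\tfrac{\alpha^2\kappa^2 J^2}{2\Delta^2\epsilon^2}\ln[-\tfrac{\kappa+1}{\ln(1-p_{\mathrm{err}})}]\bigr]N^2$. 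The hypotheses $E_0, \Delta^{-1}, J \in O(1)$ ensure this is $O(N^2)$, which is the efficiency claim.

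Next I would check the two certification guarantees. For \emph{soundness} (rejecting bad states): if $F \leq F_T$ then, using $F \geq F_{\min}$ and the fact that $\av{H}_{\rho_p} \leq \norm{H}$ forces $F_{\min} \geq 1 - \norm{H}/\Delta$, one relates $F_{\min}$ to $F$ and shows that the acceptance threshold $F_{\min}^* \geq F_T + \epsilon$ cannot be met with high probability. For \emph{completeness} (accepting good states): if $F \geq F_T + \delta$, I would propagate $\delta$ through the inequalities $F_{\min} \geq 1 - (1-F)\norm{H}/\Delta$ (which follows from $\av{H}_{\rho_p} \leq (1-F)\norm{H}$ on the gapped spectrum) to conclude $F_{\min}^* \geq F_T + \epsilon$ with probability at least $1 - p_{\mathrm{err}}$, provided $\delta$ is at least the expression \eqref{fidelity gap}. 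Solving the inequality $F_{\min}(F_T + \delta) \geq F_T + 2\epsilon$ for $\delta$ is what produces the specific form $\delta = (1-F_T)(1 - \Delta/\norm{H}) + 2\epsilon\Delta/\norm{H}$.

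The main obstacle I anticipate is the direction of the fidelity-witness inequalities and the conversion between a bound on $\av{H}_{\rho_p}$ and a bound on the true fidelity $F$. The witness $F_{\min}$ is only a \emph{lower} bound on $F$, so soundness and completeness pull in opposite directions: one must be careful that the gap $\delta$ absorbs both the slack in the witness (governed by the ratio $\Delta/\norm{H}$) and the statistical estimation error $2\epsilon$. Getting the constants in \eqref{fidelity gap} right — in particular the interplay between $(1-F_T)$, $\Delta/\norm{H}$, and $\epsilon$ — is the delicate part; everything else follows routinely from Lemma~\ref{lemma:energy} and the $O(1)$ assumptions on $E_0, \Delta^{-1}, J$. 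Since Lemma~\ref{lemma:ArcHamiltonianDecomposition} guarantees that each architecture's Hamiltonian is $(\kappa,\alpha,\tau)$-decomposable with constant parameters, the result then applies verbatim to certify the pre-measurement states of architectures \ref{enum:Arc1}--\ref{enum:Arc3}.
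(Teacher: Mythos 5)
Your proposal is correct and takes essentially the same approach as the paper: the paper's proof simply states that the argument is identical to Proposition~1 of Ref.~\cite{Hangleiter} with Protocol~\ref{certprotocol} and Lemma~\ref{lemma:energy} substituted for the protocol and energy-estimation lemma used there, which is precisely the argument you reconstruct. Your error accounting (energy tolerance $\Delta\epsilon$, confidence $\overline{p_{\mathrm{err}}} = 1-p_{\mathrm{err}}$) and your derivation of the gap from $F_{\min} \geq 1 - (1-F)\Vert H \Vert/\Delta$ reproduce the stated sample complexity and the expression \eqref{fidelity gap} exactly.
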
 
In combination with Lemma\mot\ref{lemma:ArcHamiltonianDecomposition}, it follows that one can efficiently certify the final state preparation of our architectures \ref{enum:Arc1}-\ref{enum:Arc3}, if the latter are
at least $1/N$ close to the target state in fidelity since $\norm{H} \sim N $ and $\Delta $ is larger than a constant by construction of the parent Hamiltonians $H_{\textnormal{\ref{enum:Arc1}-\ref{enum:Arc3}}}$. 
 \begin{proof}[Proof of Lemma\mot\ref{certification}
\label{s:proof}]
The proof is identical to that of Proposition\mot{}1 in Ref.\ \cite{Hangleiter} if we substitute  Protocol 1 (Lemma 1) therein where our Protocol \ref{certprotocol} (our Lemma\mot\ref{lemma:energy}) in this appendix. We refer the reader to 
Ref.\ \cite{Hangleiter} for details.
\end{proof}

\section{Alternative weaker forms of Conjectures \ref{conj:ConjecturePolyHiear}-\ref{conj:IsingAverageComplexity}}\label{app:Conjectures}

In this appendix, we briefly discuss how our main result Theorem \ref{thm:Main} holds given even weaker versions of Conjectures\mot{}\ref{conj:ConjecturePolyHiear}-\ref{conj:IsingAverageComplexity}. First, in  Conjecture\mot\ref{conj:ConjecturePolyHiear}, it suffices for our purposes that the Polynomial Hierarchy does not collapse to its 3rd level, instead of being infinite. Second,  in Conjecture\mot\ref{conj:IsingAverageComplexity}, we do not need the problem of approximating Ising partition functions to be $\#\textsf{P}$-hard in average. Instead, it is enough that the this problem is not in the complexity class $\textsf{BPP}^\textsf{NP}$, which is contained in the 3rd level of the Polynomial Hierarchy (and would be in the 2nd, if the widely believed conjecture  $\textsf{P}=\textsf{BPP}$ \cite{aaronson2016} holds). Note how
this would be in striking contrast with our hardness result\mot\ref{lemma:SharpPHardness}, since the latter says that an oracle to solve the worst-case version of the same problem would allow us to solve \emph{all} problems in \emph{all} levels of the Polynomial Hierarchy. Third, if this weaker form of Conjecture\mot{}\ref{conj:IsingAverageComplexity} holds, then Conjecture\mot\ref{conj:ConjecturePolyHiear} is obviously not needed in the proof of Theorem\mot\ref{thm:Main}. 
We have also mentioned in the main text that stating Conjecture\mot{}\ref{conj:IsingAverageComplexity} in terms of relative errors---which is the approach followed here and in \cite{AaronsonArkhipov13LinearOptics4,BremnerJozsaShepherd08,Boixo161608.00263,SparseNoisySupremacy}---is somewhat more natural than stating it in terms of additive ones, as in Ref.\mot\cite{Gao17SupremacyIsing}. To illustrate the difference, note that there exist quantum algorithms for approximating (normalized) Ising partition functions up to polynomially small additive errors \cite{DelasCuevas11Q_AlgorithmsClassicalLatticeModels,MatsuoFujiiImoto14_Q_Alg_Partition_Functions}, while the latter are \textsf{\#P}-hard to approximate up to polynomially small and even constant relative ones.

\end{widetext}
\end{document}